\title{Optimal Contextual Pricing and Extensions}
\author{Allen Liu\thanks{Massachusetts Institute of Technology, email: cliu568@gmail.com} \and Renato Paes Leme\thanks{Google Research, email: renatoppl@google.com} \and Jon Schneider\thanks{Google Research, email: jschnei@google.com}}
\date{}
\begin{document}

\maketitle

\begin{abstract}
In the contextual pricing problem a seller repeatedly obtains products described by an adversarially chosen feature vector in $\R^d$ and only observes the purchasing decisions of a buyer with a fixed but unknown linear valuation over the products. The regret measures the difference between the revenue the seller could have obtained knowing the buyer valuation and what can be obtained by the learning algorithm.

We give a poly-time algorithm for contextual pricing with $O(d \log \log T + d \log d)$ regret which matches the $\Omega(d \log \log T)$ lower bound up to the $d \log d$ additive factor. If we replace pricing loss by the symmetric loss, we obtain an algorithm with nearly optimal regret of $O(d \log d)$ matching the $\Omega(d)$ lower bound up to $\log d$. These algorithms are based on a novel technique of bounding the value of the Steiner polynomial of a convex region at various scales. The Steiner polynomial is a degree $d$ polynomial with intrinsic volumes as the coefficients.

We also study a generalized version of contextual search where the hidden linear function over the Euclidean space is replaced by a hidden function $f : \X \rightarrow \Y$ in a certain hypothesis class $\H$. We provide a generic algorithm with $O(d^2)$ regret where $d$ is the covering dimension of this class. This leads in particular to a $\tilde{O}(s^2)$ regret algorithm for linear contextual search if the linear function is guaranteed to be $s$-sparse. Finally we also extend our results to the noisy feedback model, where each round our feedback is flipped with a fixed probability $p < 1/2$.
\end{abstract}

\pagebreak

\section{Introduction}

In the \emph{contextual search problem} a learner tries to learn a hidden linear function $x \in \R^d \mapsto \dot{v}{x}$ for some unknown $v \in \R^d$
. In every round, the learner is presented with an adversarially chosen vector $x_t \in \R^d$, 
and is asked to provide a guess $y_t \in \R$ for the dot-product $\dot{v}{x_t}$, subsequently learning whether $y_t \leq \dot{v}{x_t}$ or $y_t > \dot{v}{x_t}$ and incurring a loss $\ell(y_t, \dot{v}{x_t})$. The goal of the learner is to minimize the total loss (the \textit{regret}), which is given by $\sum_t \ell(y_t, \dot{v}{x_t})$. 

A special case of this problem is \emph{contextual pricing} \cite{kleinberg2003value,amin2014repeated,CohenLL16,nazerzadeh2016,lobel2016multidimensional,qiang2016dynamic,Intrinsic18, krishnamurthy2020corrupted}. In this setup, the vectors $x_t$ are features representing differentiated products and the learner is a seller whose decision at round $t$ is how to price item $x_t$. Given a price $y_t$ a buyer with valuation $u_t = \dot{v}{x_t}$ buys the product if $u_t \geq y_t$ and doesn't buy otherwise. The seller only observes the purchase or no-purchase decision. The loss in each round is the difference between the revenue made by the seller $y_t \cdot \one\{ y_t \leq u_t \}$ and the revenue the seller could have made if $v$ was known. Formally, the pricing loss is given by:
$$\ell(y_t, u_t) = u_t - y_t \cdot \one\{ y_t \leq u_t \}$$

A second important case is called \emph{symmetric contextual search} where the loss is the difference between the guess $y_t$ and the actual dot product $\dot{v}{x_t}$. This loss function arises in \emph{personalized medicine} \cite{bayati2016} where the learner chooses the dosage of a medicine and observes whether the patient was over-dosed or under-dosed. Another application is one-bit compressed sensing \cite{plan2012robust,davenport20141} where the learner only observes the sign of a measurement. In either case, we will consider the following loss:
$$\ell(y_t, u_t) = \abs{ y_t -  u_t}$$

\paragraph{Optimal Regret Bounds for Contextual Search} 
In one dimension, both contextual pricing and symmetric contextual search reduce to non-contextual problems and are well understood. For the symmetric loss the optimal regret in the one-dimensional case is $\Theta(1)$ using binary search. For pricing, the optimal regret in the one-dimensional case is $\Theta(\log \log T)$ using the algorithm of Kleinberg and Leighton \cite{kleinberg2003value}. These results immediately imply $\Omega(d)$ and $\Omega(d \log \log T)$ lower bounds for the general contextual case (see Section \ref{sec:optimality} for details on optimality). 

In this paper, we design polynomial-time algorithms with regret $O(d \log \log T + d\log d)$ for contextual pricing and $O(d \log d)$ for symmetric contextual search, matching these lower bounds up to a $\log d$ factor.  This improves over the previously known bounds in \cite{Intrinsic18}, which are $O(d^4 \log \log T)$ for pricing and $O(d^4)$ for symmetric contextual search.

\paragraph{Steiner polynomial} The main technique driving these results is a new potential function based on the \emph{Steiner polynomial}. This is an object from integral geometry that is closely connected with the notion of \emph{intrinsic volumes} which were used in \cite{Intrinsic18} to derive the previously known bounds for this problem.

Given a convex set $S \subseteq \R^d$, Steiner showed that the volume of the Minkowski sum $\Vol(S + z \B)$ is a polynomial of degree $d$ in $t$ (where $\B$ is the unit ball). The intrinsic volumes $V_j$ of $S$ correspond to the coefficients of this polynomial after normalization by volume $\kappa_{j}$ of the $j$-dimensional ball:
$$\Vol(S + z \B) = \sum_{j=0}^d  V_j(S) \kappa_{d-j} z^{d-j}$$

Both our algorithm and the the algorithm in \cite{Intrinsic18} keep track of the set $S_t$ of vectors consistent with observations seen so far. The intrinsic volumes approach keeps track of $V_j(S_t)$ and shows that the loss incurred in round $t$ is proportional to the decrease of one of the suitably normalized intrinsic volumes (i.e. $V_j(S_j)^{1/j}$ for some index $j \in \{1, \hdots, d\}$). In this paper, instead of keeping track of each coefficient individually, we control the value of the Steiner polynomial itself at different values of $z$. Specifically, we show that for some set of $d$ values $\{z_1, z_2, \dots, z_d\}$ it is possible to always choose an $i$ (based on the current width of our set) so that $\Vol(S + z_i \B)$ decreases by a constant fraction. This leads to nearly optimal bounds on regret (via a much simpler proof than that in \cite{Intrinsic18}).


\paragraph{Framework for learning with binary feedback} 

While the Steiner polynomial technique largely resolves the classical problem of contextual search, there is a wide class of learning problems with binary feedback that either do not fit within the framework of learning a linear function, or which impose additional constraints on the linear function that one would hope to leverage.

One example is \emph{sparse contextual search}, where the hidden vector $v \in \R^d$ is guaranteed to be $s$-sparse, i.e., $\norm{v}_0 \leq s$. This captures settings where we expect few features to matter to the buyer. Another interesting problem is when we are asked to guess $\max_i v_i x_i$ instead of $\dot{v}{x}$. This corresponds to learning the valuation of an \emph{unit-demand buyer}. This problem is challenging since the set of vectors $v \in \R^d$ consistent with observations seen so far is not necessarily convex.

Both of these examples are special cases of a general framework for online learning problems under binary feedback. In our general setup, the learner is trying to learn a function $f$ in a hypothesis class $\H$ containing functions mapping from a context space $\X$ to an outcome space $\Y$. In each step a context $x_t \in \X$ is chosen adversarially and the learner is asked to submit a guess $y_t$ for the value of $f(x_t)$ and incurs a loss $\ell(y_t, f(x_t))$. The goal of the learner is to minimize the total loss $\sum_{t=1}^T \ell(y_t, f(x_t))$. The original setup corresponds to the case where  $\X$ is some subset of $\R^d$ e.g. the unit ball $\B \subseteq \R^d$, $\Y = [-1,1]$, $\H$ is the class of all linear functions $f_v(x) = \dot{v}{x}$ for $v \in \B$.

Our main result in this space is an algorithm with regret $O(d^2)$ where $d$ is the covering dimension of the hypothesis class $\H$ (see Definition \ref{def:covering_dim}). This result immediately improves the regret of symmetric contextual search from $\tilde{O}(d)$ to $\tilde{O}(s^2)$ \footnote{We write $\tilde{O}$ to suppress any logarithmic factors in $d$ or $s$.}. Similarly, this result immediately implies an $O(d^2)$ regret algorithm for the unit-demand buyer problem. We accomplish this by generalizing the Steiner polynomial idea for linear contextual search to a general ``Steiner potential'' defined for any hypothesis class (see the Techniques subsection below). 

We contrast these results in Section \ref{sec:full_feedback} with the full feedback case in which the algorithm learns $f(x_t)$ each round. In this full-feedback setting, we give matching upper and lower bounds (up to constant factors) on the achievable regret. Our results here are based on a notion we introduce of \textit{tree-dimension} of a hypothesis class, which is a continuous analogue of Littlestone dimension.

\paragraph{Techniques in the general case} The Steiner polynomial is defined for convex sets living in the Euclidean space.
Intriguingly, it is possible to generalize (in some sense) this geometric technique to arbitrary classes $\H$ of hypotheses. Instead of keeping track of all functions in the hypothesis class that are consistent with the feedback so far, we keep track of an expanded set of functions that don't violate the feedback up to a certain margin (in the linear case, this is exactly $S + \lambda\B$). This has the effect of regularizing the set of consistent hypotheses and allows for faster progress. Instead of volume, in the general case we control the size of an $\epsilon$-net of the set of these approximately valid hypotheses. 

A second technique we use is \emph{adaptive scaling}, which involves keeping track of multiple levels of discretization. For the linear case, this boils down to controlling the value of the Steiner polynomial at different values of $\lambda$. More generally, at each step, we can estimate the maximum possible loss achievable in this round given the previous feedback. Based on this value, we will choose a scale, which will dictate the granularity of the $\epsilon$-net and the margin with which we prune inconsistent hypotheses. After picking the scale we show that it is possible to pick a (random) cut that will either: (i) reduce the number of valid hypotheses in the chosen granularity by half; or (ii) eliminate one valid hypothesis at a much coarser granularity. This will require a careful coupling between the discretizations at two different levels. This coupling between two levels is what allows us to overcome the fact that in the general case we can't rely on techniques from convex geometry. See Section \ref{sec:intuition_const_T} for details.

One important feature of all our algorithms (not shared by previous algorithms) will be our use of randomness, in particular \emph{perturbed guesses}. Every round, we compute the median $m_t$ of the set $f(x_t)$ where $f$ ranges over the set of approximately valid hypotheses. However, instead of guessing the median $m_t$ directly we guess one of the two values (chosen uniformly at random) in $\{m_t - \delta, m_t + \delta\}$, where the size of perturbation $\delta$ depends on our current scale $\lambda$. Our guarantee is that the potential function will decrease significantly for one of the two choices (and thus in expectation).

\paragraph{Noisy Contextual Search} The final direction in which we extend the original contextual search problem is by considering noisy binary feedback, i.e., the feedback of the algorithm is flipped with probability $p < 1/2$.  In this setting we move from keeping track of a set of approximately valid hypotheses to a pseudo-Bayesian approach, where we maintain a distribution $w$ over approximately valid hypotheses and update it as we receive feedback. By carefully bounding the weight of hypotheses within a ball of radius $1/T$, this results in an algorithm with regret $O(d\log T)$. 


Ideally, it would be possible to combine this algorithm with the adaptive scaling technique of the noiseless setting, resulting in an $O(\poly(d))$ regret algorithm for general hypothesis classes. One such approach is to replace the notion of width with a fuzzier notion, based on how tightly concentrated the distribution is along the current context vector (e.g. the width of the smallest strip in this direction which contains $1-\eps$ of the mass of the distribution). We can then choose a scale based on this distributional width, and choose the size of the perturbation based on this scale (as in the deterministic case). 

This type of approach works, conditional on being able to show that when the distribution concentrates along a thin strip, the true hypothesis is close to this thin strip with high probability. Unfortunately, doing this for general hypothesis classes seems hard -- fortunately, it is possible to do this for the specific case of symmetric contextual search by leveraging the Euclidean geometry of the ambient space (see Section \ref{subsec:polyd_noisy} for more details). This leads to an algorithm for noisy linear contextual search which gets $O(\poly(d))$ regret, and is the first algorithm we are aware of for contextual search in the noisy setting which gets any regret independent of $T$ for $d > 1$.

\paragraph{Summary of main results} To summarize, our results include:

\begin{itemize}
    \item Algorithms with regret $O(d \log d)$ for symmetric contextual search (Section \ref{subsec:dlogd}) and $O(d \log \log T + d \log d)$ for contextual pricing (Section \ref{subsec:dloglogt}). Both algorithms are optimal (up to $\log d$) and have only $O(d)$ overhead with respect to the non-contextual case. Both algorithms can be implemented efficiently in $\poly(d, T)$ time.
    \item General algorithm for learning a function from a hypothesis class $\H$ under binary feedback with regret $O(d^2)$ where $d$ is the covering dimension of the hypothesis class (Section \ref{subsec:d_square_noiseless}).
    \item An algorithm for symmetric contextual search with noisy binary feedback with $O(\poly(d))$ regret (Section \ref{subsec:polyd_noisy}).
\end{itemize}



\paragraph{Related work}

Core to our results is the idea of coupling together potentials at many different scales. Similar ideas of ``adaptive discretization'', ``zooming'', and ``chaining'' exist throughout the online learning literature \cite{bubeck2011x, kleinberg2019bandits, slivkins2014contextual} and statistical learning theory literature \cite{gaillard2015chaining,cesa2017algorithmic}. Algorithms in these works also often construct several layers of discretizations and have learning rates parameterized by the covering dimension of the ambient space. However, these algorithms are usually designed for settings where (1) one cannot hope for better than $O(\sqrt{T})$ regret (let alone regret independent of $T$), (2) feedback is not binary but rather zeroth-order (\cite{mao2018contextual} study a pricing setting where feedback is binary, but where the hypothesis class is large enough that one must incur $O(\poly(T))$ regret). In particular, we believe our technique of coupling together the potentials for different scales in the analysis of Theorem \ref{main_binaryfeedback} is novel.

Our results in the full feedback case (Section \ref{sec:full_feedback}) -- parameterizing the optimal regret in terms of the tree dimension -- can be seen as a generalization of similar results for Littlestone dimension \cite{littlestone1988learning} (indeed, in the case where $\Y = \{0, 1\}$, our notion of tree dimension reduces to Littlestone dimension). While there do exist measures which capture the learnability of functions taking values over a metric space (for example, the fat-shattering dimension \cite{bartlett1996fat} for real-valued functions), as far as we are aware the notion of tree dimension we introduce does not currently exist in the literature. It is an interesting open direction to connect the notion of tree dimension we present with previously studied measures. 

\section{Optimal Contextual Search}

We start by describing the contextual search setup and establishing some useful notation. The hidden object is a vector $v_0$ belonging to the unit ball $\B = \{v \in \R^d; \norm{v}_2 \leq 1\}$. In each round $t \in \{1, \dots, T\}$ the learner is provided an (adversarially chosen) vector $x_t \in \B$ and asked to provide a guess $y_t \in \R$ \footnote{Setting the domain for $v$ and $x_t$ to be the unit ball will simplify the analysis.  For the optimality construction, we will instead assume that $v$ is drawn from the $L^{\infty}$ ball and that $x_t$ is drawn from the $L^1$ ball.  It will be straight-forward to extend our analysis to this case.  See Section \ref{sec:optimality} for details. }.  Upon guessing, the learner incurs loss $\ell(y_t, \dot{v_0}{x_t})$ and receives feedback $\sigma_t \in \{-1, +1\}$ corresponding to whether  $y_t > \dot{v_0}{x_t}$ ($\sigma_t = +1$) or $y_t < \dot{v_0}{x_t}$ ($\sigma_t = -1$). If $y_t = \dot{v_0}{x_t}$, then the feedback is arbitrary. In other words:
$$\sigma_t = \sign(y_t - \dot{v_0}{x_t})$$
This allows the learner to keep track of the set $S_t$ of vectors consistent with observations seen so far:
$$S_t := \{v \in \B; \sigma_{\tau}(y_\tau - \dot{v}{x_\tau}{}) \geq 0 \text{ for all } \tau < t\}$$
It is clear from the above setup that for both pricing and symmetric loss, it suffices to consider when $||x_t||_2 = 1$ for all rounds.

Throughout the execution of the algorithm we will keep track of the Steiner potential $\Vol(S_t + z\B)$ where  $\Vol(\cdot)$ is the standard volume in $\R^d$ and the sum is the Minkowski sum:
$$S_t + z\B = \{w + z\cdot u; w \in S_t \text{ and } u \in \B\} $$

We will evaluate the potential at different points depending on the width of $S_t$ in the direction $x_t$. We define the width as:

$$\width(S_t; x_t) = \max_{v \in S_t} \dot{v}{x_t} - \min_{v \in S_t} \dot{v}{x_t}$$

\subsection{Symmetric loss with $O(d \log d)$ regret}\label{subsec:dlogd}

We start with the symmetric loss function $\ell(y_t, u_t) = \abs{y_t - u_t}$, where we will show it is possible to obtain $O(d\log d)$ regret. The main idea of this algorithm is to choose a value $z_i$ based on the width of $S_t$ in the direction of the current context and then choose a guess $y_t$ that splits the set $S_t + z_i \B$ in two parts of equal volume. By doing this, we will show that $\Vol(S_t + z_i \B)$ (the ``Steiner potential'') decreases by a constant multiplicative fraction. Since $\Vol(S_t + z_i \B)$ is bounded below by $\Vol(z_i\B)$, we can only do this some number of times (roughly $d\log (1/z_i)$ times), from which our regret bound will follows.

We describe the algorithm below (ignore for now issues of computational efficiency; we will address these in Section \ref{sec:efficiency}):

\begin{algorithm}[H]
\caption{{\sc Multiscale Steiner Potential}}
\label{alg:linear_symmetric}
\begin{algorithmic} 
\State Initialize $S_1 = \B$ and $z_i = 2^{-i} / (8d)$  for all $i$ 
\For {$t$ in $1,2, \dots , T$}
\State Adversary picks $x_t$
\State Let $i$ be the largest index such that $\width(S_t; x_t) \leq 2^{-i}$
\State Query $y_t$ such that $\Vol(\{v \in S_t + z_i \B; \dot{v}{x_t} \geq y_t\}) = \frac{1}{2} \Vol(S_t + z_i \B)$
\EndFor

\end{algorithmic}
\end{algorithm}

\begin{lemma}\label{lemma:volume_progress_symmetric}
If $i$ is the index chosen at time $t$ then $\Vol(S_{t+1} + z_i \B) \leq \frac{3}{4} \Vol(S_t + z_i \B)$.
\end{lemma}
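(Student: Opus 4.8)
The plan is to localize the effect of round $t$'s update to a thin slab of $S_t+z_i\B$ and bound that slab by a Brunn--Minkowski argument, exploiting that $z_i$ is small relative to $\width(S_t;x_t)$.

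First I would reduce to a slab. After round $t$, $S_{t+1}=S_t\cap H_t$ where $H_t$ is the side of the hyperplane $\{v:\dot{v}{x_t}=y_t\}$ containing $v_0$; by symmetry assume $H_t=\{v:\dot{v}{x_t}\le y_t\}$. Since $\|x_t\|_2=1$, every $v\in S_{t+1}$ and $u\in\B$ satisfy $\dot{v+z_i u}{x_t}\le y_t+z_i$, so with $K:=S_t+z_i\B$,
$$ S_{t+1}+z_i\B\ \subseteq\ K\cap\{v:\dot{v}{x_t}\le y_t+z_i\}. $$
Splitting the right-hand side along $\{\dot{v}{x_t}=y_t\}$ and using that $y_t$ was chosen so that $\{v\in K:\dot{v}{x_t}\ge y_t\}$ has volume exactly $\tfrac12\Vol(K)$,
$$ \Vol(S_{t+1}+z_i\B)\ \le\ \tfrac12\Vol(K)+\Vol\!\big(K\cap\{v:y_t<\dot{v}{x_t}\le y_t+z_i\}\big), $$
so it suffices to bound the slab on the right by $\tfrac14\Vol(K)$.

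Next I would observe the slab is a thin slice of $K$. Its width in direction $x_t$ is exactly $z_i$, while $\width(K;x_t)=\width(S_t;x_t)+2z_i$ (widths add under Minkowski sums, and the unit ball has width $2$). Here it is essential that $i$ is the \emph{largest} index with $\width(S_t;x_t)\le 2^{-i}$: this forces $2^{-i}<2\,\width(S_t;x_t)$ (otherwise $i+1$ would also qualify), hence $z_i=2^{-i}/(8d)<\width(S_t;x_t)/(4d)$, and the slab occupies a fraction $\rho:=z_i/\width(K;x_t)\le z_i/\width(S_t;x_t)<1/(4d)$ of the width of $K$. (If $\width(S_t;x_t)=0$ any consistent guess already incurs zero loss, so assume it is positive.)

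Finally, let $f(s)=\Vol_{d-1}(K\cap\{v:\dot{v}{x_t}=s\})$, a function supported on an interval of length $W=\width(K;x_t)$ with $f^{1/(d-1)}$ concave by Brunn--Minkowski. I would then invoke the one-dimensional fact that, for such a density, any sub-interval of relative length $\rho$ carries at most a $\big(1-(1-\rho)^d\big)$-fraction of $\int f=\Vol(K)$; together with Bernoulli's inequality $(1-\tfrac1{4d})^d\ge 1-d\cdot\tfrac1{4d}=\tfrac34$ this yields
$$ \Vol\!\big(K\cap\{v:y_t<\dot{v}{x_t}\le y_t+z_i\}\big)\ \le\ \big(1-(1-\rho)^d\big)\Vol(K)\ <\ \Big(1-\big(1-\tfrac1{4d}\big)^d\Big)\Vol(K)\ \le\ \tfrac14\Vol(K), $$
and combined with the reduction above this gives $\Vol(S_{t+1}+z_i\B)\le\tfrac34\Vol(S_t+z_i\B)$. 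The only real content is that one-dimensional slab inequality: it is the sole place convexity of $S_t$ enters (through Brunn--Minkowski), and it is exactly what dictates the $1/(4d)$ gap between $z_i$ and $\width(S_t;x_t)$ needed to make the constants fit. I would prove it by replacing $f^{1/(d-1)}$ with its secant lines outside the slab — which leaves the slab's integral unchanged and does not increase $\int f$, so the ratio only grows — thereby reducing to the extremal affine (``cone'') profile, which is checked by a direct integral computation.
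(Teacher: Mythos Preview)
Your proof is correct and shares the paper's overall decomposition: both contain $S_{t+1}+z_i\B$ in the half of $K=S_t+z_i\B$ on one side of $y_t$ (volume $\tfrac12\Vol(K)$) plus a slab of width $z_i$, and both show the slab has volume at most $\tfrac14\Vol(K)$. The difference is in how the slab is bounded. The paper's argument is more elementary: if $C$ is the maximal $(d-1)$-section of $K$ in direction $x_t$, the slab has volume at most $Cz_i$, while comparing $K$ to the two cones with that section as base and apexes at the extreme points of $K$ gives $\Vol(K)\ge \width(S_t;x_t)\cdot C/d>2^{-(i+1)}C/d=4z_iC$, so the slab is at most $\tfrac14\Vol(K)$. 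This uses only convexity of $K$, not Brunn--Minkowski. Your route through the inequality $\int_I f\le(1-(1-\rho)^d)\int f$ is sharper as an intermediate step, but you immediately relax it via Bernoulli to $d\rho$, which is exactly the paper's bound---so the extra machinery buys nothing here. One caution on your sketch of that inequality: the secant replacement yields a \emph{three-piece} affine $f^{1/(d-1)}$, not a single cone, so ``reducing to the cone profile'' needs one more step (after driving the endpoint values to zero you get a tent, and the extremality of the tent follows from Jensen for $t\mapsto t^d$); this is easy to fill in, but the paper's max-section argument sidesteps it entirely.
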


\begin{proof}
Assume the feedback is $\sigma_t = +1$ (the other case is analogous). Then $S_{t+1} = \{v \in S_t; \dot{v}{x_t} \geq y_t \}$. In Figure  \ref{fig:dlogd} we depict the set $S_t + z_i \B$ and $S_{t+1} + z_i \B$. The part of $S_{t+1} + z_i \B$ with $\dot{v}{x_t} \geq y_t$ is exactly $\{v \in S_t + z_i \B; \dot{v}{x_t} \geq y_t\}$ which has volume $\frac{1}{2} \Vol(S_t + z_i \B)$.

To bound the remaining part, let $C$ be the largest volume of a section of $S_t + z_i \B$ in the direction $x_t$ (see the right part of Figure \ref{fig:dlogd}). The total volume of  $S_t + z_i \B$ can be bounded below by comparing it to the two cones formed by taking the convex hull of the section of largest volume inside the band and the extreme points $q_1$ and $q_2$, which are at least $2^{-(i+1)}$ apart in the $x_t$ direction. The volume of the two cones is at least:
$$\Vol(S_t + z_i \B) \geq \frac{2^{-(i+1)} C}{d}  =\frac{C 2^{-i}}{2d} = \frac{C \cdot 8 z_i d}{2d} = 4  z_i \cdot C $$ 

Finally, note that the region of $S_{t+1} + z_i \B$ with $\dot{v}{x_t} \leq y_t$ has cross-section with volume at most $C$ and width $z_i$ in the $x_t$ direction so its volume is at most $C z_i \leq \frac{1}{4} \Vol(S_t + z_i \B) $, thus completing the proof.\end{proof}

\begin{theorem}\label{thm:symloss}
The  regret of the {\sc Multiscale Steiner Potential} algorithm is at most $O(d \log d)$.
\end{theorem}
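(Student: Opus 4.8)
The plan is to derive the regret bound from Lemma~\ref{lemma:volume_progress_symmetric} by combining a per-round bound on the loss in terms of the selected scale $z_i$ with a counting bound on how many rounds can select each scale, and then summing a geometric series over the scales.

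First I would bound the loss in a round where index $i$ is chosen. Projecting $S_t + z_i\B$ onto the direction $x_t$ (recall $\norm{x_t}_2 = 1$) gives the interval $[\min_{v\in S_t}\dot{v}{x_t} - z_i,\ \max_{v\in S_t}\dot{v}{x_t} + z_i]$, and $y_t$ lies in this interval since it is the point halving $\Vol(S_t + z_i\B)$ in this direction. On the other hand $u_t = \dot{v_0}{x_t}$ lies in the sub-interval $[\min_{v\in S_t}\dot{v}{x_t},\ \max_{v\in S_t}\dot{v}{x_t}]$ because $v_0 \in S_t$. Hence $\ell(y_t,u_t) = \abs{y_t - u_t} \le \width(S_t;x_t) + 2z_i \le (1 + \tfrac{1}{4d})\,2^{-i} \le 2\cdot 2^{-i}$, using the choice of $i$ (so $\width(S_t;x_t)\le 2^{-i}$) and $z_i = 2^{-i}/(8d)$. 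If $\width(S_t;x_t)=0$ the value $\dot{v_0}{x_t}$ is already determined and the round incurs no loss, so I may assume the width is positive and the chosen index finite; also $\norm{x_t}_2 = 1$ and $S_t\subseteq\B$ force $\width(S_t;x_t)\le 2$, so the chosen index is always at least $-1$.

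Next I would bound the number $N_i$ of rounds that select index $i$. Since each round only adds constraints, $S_{t+1}\subseteq S_t$, hence $S_{t+1}+z_i\B \subseteq S_t + z_i\B$ and $\Vol(S_t + z_i\B)$ is nonincreasing in $t$ for every fixed $i$; by Lemma~\ref{lemma:volume_progress_symmetric} it drops by a factor $3/4$ in each of the $N_i$ rounds that select $i$. Since $v_0\in S_t$, the set $S_t$ is nonempty, so $\Vol(S_t+z_i\B)\ge\Vol(z_i\B)=\kappa_d z_i^d$, while $\Vol(S_1+z_i\B)=\Vol((1+z_i)\B)=\kappa_d(1+z_i)^d\le\kappa_d 2^d$. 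Therefore $(3/4)^{N_i}\kappa_d 2^d\ge\kappa_d z_i^d$, which gives $N_i = O\!\big(d\log(2/z_i)\big) = O\!\big(d(i+\log d)\big)$.

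Finally, summing over scales,
$$\sum_{t=1}^{T}\ell(y_t,u_t)\ \le\ \sum_{i\ge -1} N_i\cdot 2\cdot 2^{-i}\ =\ O(d)\sum_{i\ge -1}(i+\log d)\,2^{-i}\ =\ O(d\log d),$$
since $\sum_{i\ge -1} i\,2^{-i}$ and $\sum_{i\ge -1}2^{-i}$ are absolutely convergent constants. I do not expect a serious obstacle here — Lemma~\ref{lemma:volume_progress_symmetric} does the geometric heavy lifting and the rest is bookkeeping. The two points requiring care are: observing that for each fixed $i$ the potential $\Vol(\,\cdot\,+z_i\B)$ is monotone over \emph{all} rounds (so the factor-$3/4$ drops compound even though intervening rounds select other scales); and checking that the sum over $i$ telescopes to $O(\log d)$ rather than something larger, which is exactly what the $1/(8d)$ normalization in $z_i$ is engineered to deliver.
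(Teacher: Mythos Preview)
Your proof is correct and follows essentially the same approach as the paper: bound the per-round loss by $O(2^{-i})$ when scale $i$ is selected, use Lemma~\ref{lemma:volume_progress_symmetric} together with the lower bound $\Vol(S_t+z_i\B)\ge z_i^d\Vol(\B)$ to cap the number of scale-$i$ rounds at $O(d\log(1/z_i))=O(d(i+\log d))$, and sum over $i$. You are in fact a bit more careful than the paper on two points (explicitly noting that $\Vol(\,\cdot\,+z_i\B)$ is monotone across \emph{all} rounds so the $3/4$ drops compound, and accounting for the extra $z_i$ slack in the loss from $y_t$ lying in the projection of $S_t+z_i\B$ rather than of $S_t$), but the argument is otherwise identical.
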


\begin{proof}
Every time we choose index $i$, the loss at is most $2^{-i}$ and the volume of $\Vol(S_{t} + z_i \B)$ decreases by a constant factor. The set $S_t$ is never empty since $v_0 \in S_t$ for all $t$, therefore $\Vol(S_{t} + z_i \B) \geq \Vol(z_i \B) = z_i^d \Vol(B)$. For this reason we can't pick index $i$ by more than $O(d \log(1/z_i))$ times, so the total regret is at most:
$$O \left(\sum_i 2^{-i} d \log(1/z_i)\right) = O \left(\sum_i d 2^{-i} (i + \log(d)) \right) = O(d \log d)$$
\end{proof}

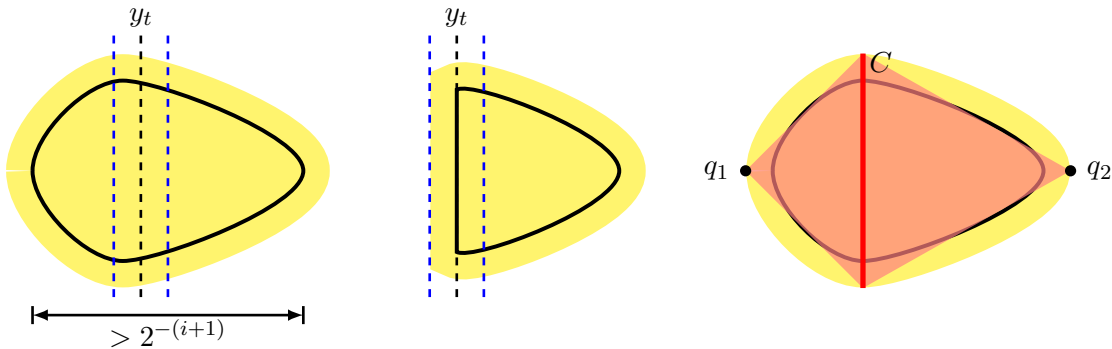
\begin{figure}[h]
\centering
\begin{tikzpicture}[scale=1.2]
  \newcommand{\pathA}{(0,1) .. controls (0,1.4) and (.6, 2) .. (1,2)
              .. controls (1.4,2) and (3,1.4) .. (3,1)
              .. controls (3,0.6) and (1.4,0) .. (1,0)
              .. controls (.6,0) and (0,.6) .. (0,1)}
  \fill[blue!0!white] (-.1,-.4) rectangle (3.4,2.6);
  \draw[line width=20pt, color=yellow!70!white] \pathA;
  \draw[line width=1.5pt, fill=yellow!70!white] \pathA;
 \draw[line width=1pt, dashed] (1.2,-.4)--(1.2,2.5);
 \draw[line width=1pt, dashed, color=blue] (0.9,-.4)--(0.9,2.5);
 \draw[line width=1pt, dashed, color=blue] (1.5,-.4)--(1.5,2.5);
 \node at (1.2, 2.7) {$y_t$};
 \node at (1.5, -.8) {$>2^{-(i+1)}$};

 \begin{scope}[line width=1.0pt]
 \begin{scope}[>=latex]
 \draw[<->] (0,-.6)--(3,-.6);
 \end{scope}
 \draw (0,-.5)--(0,-.7);
 \draw (3,-.5)--(3,-.7);
 \end{scope}

 \begin{scope}[xshift=3.5cm]

  \newcommand{\pathC}{(1.2,1.9) .. controls (1.4,2) and (3,1.4) .. (3,1)
              .. controls (3,0.6) and (1.4,0) .. (1.2,0.1) -- cycle}
  \fill[blue!0!white] (-.1,-.4) rectangle (3.4,2.6);
  \draw[line width=20pt, color=yellow!70!white] \pathC;
  \draw[line width=1.5pt, fill=yellow!70!white] \pathC;
 \draw[line width=1pt, dashed] (1.2,-.4)--(1.2,2.5);
 \draw[line width=1pt, dashed, color=blue] (0.9,-.4)--(0.9,2.5);
 \draw[line width=1pt, dashed, color=blue] (1.5,-.4)--(1.5,2.5);
 \node at (1.2, 2.7) {$y_t$};
 \end{scope}
 \begin{scope}[xshift=8.2cm]
  \renewcommand{\pathA}{(0,1) .. controls (0,1.4) and (.6, 2) .. (1,2)
              .. controls (1.4,2) and (3,1.4) .. (3,1)
              .. controls (3,0.6) and (1.4,0) .. (1,0)
              .. controls (.6,0) and (0,.6) .. (0,1)}
  \draw[line width=20pt, color=yellow!70!white] \pathA;
  \draw[line width=1.5pt, fill=yellow!70!white] \pathA;
  \fill[line width=0pt, fill=red!50!white, fill opacity=0.7] (-.3,1) -- (1,2.3) -- (3.3,1) -- (1,-.3) -- cycle;
  
 \node [shape=circle, fill=black,inner sep=1.5pt,label=left:$q_1$] (0,1) at (-.3,1) {};
 \node [shape=circle, fill=black,inner sep=1.5pt,label=right:$q_2$] (3,1) at (3.3,1) {};
 \node at (1.2, 2.2) {$C$};

 \draw[line width=2pt, color=red] (1,2.3)--(1,-.3);
\end{scope}
\end{tikzpicture}
\caption{Illustration of the proof of Lemma \ref{lemma:volume_progress_symmetric}}
\label{fig:dlogd}
\end{figure}

\subsubsection{Comparison with other approaches}
Is the Steiner potential necessary? One natural algorithm for this problem is to query $y_t$ such that $\Vol(\{v \in S_t; \dot{v}{x_t} \geq y_t\}) = \frac{1}{2} \Vol(S_t)$ (i.e. guess the median without inflating the set). The best upper bound from \cite{Intrinsic18} shows only that this has regret at most $2^{O(d \log d)}$ and a lower bound given in the example in Section 8 of \cite{lobel2016multidimensional} shows that this algorithm has regret at least $\Omega(d^2)$.
Inflating the set by taking the Minkowski sum with a ball seems to be the appropriate regularization that allows us to overcome the $d^2$ lower bound.

Another natural algorithm is to guess $y_t = \frac{1}{2}\left(\min_{v \in S_t} \dot{v}{x_t} +  \max_{v \in S_t}\dot{v}{x_t} \right)$. This algorithm was shown to have regret at least $2^{\Omega(d)}$ in \cite{CohenLL16}.

\subsection{Pricing loss with $O(d \log \log T +  d\log d)$ regret}\label{subsec:dloglogt}
We now study the pricing loss $\ell(y_t, u_t) = u_t - y_t \cdot \one\{y_t \leq u_t \}$. Unlike the previous case the loss function is discontinuous. While the loss when under-estimating $u_t$ is small, the loss when over-estimating is very large. In the one-dimensional setting, \cite{kleinberg2003value} obtains a $O(\log \log T)$-regret algorithm by a conservative variant of binary search that avoids over-estimating the actual value as much as possible.

As before, our algorithm will keep track of $\Vol(S_t + z_i \B)$ for different values $z_i$. This time, however, we will guess more conservatively so that in the case of a no-purchase event, the potential will decrease by a large amount. We will do this in a way that each $z_i$ can lead to a no-purchase event approximately $O(d)$ times.

\begin{algorithm}[H]
\caption{{\sc Multiscale Steiner Potential for Pricing} }
\label{alg:linear_pricing}
\begin{algorithmic} 
\State Initialize $S_1 = \B$ and let $z_i = 2^{-3 \cdot 2^{i}} / (16d)$  for all $i$ 
\For {$t$ in $1,2, \dots , T$}
\State Adversary picks $x_t$
\If {$\width(S_t; x_t) \leq 1/T$} 
\State Query $y_t = \min_{v \in S_t} \dot{v}{x_t}$.
\Else
\State Let $i$ be the largest index such that $\width(S_t; x_t) \leq 2^{-2^i}$
\State Let $m_t$ such that $\Vol(\{v \in S_t + z_i \B; \dot{v}{x_t} \leq m_t\}) = 2^{-2^{i-1}} \Vol(S_t + z_i \B)$
\State Query $y_t = m_t - z_i$
\EndIf
\EndFor
\end{algorithmic}
\end{algorithm}

\begin{lemma}\label{lem:pricing_loss_over}
If $\width(S_t; x_t) > 1/T$ and $y_t > \dot{v_0}{x_t}$ (resulting in a no-purchase event) then $$\Vol(S_{t+1} + z_i \B) \leq 2^{-2^{i-1}} \Vol(S_t + z_i \B)$$
\end{lemma}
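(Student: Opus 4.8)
The plan is to directly track how the Minkowski-inflated consistency set shrinks when a no-purchase event occurs. Recall that in this branch the algorithm sets $y_t = m_t - z_i$ where $m_t$ is chosen so that the portion of $S_t + z_i\B$ lying below $m_t$ (in the $x_t$-direction) has volume exactly $2^{-2^{i-1}}\Vol(S_t + z_i\B)$. A no-purchase event means $y_t > \dot{v_0}{x_t}$, so the feedback is $\sigma_t = +1$ and $S_{t+1} = \{v \in S_t : \dot{v}{x_t} \geq y_t\}$. The goal is thus to bound $\Vol(S_{t+1} + z_i\B)$ above by $2^{-2^{i-1}}\Vol(S_t + z_i\B)$.

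The key observation is a containment statement: $S_{t+1} + z_i\B \subseteq \{v \in S_t + z_i\B : \dot{v}{x_t} \geq m_t\}$, together with a complementary argument. More carefully, I would split $S_{t+1} + z_i\B$ by the hyperplane $\dot{v}{x_t} = m_t$. For the upper part $\{v \in S_{t+1}+z_i\B : \dot{v}{x_t} \geq m_t\}$: since $S_{t+1} \subseteq S_t$ we have $S_{t+1} + z_i\B \subseteq S_t + z_i\B$, so this part is contained in $\{v \in S_t + z_i\B : \dot{v}{x_t} \geq m_t\}$, which has volume $(1 - 2^{-2^{i-1}})\Vol(S_t+z_i\B)$. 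Wait — that alone is not good enough; the bound we want is $2^{-2^{i-1}}$, not $1 - 2^{-2^{i-1}}$. So the argument must be the other way around. I would instead show $S_{t+1} + z_i\B \subseteq \{v : \dot{v}{x_t} \leq m_t\} \cup (\text{something small})$. The right framing: every point $v \in S_{t+1}$ satisfies $\dot{v}{x_t} \geq y_t = m_t - z_i$, so every point of $S_{t+1} + z_i\B$ has $\dot{v}{x_t} \geq m_t - 2z_i$. That is the wrong direction too. The actual mechanism must be: $S_{t+1} + z_i \B \subseteq (S_t + z_i\B) \setminus (\text{large region that got cut})$. Since $S_{t+1} = S_t \cap \{\dot{v}{x_t}\geq y_t\}$, a point $w \in S_t+z_i\B$ with $\dot{w}{x_t} < m_t - 2z_i = y_t - z_i$ cannot be in $S_{t+1}+z_i\B$ (its ball neighborhood of radius $z_i$ stays strictly below $y_t$). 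Hence $S_{t+1} + z_i\B \subseteq \{v \in S_t + z_i\B : \dot{v}{x_t} \geq y_t - z_i\} = \{v \in S_t + z_i\B : \dot{v}{x_t} \geq m_t - 2z_i\}$. This removes a slab of width $\geq m_t - (m_t - 2z_i) \cdot(\ldots)$ — again a subtractive, not multiplicative, gain. So the multiplicative factor must come from comparing against the \emph{bottom} slice, not the top.

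Reconsidering: the clean statement is that the region of $S_t + z_i\B$ that \emph{survives} into $S_{t+1}+z_i\B$ is contained in the bottom slice $\{v \in S_t+z_i\B : \dot{v}{x_t} \leq m_t\}$ enlarged appropriately — but this can't be, since $S_{t+1}$ keeps the \emph{top}. The resolution, which I would verify against Figure~\ref{fig:dlogd}'s analogue, is the following geometric fact: because $v_0 \in S_t$ and $y_t > \dot{v_0}{x_t}$, we have $\min_{v \in S_t}\dot{v}{x_t} < y_t = m_t - z_i$, so $m_t$ lies strictly inside the range of $\dot{v}{x_t}$ over $S_t$. The key claim to prove is: $\{v \in S_{t+1} + z_i\B : \dot{v}{x_t} \leq m_t\} \subseteq \{v \in S_t : \dot{v}{x_t} \geq y_t\} + z_i\B$ has volume at most the volume of the bottom slice of $S_t + z_i\B$ up to level $m_t$, because any such point is within distance $z_i$ of a point $v' \in S_{t+1} \subseteq S_t$ with $\dot{v'}{x_t} \in [y_t, m_t+z_i] = [m_t - z_i, m_t + z_i]$, and meanwhile the part above $m_t$: for $v \in S_{t+1}+z_i\B$ with $\dot{v}{x_t} > m_t$, it lies over a point $v' \in S_t$ with $\dot{v'}{x_t} > m_t - z_i = y_t$. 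Hmm, this still doesn't separate cleanly. I would therefore fall back on the comparison-to-cones argument exactly as in Lemma~\ref{lemma:volume_progress_symmetric}: let $C$ be the maximal cross-sectional volume of $S_t+z_i\B$ in direction $x_t$; since $\width(S_t;x_t) > 1/T$ and the scale $z_i$ is tiny (the $16d$ and $2^{-3\cdot 2^i}$ factors are tuned for this), the full volume $\Vol(S_t+z_i\B)$ is at least $\frac{\width(S_t;x_t)\cdot C}{d}$, while the surviving region $S_{t+1}+z_i\B$, which sits between a hyperplane and the top, has volume at most $2^{-2^{i-1}}\Vol(S_t+z_i\B) + O(z_i C)$ from the bottom-slice bound plus a thin-slab error, and the error term is absorbed because $z_i \ll \width/d \leq 2^{-2^i}/d$ forces $z_i C$ to be a negligible fraction of $\Vol(S_t+z_i\B)$.

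The main obstacle — and the step I expect to require the most care — is exactly this bookkeeping of the thin-slab error term and confirming that the choice $z_i = 2^{-3\cdot 2^i}/(16d)$ makes it subsumed into the $2^{-2^{i-1}}$ bound (note $2^{-3\cdot 2^i}$ is cubically smaller in the exponent than the target $2^{-2^{i-1}}$, which is what buys the slack). The inclusion $S_{t+1}+z_i\B \subseteq \{v \in S_t+z_i\B : \dot{v}{x_t} \leq m_t\} \cup (\text{slab of } x_t\text{-width } O(z_i))$ needs a clean proof: a point of $S_{t+1}+z_i\B$ above level $m_t$ is the sum of some $v' \in S_{t+1}$ and $z_i u$, $u \in \B$; since the part of $S_t+z_i\B$ above $m_t$ has cross-sections of volume at most $C$ and we only care about the \emph{new} mass, one shows $v'$ itself has $\dot{v'}{x_t} \geq m_t - z_i$, i.e., $v'$ lies in a region of $S_t$ whose inflation above $m_t$ is a thin slab — but in fact I suspect the intended argument simply uses that $\{v \in S_{t+1}+z_i\B\}$ is contained in $\{v \in S_t + z_i \B : \dot v{x_t} \ge m_t - 2 z_i\}$ is false and the real content is: $v_0$ lies \emph{below} $y_t$, the bottom slice up to $m_t$ of $S_t+z_i\B$ (volume $2^{-2^{i-1}}\Vol(S_t+z_i\B)$) \emph{already contains all of} $S_{t+1}+z_i\B$ \emph{except a slab of width $2z_i$ near $m_t$}, whose volume $\leq 2z_i C$ is $\leq$ (a constant times $z_i d / \width) \cdot \Vol(S_t+z_i\B) \leq 2^{-2^{i-1}}\Vol(S_t+z_i\B)$ after adjusting constants, giving the bound with the stated factor. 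I would write this out, being explicit that $S_{t+1}+z_i\B$ lies \emph{below} $m_t$ up to the slab because $S_{t+1}$ lives in the halfspace $\dot{v}{x_t}\geq y_t$ only from below — no: $S_{t+1}$ is the \emph{upper} halfspace. I will reconcile this sign issue by re-examining which half $m_t$ cuts off, fix the direction of the inclusion accordingly, and then the slab-error estimate is the remaining routine-but-delicate computation.
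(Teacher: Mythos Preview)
Your proposal has a genuine gap, and it is precisely the sign issue you flag at the end. When the feedback is $\sigma_t = +1$ (i.e.\ $y_t > \dot{v_0}{x_t}$, a no-purchase), the consistency condition $\sigma_t(y_t - \dot{v}{x_t}) \geq 0$ reads $y_t - \dot{v}{x_t} \geq 0$, so
\[
S_{t+1} = \{v \in S_t : \dot{v}{x_t} \leq y_t\},
\]
the \emph{lower} half-space, not the upper one as you wrote. This single reversal is what sends your argument in circles: you keep trying to show that a set living in the top half is contained (up to a thin slab) in the bottom slice, which is geometrically impossible.

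With the correct sign the proof is a one-liner, and this is exactly what the paper does. Every $v \in S_{t+1}$ has $\dot{v}{x_t} \leq y_t = m_t - z_i$, so every point of $S_{t+1} + z_i\B$ has $x_t$-coordinate at most $(m_t - z_i) + z_i = m_t$ (using $\|x_t\| = 1$). Hence
\[
S_{t+1} + z_i\B \;\subseteq\; \{v \in S_t + z_i\B : \dot{v}{x_t} \leq m_t\},
\]
and the right-hand side has volume exactly $2^{-2^{i-1}}\Vol(S_t + z_i\B)$ by the definition of $m_t$. There is no slab error term to control and no cone comparison is needed here; the point of shifting the guess down by $z_i$ from $m_t$ is precisely so that the $z_i$-inflation of the surviving set cannot cross back above $m_t$. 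The cross-section/cone bookkeeping you sketch is the content of the \emph{other} case (Lemma~\ref{lem:pricing_loss_under}, the purchase event), not this one.
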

\begin{proof}
Note that in this case, the set $\{v \in S_t + z_i \B; \dot{v}{x_t} > m_t\}$ is disjoint from $S_{t+1} + z_iB$.  The definition of $m_t$ immediately gives the desired result.
\end{proof}
\begin{lemma}\label{lem:pricing_loss_under}
If $\width(S_t; x_t) > 1/T$ and $y_t \leq \dot{v_0}{x_t}$ (resulting in a purchase event) then $$\Vol(S_{t+1} + z_i \B) \leq \left(1 - \frac{1}{10 \cdot 2^{2^{(i-1)}}}\right) \Vol(S_t + z_i \B)$$
\end{lemma}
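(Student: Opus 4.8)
The plan is to track how the conservative guess $y_t = m_t - z_i$ erodes the inflated set. On a purchase event $\sigma_t = -1$, so $S_{t+1} = S_t \cap \{v : \dot{v}{x_t} \ge m_t - z_i\}$. Writing $K := S_t + z_i\B$ (a convex body, being a Minkowski sum of convex sets), the key observation is that $S_{t+1} + z_i\B$ contains no point $w$ with $\dot{w}{x_t} < m_t - 2z_i$: if $w = v + z_i u$ with $v \in S_{t+1}$ and $u \in \B$, then $\dot{w}{x_t} \ge (m_t - z_i) - z_i\norm{x_t} = m_t - 2z_i$ since $\norm{x_t}=1$. Hence $S_{t+1} + z_i\B \subseteq K \setminus K^-$, where $K^- := \{w \in K : \dot{w}{x_t} < m_t - 2z_i\}$, so it suffices to prove $\Vol(K^-) \ge \tfrac{1}{10}\,2^{-2^{i-1}}\Vol(K)$, since then $\Vol(S_{t+1}+z_i\B) \le \Vol(K)-\Vol(K^-)\le (1-\tfrac{1}{10\cdot 2^{2^{(i-1)}}})\Vol(S_t+z_i\B)$.

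Set $\beta := 2^{-2^{i-1}}$, so $\beta^2 = 2^{-2^i}$ and the target fraction is $\beta/10$. By the definition of $m_t$, $\Vol(\{w \in K : \dot{w}{x_t} \le m_t\}) = \beta\,\Vol(K)$, and $K^-$ is this set minus the thin slab $K' := \{w \in K : m_t - 2z_i \le \dot{w}{x_t} \le m_t\}$, so $\Vol(K^-) \ge \beta\,\Vol(K) - \Vol(K')$ and it is enough to show $\Vol(K') \le \tfrac{\beta^2}{8}\Vol(K)$. I would bound $\Vol(K') \le 2z_i \cdot M$ by Fubini, where $M$ is the largest $(d-1)$-volume of a section of $K$ by a hyperplane orthogonal to $x_t$ ($K'$ has extent $\le 2z_i$ in the $x_t$ direction). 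To control $M$, take the two extreme points of $K$ in the $x_t$ direction together with the maximal section between them: the two cones they span lie in $K$ by convexity and overlap only on a measure-zero set, giving the standard estimate $\Vol(K) \ge \tfrac1d\,\width(K;x_t)\,M$, i.e. $M \le d\,\Vol(K)/\width(K;x_t)$. Finally $\width(K;x_t) \ge \width(S_t;x_t) > 2^{-2^{i+1}} = 2^{-2\cdot 2^i}$, because $i$ is the \emph{largest} index with $\width(S_t;x_t) \le 2^{-2^i}$; the hypothesis $\width(S_t;x_t) > 1/T$ is used only to ensure we are in the branch of the algorithm where $i$ and $m_t$ are defined.

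Plugging in $z_i = 2^{-3\cdot 2^i}/(16d)$ gives
$$\Vol(K') \le 2z_i M \le \frac{2^{-3\cdot 2^i}}{8d}\cdot\frac{d\,\Vol(K)}{2^{-2\cdot 2^i}} = \frac{2^{-2^i}}{8}\Vol(K) = \frac{\beta^2}{8}\Vol(K),$$
so $\Vol(K^-) \ge \beta(1-\beta/8)\Vol(K) \ge \tfrac{\beta}{10}\Vol(K)$ since $\beta = 2^{-2^{i-1}} < 1$, which completes the argument. The only step with any content is this two-line bound on $\Vol(K')$: everything hinges on the doubly-exponential calibration of $z_i$ against the lower bound $2^{-2^{i+1}}$ on the width, which is exactly what makes the discarded slab of order $\beta^2$ rather than $\beta$. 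I do not expect a genuine obstacle — the cone volume estimate and the exponent bookkeeping are routine — but one should note that the degenerate case $m_t - 2z_i \le \min_{w \in K}\dot{w}{x_t}$ cannot occur, since otherwise the entire $\beta$-mass below $m_t$ would lie in a slab of $x_t$-extent $\le 2z_i$, contradicting the bound $\Vol(K') \le \tfrac{\beta^2}{8}\Vol(K) < \beta\,\Vol(K)$ just derived.
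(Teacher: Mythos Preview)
Your proof is correct and follows essentially the same approach as the paper: you show the region $\{w\in K:\dot{w}{x_t}<m_t-2z_i\}$ is disjoint from $S_{t+1}+z_i\B$, lower-bound its volume by subtracting the thin slab $K'$ from the $\beta$-mass below $m_t$, and control $\Vol(K')$ via the Fubini/cone estimate $\Vol(K')\le 2z_iM\le \tfrac{2z_id}{\width(K;x_t)}\Vol(K)$ together with the doubly-exponential choice of $z_i$. The only differences are cosmetic (your write-up is a bit more explicit about the containment $S_{t+1}+z_i\B\subseteq K\setminus K^-$ and the final numeric inequality $\beta(1-\beta/8)\ge\beta/10$).
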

\begin{proof}
First we upper bound the volume $V$ of the strip $ \Vol(\{v \in S_t + z_i \B; m_t - 2z_i \leq \dot{v}{x_t} \leq m_t\})$.  Let $C$ be the largest volume of a section of $S_t + z_i \B$ in the direction $x_t$ (see right part of Figure \ref{fig:dlogd}).  Then $V \leq 2 C z_i$.  On the other hand since $S_t + z_iB$ is a convex set and has width at least $2^{-2^{i+1}}$, 
$$\Vol(S_t + z_i\B) \geq \frac{C \cdot 2^{-2^{i+1}}}{d}.$$
Thus
\begin{align*}
\Vol(\{v \in S_t + z_i \B; \dot{v}{x_t} \leq m_t -2 z_i \}) & \geq 2^{-2^{(i-1)}} \cdot \Vol(S_t + z_i\B)  - \frac{2z_id}{2^{-2^{i+1}}}\Vol(S_t + z_i\B)
\\ & \geq \frac{1}{10\cdot 2^{2^{i-1}}}\Vol(S_t + z_i\B)
\end{align*}
Since the entire set $\{v \in S_t + z_i \B; \dot{v}{x_t} \leq m_t -2 z_i \}$ is disjoint from $S_{t+1} + z_i \B$, this completes the proof of the lemma.
\end{proof}

\begin{theorem}\label{thm:pricingloss}
The  regret of the {\sc Multiscale Steiner Potential for Pricing} algorithm is at most $O(d \log \log T + d \log d)$.
\end{theorem}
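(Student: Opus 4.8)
The plan is to use Lemma \ref{lem:pricing_loss_over} and Lemma \ref{lem:pricing_loss_under} as progress guarantees and then sum the per-round losses, bucketed by the scale $i$ the algorithm picks. For a fixed scale $i$, set $\Phi_i(t) := \Vol(S_t + z_i\B)$. Because $S_{t+1}\subseteq S_t$ this is non-increasing; it starts at $\Phi_i(1)=(1+z_i)^d\Vol(\B)\le 2\Vol(\B)$ (using $z_i\le 1/(128d)$) and never drops below $\Vol(z_i\B)=z_i^d\Vol(\B)$, since $v_0\in S_t$ for all $t$. As $\log_2(1/z_i)=3\cdot 2^i+\log_2(16d)$, the total amount by which $\log_2\Phi_i$ can decrease over the run is at most $d\log_2(1/z_i)+1=O\!\left(d(\log d+2^i)\right)$.

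First I would dispose of the rounds with $\width(S_t;x_t)\le 1/T$: there $y_t=\min_{v\in S_t}\dot{v}{x_t}\le\dot{v_0}{x_t}$, so a purchase happens with loss $\dot{v_0}{x_t}-y_t\le\width(S_t;x_t)\le 1/T$, and there are at most $T$ of them, for $O(1)$ total. Every remaining round picks a scale $i\ge 0$ (defaulting to $i=0$ in the corner case $\width(S_t;x_t)>1/2$, where both lemmas still apply since then $\width(S_t;x_t)>1/4=2^{-2^{0+1}}$), and since $\width(S_t;x_t)>1/T$ forces $2^i<\log_2 T$, only scales $i\in\{0,1,\dots,I\}$ with $I=O(\log\log T)$ occur. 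Fix such an $i$ and let $N_i,M_i$ count the no-purchase and purchase rounds at scale $i$. By Lemma \ref{lem:pricing_loss_over}, each no-purchase round at scale $i$ drops $\log_2\Phi_i$ by $2^{i-1}$; by Lemma \ref{lem:pricing_loss_under}, each purchase round at scale $i$ drops $\log_2\Phi_i$ by at least $-\log_2\!\bigl(1-\tfrac1{10\cdot 2^{2^{i-1}}}\bigr)\ge\tfrac1{10\cdot 2^{2^{i-1}}}$. These drops are disjoint contributions to the monotone fall of $\Phi_i$, so $N_i\cdot 2^{i-1}+M_i\cdot\tfrac1{10\cdot 2^{2^{i-1}}}\le d\log_2(1/z_i)+1$, whence
\begin{align*}
N_i &\le \frac{d\log_2(1/z_i)+1}{2^{i-1}}=O\!\left(\tfrac{d\log d}{2^i}+d\right),\\
M_i &\le 10\cdot 2^{2^{i-1}}\bigl(d\log_2(1/z_i)+1\bigr)=O\!\left(2^{2^{i-1}}\,d\,(\log d+2^i)\right).
\end{align*}

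Next I would bound the loss of a single round. A no-purchase round has loss $\dot{v_0}{x_t}\le 1$. A purchase round at scale $i$ has loss $\dot{v_0}{x_t}-y_t$, and since $y_t=m_t-z_i>\min_{v\in S_t}\dot{v}{x_t}-2z_i$ while $\dot{v_0}{x_t}\le\min_{v\in S_t}\dot{v}{x_t}+\width(S_t;x_t)$, this loss is at most $\width(S_t;x_t)+2z_i\le 2\cdot 2^{-2^i}$ under the standard choice of $i$ (and $O(1)$ in the $i=0$ default case). Multiplying event counts by per-round losses and summing over scales: the total no-purchase loss is $\sum_{i=0}^I N_i=O(d\log d)+O(dI)=O(d\log d+d\log\log T)$ (the $2^{-i}$ terms sum geometrically); the total purchase loss is $\sum_{i\ge 0}O\!\bigl(2^{2^{i-1}}d(\log d+2^i)\cdot 2^{-2^i}\bigr)=\sum_{i\ge 0}O\!\bigl(d(\log d+2^i)2^{-2^{i-1}}\bigr)=O(d\log d)$, because $\sum_i 2^{-2^{i-1}}$ and $\sum_i 2^i2^{-2^{i-1}}$ both converge (and scale $0$ alone already contributes $O(d\log d)$). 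Adding the $O(1)$ from the narrow rounds gives regret $O(d\log\log T+d\log d)$.

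The substantive content is in the two volume-progress lemmas, so what remains is really bookkeeping; the part that needs care is verifying that the doubly-exponential schedule $z_i=2^{-3\cdot 2^i}/(16d)$ and thresholds $2^{-2^i}$ balance the two event types: a purchase round makes only $\Theta(2^{-2^{i-1}})$ multiplicative progress but also has only $O(2^{-2^i})$ loss, keeping $\sum_i M_i\cdot 2^{-2^i}$ summable, while a no-purchase round has $O(1)$ loss but makes $\Theta(2^{i-1})$ log-progress, so $N_i$ decays like $2^{-i}$ and the $\log\log T$ enters only through the number $I$ of active scales. The one genuinely fiddly point I anticipate is the $\width(S_t;x_t)>1/2$ corner case, where no nonnegative index meets the width threshold; I would handle it by defaulting to $i=0$ and checking that such rounds only add $O(d\log d)$.
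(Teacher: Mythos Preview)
Your proposal is correct and follows essentially the same argument as the paper: bound the number of no-purchase and purchase rounds at each scale $i$ via the total log-volume drop of $\Vol(S_t+z_i\B)$ using Lemmas \ref{lem:pricing_loss_over} and \ref{lem:pricing_loss_under}, then multiply by the per-round losses ($1$ and $O(2^{-2^i})$ respectively) and sum over the $O(\log\log T)$ active scales. Your write-up is in fact slightly more careful than the paper's---you explicitly handle the $\width(S_t;x_t)>1/2$ corner case by defaulting to $i=0$, and you state the combined potential-drop inequality $N_i\cdot 2^{i-1}+M_i\cdot\tfrac{1}{10\cdot 2^{2^{i-1}}}\le d\log_2(1/z_i)+1$ before extracting the separate bounds---but the substance is identical.
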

\begin{proof}
First, note that the regret contributed from all rounds where $\width(S_t; x_t) \leq 1/T$ is $O(1)$.  Next, we consider rounds with $\width(S_t; x_t) > 1/T$.  Note that for all of these rounds $i \leq 2 \log \log T$.  If we choose index $i$ and  $y_t > \dot{v_0}{x_t}$ (leading to a no-purchase), the volume of $S_t + z_iB$ is cut by a factor of $2^{-2^{i-1}}$.  Since $\Vol(S_t + z_iB) \geq z_i^d \Vol(B)$ always, this can happen at most $$n_i^- := O\left( \frac{d\log (1/z_i)}{\log 2^{2^{i-1}}}\right) = O\left( \frac{d \left(  
\log d + 3 \cdot 2^i\right)}{2^{i-1}}\right) = O\left( \frac{d \log d} {2^{i-1}} + d \right) $$
times.  The loss from each such query is at most $1$.  If we choose index $i$ and $y_t \leq \dot{v_0}{x_t}$, the volume of $S_t + z_iB$ is cut by a factor of $\left( 1 - \frac{1}{10 \cdot 2^{2^{i-1}}}\right)$.  This can happen at most $$n_i^+ :=  O\left(\frac{d\log (1/z_i)}{-\log \left( 1 - \frac{1}{10 \cdot 2^{2^{i-1}}}\right)} \right) \leq \left(10 \cdot 2^{2^{i-1}} \cdot d\log (1/z_i) \right) = O \left( d \log d \cdot 2^{2^{i-1}} + 2^{i + 2^{i-1}} \right)$$ times.  The loss from each such query is at most $2^{-2^i}$.  Thus, our regret is at most
\begin{align*}
O(1) + \sum_{i=1}^{2\log \log T} (n_i^- + 2^{-2^i} \cdot n_i^+) = O(d \log \log T + d \log d )
\end{align*}
\end{proof}

\subsection{Polynomial time implementation}\label{sec:efficiency}
We note that $y_t$ can be approximated using binary search as long as we can compute the volume of $(S_t  + z_i \B)\cap H$ for any half-space $H$. It is enough to notice that we only need a constant approximation of the volume in the previous proof and in order to approximate the volume we only need access to a separation oracle \cite{KLS97,lovasz1999hit,BertsimasV04}. Since $S_t$ is a ball intersected with at most $t$ halfspaces, it is trivial to obtain a separation oracle for it.

To obtain a separation oracle for $S_t + z_i \B$ it is enough to solve the problem of computing the distance from a query point to the convex set $S_t$ which is itself a convex problem.
Technically, this requires $S_t$ to not be too small since the guarantee of cutting plane methods (like the ellipsoid algorithm) tells us that (given an initial ellipsoid $E \supseteq S_t$), it is possible to compute an $\epsilon$-optimal solution in time $O\left( \mathcal{T} \cdot \poly(d) \cdot \log\left( \frac{\Vol (E)}{\epsilon \Vol (S_t)} \right) \right)$.

We can take $E$ to be the unit ball; then this is algorithm is efficient as long as $\Vol(S_t)$ is never too small (anything at least $\exp(-\poly(T))$ is fine). Here we present a simple modification of Algorithms \ref{alg:linear_symmetric} and  \ref{alg:linear_pricing} that makes sure that the volume of $S_t$ stays large enough throughout by preserving a small ball around $v_0$.

Initialize $S_1 = (1+T^{-4}) \B$ to ensure that $\B(v_0, T^{-4}) \subseteq S_1$ (where $\B(c, r)$ denotes the ball with center $c$ and radius $r$). Now change the guess $y_t$ to $y_t - \delta_t$ where $\delta_t$ is sampled from the uniform distribution over $[0,T^{-2}]$. The total additional loss from this perturbation is $O(1)$. Since the perturbation is smaller than $z_i$ we can use the same argument in Lemmas \ref{lemma:volume_progress_symmetric} and \ref{lem:pricing_loss_under} with $2z_i$ instead of $z_i$ to bound the volume of the band making sure there is constant progress in the Steiner potential.

The advantage of this perturbation is that the probability that the cut passes through the ball of radius $1/T^4$ around $v_0$ is at most $1/T^2$ per period. So with probability $1-1/T$, $B(v_0, 1/T^4) \subseteq S_t$ for all periods $t$. It follows that $\Vol(S_t) \geq \frac{1}{T^{4d}}\Vol(\B)$, and with this, the convex minimization problem can be solved in $\poly(d,T)$ time.

\subsection{Optimality}\label{sec:optimality}

Here we briefly discuss the optimality of our results, namely Theorem \ref{thm:symloss} and Theorem \ref{thm:pricingloss}.  Note that we set up the problem by assuming that the hidden vector $v_0$ and all of the adversarially chosen contexts $x_t$ are drawn from the $L^2$ unit ball.  We may alternatively set up the problem by assuming that the hidden vector $v$ is drawn from the cube $[-1,1]^d$ and the contexts $x_t$ are chosen from the $L^1$ ball i.e. $\norm{x_t} \leq 1$ for all $t$.  With this setup, it follows from a direct reduction to the one-dimensional case that we cannot do better than $O(d)$ for symmetric loss and $O(d \log \log T)$ for pricing loss.  

Now we show how our proofs can be modified to achieve the same results, $O(d \log d)$ for symmetric loss and $O(d \log \log T + d \log d)$ for pricing loss, which are optimal in this modified setting up to logarithmic factors.  We will run exactly the same algorithms.  To see that the analysis remains the same, it suffices to note that our maximum loss in any round is still bounded by $O(1)$ and the volumes are scaled by a factor of $\poly(d)^d$ which becomes an additive $d \log d$ factor after taking the logarithm.

\section{Framework for Online Learning with Binary Feedback}\label{sec_setup}

We will now consider a general model of learning with binary feedback that has contextual search as a special case. We will derive regret bounds based on the covering dimension of the hypothesis class. The driving technique will still be the Steiner potential together with two new ideas: (i) a new adaptive scaling that will either make a lot of progress in a finer granularity or slower progress in coarser granularity; (ii) randomized cuts that will reduce the potential with constant probability.\\

Consider a hypothesis class $\H$ consisting of functions mapping $\X$ to $\Y$. We refer to $\X$ as the context space and $\Y$ as the output space.
We assume that the output space $\Y$ is a totally ordered set, i.e,  for each $y_1, y_2 \in \Y$ with $y_1 \neq y_2$ we have either $y_1 < y_2$ or $y_2 < y_1$.

The learning protocol is as follows: an adversary chooses some $f_0 \in H$ and in each round they choose some $x_t \in \X$.  The learner makes a prediction $y_t \in \Y$ and incurs loss $\ell(y_t, f_0(x_t))$ for some loss function $\ell: \Y \times \Y \rightarrow [0,1]$.  Upon making a prediction, the learner receives feedback on whether $y_t \leq f_0(x_t)$ or $y_t \geq f_0(x_t)$ (the feedback is arbitrary in case of equality). It will be convenient to represent the feedback as a variable $\sigma_t \in \{-1,+1\}$ such that $\sigma_t = +1$ if $y_t > f_{0}(x_t)$ and $\sigma_t = -1$ otherwise. 

We make the following assumptions about the loss function throughout the paper:
\begin{itemize}
    \item \emph{Reflexive:} $\ell(y,y) = 0$ for all $y \in \Y$.
    \item \emph{Symmetry:} $\ell(y_1,y_2) = \ell(y_2,y_1)$ for all $y_1,y_2 \in \Y$.
    \item \emph{Triangle inequality:} $\ell(y_1, y_2) \leq \ell(y_1, y') + \ell(y', y_2)$ for all $y_1, y_2, y' \in \Y$.
    \item \emph{Order consistency:} If $y_1 < y_2 < y_3$ then $\max \{ \ell(y_1, y_2), \ell(y_2, y_3) \} \leq \ell(y_1, y_3)$.
    \item \emph{Continuity:} If $0 < \ell < \ell(y_1, y_2)$ then there are $y', y'' \in \Y$ such that $\ell = \ell(y_1, y') = \ell(y'', y_2)$.
\end{itemize}

If $\Y = \R$, then for any continuous increasing function $\phi:\R \rightarrow [0,1]$ and parameter $\alpha \leq 1$, the loss $\ell(y_1, y_2) = \abs{\phi(y_1) - \phi(y_2)}^\alpha$ satisfies the desired properties. Note that while the symmetric loss function is easily cast in this framework the pricing loss is not captured. In Section \ref{subsec:impossibility_pricing} we give an impossibility result showing it is impossible to obtain covering-dimension bounds for the pricing loss.

\subsection{Covering Dimension}
Our loss bounds will be in terms of the covering dimension of the hypothesis class $\H$. We start by defining a metric $d_{\infty}(\cdot, \cdot)$ on $\H$ induced by the loss function: 

\begin{definition}
For two hypotheses $f_1,f_2 \in  \H$, let $d_{\infty}(f_1,f_2) = \sup_{x \in \mathcal{X}}\left(\ell(f_1(x), f_2(x)) \right)$
\end{definition}

 We can now introduce the notions of $\eps$-net and covering dimension.

\begin{definition}[$\eps$-net]
For an $\eps$, we say that a set $S \subseteq \H$ is an $\epsilon$-net of $\H$ under the  $d_{\infty}$ metric if for every $h \in \H$, there is $h' \in S$ with $d_{\infty}(h,h') \leq \epsilon$. Let $N_{\eps}(\H)$ be an $\epsilon$-net of $\H$ of minimum cardinality.
\end{definition}

\begin{definition}[Covering Dimension]\label{def:covering_dim}
Define the covering dimension $\H$ as
\[
\Cdim(\H) = \sup_{0 < \epsilon \leq \frac{1}{2}} \frac{\log \abs{N_{\eps}(\H)}}{\log \frac{1}{\eps}}
\]

Note that this definition of $\Cdim$ differs from Hausdorff dimension in that we care not just about the limit $\eps \rightarrow 0$, but the largest value for any $\eps \in [0, 1/2]$); importantly, this guarantees us that, for any $\eps \in (0, 1)$,

\[
N_{\eps}(\H) \leq \max(\eps^{-1}, 2)^{\Cdim(\H)}.
\]
Note that we specify $\epsilon \leq \frac{1}{2}$ to avoid issues when $\epsilon$ is close to $1$ - any other fixed constant upper bound $p$ only changes this dimension by a constant factor of at most $1 / \log (1/p)$.
\end{definition}

We give a few quick examples to give intuition about covering dimension.
\begin{example}
The space of functions $f:[d] \rightarrow \{0,1\}$ with loss function $\ell(y_1,y_2) = 1_{y_1 \neq y_2}$ has covering dimension $d$.
\end{example}
\begin{example}[Contextual Search]\label{contextual_search}
Let $\B$ be the unit-ball in $\R^d$, i.e. $\B = \{x \in \R^d; \norm{x}_2 \leq 1 \}$. For each $v \in \B$, let $f_v : \B \rightarrow \R$ be defined by the dot product $f_v(x) = \dot{v}{x}$.  The linear contextual search problem is defined as the learning problem for class $\H = \{f_v; v \in \B\}$ with loss function $\ell(y_1, y_2) = \abs{y_1 - y_2}$.  This class has covering dimension $O(d)$.

To see that the covering dimension is $O(d)$, first note that $d_\infty(f_v, f_u) = \norm{u-v}_2$.  It suffices to show that for any $0 < \epsilon \leq \frac{1}{2}$, there is an $\epsilon$-net of the sphere of size $\left(1/\epsilon\right)^{O(d)}$.  To do this, we can greedily place points in the unit ball such that no two are within $\epsilon$ of each other.  If we draw an $\frac{\epsilon}{2}$-radius ball around each point, these balls must be disjoint and contained in a ball centered at the origin of radius $1 + \frac{\epsilon}{2}$.  Thus, the maximum number of points we will place is 
\[
\left(\frac{1 + \frac{\epsilon}{2}}{\frac{\epsilon}{2}} \right)^d = \left(1 + \frac{2}{\epsilon} \right)^d
\]
giving us an $\epsilon$-net of the same size.
\end{example}

\begin{example}[Sparse Contextual Search]\label{ex:sparse} The sparse version of the contextual search problem is given by class $\H = \{f_v; v \in \B, \norm{v}_0 \leq s\}$ where $\norm{v}_0 := \abs{\left\{ i; v_i \neq 0 \right\}}$.  The loss function is still $\ell(y_1,y_2) = |y_1 - y_2|$. The covering dimension of this class is $O(s \log d)$ (where we treat $s$ as a constant and $d$ as tending to $\infty$).
To see that the covering dimension is $O(s \log d)$, note that for any $\epsilon$ we can use the result in the previous example to obtain an $\epsilon$-net of size
\[
\binom{d}{s}\left(1 + \frac{2}{\epsilon} \right)^s \leq s^d\left(1 + \frac{2}{\epsilon} \right)^s.
\]

\end{example}

\begin{example}[Unit demand] In the unit demand version of contextual search the set $\X = \{0,1\}^d$ and the hypothesis class consists of functions $f_w(x) = \max_{i \in [d]} w_i x_i$ for $w \in [0,1]^d$. The covering dimension of this class is $O(d)$. This example corresponds to the puzzle in the introduction and corresponds to the economic situation where a seller wants to price a bundle of goods (represented by the context) but the buyer has an unit-demand valuation, i.e., only cares about the highest-valued item in the bundle. 

To see that the covering dimension is $O(d)$, note that the set $S = \{\epsilon x| x \in \{0,1, \dots , \lfloor 1/\epsilon \rfloor\}^d \}$ forms an $\epsilon$-net and $|S| \sim \left( \frac{1}{\epsilon}\right)^d$.
\end{example}

\section{Loss Bounds based on Covering Dimension}

For simplicity, in the following theorems we will assume our hypotheses map $\mathcal{X} \rightarrow \R$ and that our loss is given by $\ell(y,y') = \abs{y-y'}$.  We will then remark on how to generalize our proof to other loss functions.

\subsection{$O(d \log(T))$ regret via Single-scale Steiner Potential}\label{subsec:dlogt_noiseless}

Bounds that depend on $\log(T)$ are typically easy for learning with binary feedback and can be obtained using different algorithms. The interesting question in this setting is how to obtain bounds that are constant in $T$. Nevertheless, it is  instructive to start with a simpler algorithm with regret $O(d \log T)$ for $d = \Cdim(\H)$. It will illustrate how the \emph{Steiner potential} can be generalized to an abstract setting. Instead of keeping track of the hypotheses that are consistent with the feedback so far, we will keep an inflated version of that set.

This algorithm starts with a $T^{-1}$-net of the hypothesis class and keeps a set of candidate hypotheses that are approximately consistent with observations seen so far. For each $x_t$, it queries a random point around the median, halving the set of hypotheses with at least half probability.

\begin{algorithm}[H]
\caption{{\sc Single-scale Steiner Potential} }
\begin{algorithmic}

\State Initialize $H_1 = N_{1/T}(\H)$ 
\For {$t$ in $1,2, \dots , T$}
\State Adversary picks $x_t$
\State Let $m_t$ be the median of $\{f(x_t); f \in H_t\}$
\State Choose $y_t = m_t - 2/T$ or $m_t + 2/T$, each with probability half
\State Update $H_{t+1} = \{f \in H_t; \sigma_t(y_t - f(x_t)) \geq -1/T \}$
\EndFor

\end{algorithmic}
\end{algorithm}

The analysis is based on the following lemma:

\begin{lemma}\label{lemma:half_hypothesis}
If $\abs{m_t - f_0(x_t)} > 2/T$, then $\abs{H_t} \leq \frac{1}{2} \abs{H_{t-1}}$ with probability at least $1/2$.
\end{lemma}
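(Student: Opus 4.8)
The plan is to show that when $|m_t - f_0(x_t)| > 2/T$, the true hypothesis $f_0$ (or rather, its $1/T$-net representative) forces the feedback $\sigma_t$ to be consistent with a fixed side of $m_t$, and then to argue that for \emph{one} of the two perturbed queries $y_t \in \{m_t - 2/T, m_t + 2/T\}$ the update rule discards at least half the hypotheses in $H_t$. Since we pick between the two perturbations uniformly, this happens with probability at least $1/2$, which is exactly the claim. (Note: I read the indices in the statement as saying $|H_{t+1}| \le \tfrac12 |H_t|$ — the $t$ versus $t-1$ is just an off-by-one in the writeup; the substance is the same.)

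First I would fix notation: let $M = \{f \in H_t : f(x_t) \le m_t\}$ and $M' = \{f \in H_t : f(x_t) \ge m_t\}$. By definition of the median, $|M| \ge \tfrac12 |H_t|$ and $|M'| \ge \tfrac12 |H_t|$. Now suppose WLOG that $f_0(x_t) > m_t + 2/T$ (the case $f_0(x_t) < m_t - 2/T$ is symmetric). Consider the query $y_t = m_t + 2/T$. Since $y_t < f_0(x_t)$, the feedback is $\sigma_t = -1$, so the update keeps exactly those $f \in H_t$ with $-1 \cdot (y_t - f(x_t)) \ge -1/T$, i.e., $f(x_t) \ge y_t - 1/T = m_t + 1/T$. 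Every $f \in M$ has $f(x_t) \le m_t < m_t + 1/T$ and is therefore eliminated, so $|H_{t+1}| \le |H_t| - |M| \le \tfrac12 |H_t|$. Thus conditioned on the perturbation landing on the side \emph{toward} $f_0(x_t)$ — which occurs with probability exactly $1/2$ — the set halves. In the symmetric case $f_0(x_t) < m_t - 2/T$, the good perturbation is $y_t = m_t - 2/T$, giving $\sigma_t = +1$ and eliminating all of $M'$.

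The one point that needs a little care — and the step I'd flag as the main obstacle — is making sure the hypothesis class element whose prediction we are reasoning about is genuinely in $H_t$, and that $H_t$ is nonempty and actually contains a good approximation of $f_0$, so that the median is well-defined and the loss bound can eventually be closed. Here this is handled by the margin in the update rule: since $H_1 = N_{1/T}(\H)$, there is some $g \in H_1$ with $d_\infty(g, f_0) \le 1/T$, and a straightforward induction shows $g$ is never discarded (any round's feedback is consistent with $g$ up to the $1/T$ slack, because $|g(x_\tau) - f_0(x_\tau)| \le 1/T$). So $H_t \neq \emptyset$ throughout and the median is well-defined; this also justifies, in the theorem that uses this lemma, why the incurred loss in rounds with $|m_t - f_0(x_t)| \le 2/T$ is small. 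For the lemma statement itself, though, once well-definedness of $m_t$ is granted, the argument above is essentially just unwinding the definitions of "median" and of the update rule, together with the observation that the sign of $y_t - f_0(x_t)$ is determined once $|m_t - f_0(x_t)| > 2/T$ and $y_t$ is within $2/T$ of $m_t$.
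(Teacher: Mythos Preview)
Your proposal is correct and follows essentially the same argument as the paper: assume WLOG $f_0(x_t) > m_t + 2/T$, observe that with probability $1/2$ the algorithm queries $y_t = m_t + 2/T$, the resulting feedback eliminates every $f$ with $f(x_t) < m_t + 1/T$, and by the median property this is at least half of $H_t$. Your reading of the index shift ($|H_{t+1}| \le \tfrac12 |H_t|$ rather than the stated $|H_t| \le \tfrac12 |H_{t-1}|$) is correct, and the extra remark about the net representative never being eliminated is exactly what the paper uses in the subsequent corollary.
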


\begin{proof}
Assume $f_0(x_t) > m_t + 2/T$ (the other case is analogous), then with probability half, the algorithm guesses $y_t = m_t + 2/T$ and gets the feedback that $f_0(x_t) \geq y_t$, which eliminates all the hypotheses $f$ such that $f(x_t) < m_t + 1/T$. These hypotheses constitute at least half of $H_t$. 
\end{proof}

\begin{corollary}
The \textsc{Single-scale Steiner Potential} algorithm obtains regret $O(d \log T)$ in expectation.
\end{corollary}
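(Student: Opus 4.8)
The plan is to combine Lemma~\ref{lemma:half_hypothesis} with a bound on how much loss is incurred in rounds where the lemma's hypothesis fails, and then sum over the $O(\log T)$ ``halving rounds'' afforded by the initial net size. First I would argue that the loss in any round is controlled by the diameter of the surviving set along the current direction: since every $f \in H_t$ satisfies the feedback constraints up to slack $1/T$, and $f_0$ (or rather its net-representative) remains in $H_t$ throughout, the value $m_t$ is within roughly $\width$ of $f_0(x_t)$, where $\width$ here means $\max_{f \in H_t} f(x_t) - \min_{f\in H_t} f(x_t)$ plus an $O(1/T)$ slack term. More precisely, I would split rounds into two classes: (i) rounds where $|m_t - f_0(x_t)| > 2/T$, and (ii) rounds where $|m_t - f_0(x_t)| \le 2/T$.

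For class (ii), the guess $y_t \in \{m_t \pm 2/T\}$ is within $4/T$ of $f_0(x_t)$, so the loss is at most $4/T$ per round; over $T$ rounds this contributes $O(1)$ total. For class (i), Lemma~\ref{lemma:half_hypothesis} says each such round halves $|H_t|$ with probability at least $1/2$, so in expectation we see at most $2\log_2 |H_1|$ rounds of class (i) before $|H_t|$ drops below $1$ — but $H_t$ is never empty since it always contains a net point within $1/T$ of $f_0$, so this can never fully happen; the right statement is that the expected number of class-(i) rounds is $O(\log|H_1|)$. By the definition of covering dimension (Definition~\ref{def:covering_dim}), $|H_1| = |N_{1/T}(\H)| \le T^{\,\Cdim(\H)} = T^d$, so $\log|H_1| \le d\log T$. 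Each class-(i) round costs loss at most $2$ (or $O(1)$, using that $\ell$ is bounded by $1$ and $y_t$ differs from $m_t$ by $2/T \le O(1)$), so class (i) contributes $O(d\log T)$ in expectation. Adding the two contributions gives the claimed $O(d\log T)$ expected regret.

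There is one subtlety I would be careful about: I must verify that the correct net-point stays in $H_t$ for all $t$, which is what both keeps $H_t$ nonempty and keeps $m_t$ close to $f_0(x_t)$ in class-(ii) rounds. Let $g \in N_{1/T}(\H)$ satisfy $d_\infty(g, f_0) \le 1/T$, so $|g(x) - f_0(x)| \le 1/T$ for every $x$. Then whenever the feedback is $\sigma_t = +1$ (meaning $f_0(x_t) < y_t$), we have $g(x_t) < y_t + 1/T$, i.e.\ $\sigma_t(y_t - g(x_t)) = y_t - g(x_t) > -1/T$, so $g$ survives the update; the $\sigma_t = -1$ case is symmetric. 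Hence $g \in H_t$ for all $t$ by induction, which is exactly the invariant needed. The main obstacle, then, is not any single hard step but rather being careful with the $\pm 1/T$ slack bookkeeping — making sure the slack in the update rule ($-1/T$), the perturbation of the guess ($2/T$), and the net granularity ($1/T$) fit together so that (a) $g$ always survives, (b) in class-(i) rounds at least half of $H_t$ is genuinely eliminated (the eliminated hypotheses $f$ with $f(x_t) < m_t + 1/T$ really do fail $\sigma_t(y_t - f(x_t)) \ge -1/T$ when $y_t = m_t + 2/T$), and (c) the per-round loss bounds hold. Once this bookkeeping is pinned down, the expectation argument and the covering-dimension count are routine.
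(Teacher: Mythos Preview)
Your proposal is correct and follows essentially the same decomposition as the paper: split rounds into those with $|m_t - f_0(x_t)| \le 2/T$ (contributing $O(1)$ total) and those with $|m_t - f_0(x_t)| > 2/T$, then bound the expected number of the latter by $O(\log|H_1|) = O(d\log T)$ using Lemma~\ref{lemma:half_hypothesis} together with the invariant that a $1/T$-close net point survives in $H_t$ forever.

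The one place you differ from the paper is in how you extract the bound on the expected number of class-(i) rounds. You use the direct counting argument: each class-(i) round yields a halving with conditional probability at least $1/2$, and the total number of halvings is deterministically at most $\log_2|H_1|$, so the expected number of class-(i) rounds is at most $2\log_2|H_1|$. The paper instead tracks the potential $1/|H_t|$, shows $\E[1/|H_{t+1}|] \ge \tfrac{3}{2}\cdot 1/|H_t|$ in each class-(i) round, and then applies Markov's inequality to $1/|H_T| \le 1$ to obtain a tail bound $\Pr[\text{at least }c\text{ class-(i) rounds}] \le T^d/(3/2)^c$, which it sums. Your route is a bit more elementary and gives the expectation directly; the paper's route yields an exponential tail bound as a byproduct, though that extra strength is not needed here. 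Your bookkeeping checks on the $1/T$ slack (survival of $g$, elimination of at least half of $H_t$ when $y_t = m_t + 2/T$) are exactly the ingredients the paper relies on implicitly.
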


\begin{proof}
The regret from periods where $\abs{m_t - f_0(x_t)} \leq 2/T$ is at most $O(1)$. For the remaining periods, the size of $\abs{H_t}$ is halved with at least $\frac{1}{2}$ probability. Note that for such a $t$,
\[
\E\left[ \frac{1}{|H_{t+1}|} \right] \geq \frac{3}{2}\E\left[\frac{1}{|H_t|} \right]
\]
Next, $\abs{H_t} \geq 1$ for all $t$ since there is some element in $H_1$ that is $\frac{1}{T}$-close to $f_0$ which is never eliminated.  However, $\frac{1}{|H_1|} \geq \frac{1}{T^d}$ and $\frac{1}{|H_t|} \leq 1$ for all $t$.  Thus for any integer $c$, the probability that there are $c$ periods with loss greater than $\frac{2}{T}$ is at most $\frac{T^d}{\left(\frac{3}{2}\right)^c}$.  Thus, the expected number of periods with loss larger than $2/T$ is at most
\[
10d\log T + \sum_{i=\lceil 10d\log T \rceil}^{\infty}\frac{T^d}{\left(\frac{3}{2}\right)^i} = O(d \log T).
\]
\end{proof}

\subsection{Strategy for achieving constant (in $T$) regret}\label{sec:intuition_const_T}

In this subsection we provide some intuition on how to improve the regret of our algorithm from $O(d \log T)$ to $O(d^2)$. In  \textsc{Single-scale Steiner Potential} a loss larger than $1/T$ causes $H_t$ to half in size, but whenever it halves in size the only bound we can get for the loss is $1$. To improve this bound, we need to guarantee that a loss of $1$ can't occur very often. Our strategy for doing that involves keeping multiple levels of discretization. Given $y_t$ and the feedback $\sigma_t \in \{-1,+1\}$ we will keep for each $z > 0$:
$$H_1^z = N_z(\H) \quad \text{and} \quad H_{t+1}^z = \{ f \in H_{t}^z; \sigma_t (y_t - f(x_t)) \geq -z \}$$

In other words, we keep an $z$-discretization of hypotheses along with all the hypotheses that are consistent with the feedback so far with an $z$-margin. The $z$-margin is important to guarantee that any hypothesis that is $z$-close to $f_0$ will never be eliminated. We will also refer to $H_t^0$ as the set of hypotheses consistent with observations so far without any discretization or margin.

Our strategy will be to choose in each round some discretization level $z$ based on the maximum possible loss achievable in this round. We will divide the space of losses in exponentially sized buckets and define $i$ to be the index of the bucket where the maximum loss falls: 
$$\width(H_t^0; x_t) := \max_{f \in H_t^0} f(x_t) - \min_{f \in H_t^0} f(x_t) \in \left( 10 \cdot 2^{-(i+1)}, 10 \cdot  2^{-i} \right].$$

Now we choose $z = z_i$ based on $i$, compute the median $m_t$ of $\{f(x_t); f \in H^{z_i}_t\}$ and guess either $y_t = m_t - 2z_i$ or $m_t + 2z_i$ with half probability each. Now one of two things can happen:

\begin{itemize}
    \item If the loss is larger than $2 z_i$, the set $H_t^{z_i}$ will decrease by a factor of $2$ with half probability. This should happen at most $\log \abs{N_{z_i}}$ times in expectation, generating loss $10 \cdot 2^{-i} \log \abs{N_{z_i}} = O( 2^{-i} \cdot d \log(1/z_i) )$.
    \item If the loss is smaller than $2 z_i$ we will show that the set $H_t^{2^{-i}}$ will decrease by at least $1$ element in expectation (Lemma \ref{lemma:small_loss}), so we get loss $z_i \cdot \abs{N_{2^{-i}}} = O( z_i 2^{d i} )$ 
\end{itemize}

This leads to a regret of:
$$O \left( \sum_i 2^{-i} \cdot d \log(1/z_i) + z_i 2^{d i} \right) = O(d^2) \quad \text{for} \quad z_i = \frac{1}{3^{d(i+1)}}$$

\subsection{Analysis of the $O(d^2)$ algorithm}\label{subsec:d_square_noiseless}

\begin{theorem}\label{main_binaryfeedback}
Let $d = \Cdim(H)$.  The \textsc{Multi-scale Steiner Potential} algorithm incurs expected regret $O(d^2)$ in the binary feedback model.
\end{theorem}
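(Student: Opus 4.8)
The plan is to make rigorous the intuition sketched in Section~\ref{sec:intuition_const_T}. First I would describe the algorithm precisely: maintain the family of sets $\{H_t^z\}$ indexed by scale $z>0$, initialized to $H_1^z = N_z(\H)$ and updated by $H_{t+1}^z = \{f\in H_t^z : \sigma_t(y_t - f(x_t)) \ge -z\}$; also maintain $H_t^0$, the set of all (undiscretized) hypotheses consistent with the feedback with zero margin. In each round, after the adversary reveals $x_t$, compute $\width(H_t^0;x_t)$, let $i$ be the bucket index with $\width(H_t^0;x_t)\in(10\cdot 2^{-(i+1)}, 10\cdot 2^{-i}]$, set the scale $z_i = 3^{-d(i+1)}$, compute the median $m_t$ of $\{f(x_t):f\in H_t^{z_i}\}$, and guess $y_t = m_t \pm 2z_i$ with probability $1/2$ each.

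Next I would prove the two key progress lemmas. The first is the ``large loss'' case, essentially Lemma~\ref{lemma:half_hypothesis} rescaled: if the realized loss $|m_t - f_0(x_t)| > 2z_i$, then with probability at least $1/2$ the algorithm picks the perturbation direction pointing toward $f_0$, the feedback eliminates every $f\in H_t^{z_i}$ on the far side of the median (a $\ge 1/2$ fraction), and moreover — crucially — the $z_i$-margin in the update rule does not resurrect any of them because $f_0$ is more than $2z_i$ away while the perturbation is only $2z_i$, so $|H_{t+1}^{z_i}| \le \tfrac12 |H_t^{z_i}|$. The second is the ``small loss'' case (the analogue of the referenced Lemma~\ref{lemma:small_loss}): if $|m_t - f_0(x_t)| \le 2z_i$, I want to show that $|H_{t+1}^{2^{-i}}|$ drops by at least one element in expectation. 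The idea is that since $\width(H_t^0;x_t) > 10\cdot 2^{-(i+1)} = 5\cdot 2^{-i}$, there is a hypothesis $g\in H_t^0$ whose value $g(x_t)$ is at least (say) $2\cdot 2^{-i}$ away from $m_t$ on one of the two sides; a $2^{-i}$-net point near it survives in $H_t^{2^{-i}}$; and with probability $1/2$ we perturb toward the opposite side of $m_t$ from $g$, getting feedback that — combined with the fact that the true value $f_0(x_t)$ is within $2z_i \ll 2^{-i}$ of $m_t$, hence well inside the killing region relative to the $2^{-i}$ margin — eliminates this net point from $H_{t+1}^{2^{-i}}$. Here I must be careful about the coupling: the single random choice of perturbation direction $\pm$ is what drives progress simultaneously in $H^{z_i}$ (large-loss case) and $H^{2^{-i}}$ (small-loss case), and exactly one of the two cases applies depending on the unknown realized loss, so in either case expected progress is guaranteed.

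Then I would assemble the regret bound. For each fixed bucket index $i$: the per-round loss is at most $\width(H_t^0;x_t) \le 10\cdot 2^{-i}$. In the large-loss branch, each occurrence halves $|H_t^{z_i}|$ with probability $\ge 1/2$, and since $1 \le |H_t^{z_i}| \le |N_{z_i}(\H)| \le z_i^{-d} = 3^{d^2(i+1)}$, a standard ``geometric tail'' argument (identical in form to the one in the $O(d\log T)$ corollary: track $\E[1/|H_t^{z_i}|]$, which multiplies by $\ge 3/2$ each such round) bounds the expected number of large-loss rounds at scale $i$ by $O(\log |N_{z_i}(\H)|) = O(d^2(i+1))$, contributing $O(2^{-i} d^2 (i+1))$. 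In the small-loss branch, each occurrence removes an element of $H_t^{2^{-i}}$ in expectation, and $|H_t^{2^{-i}}| \le |N_{2^{-i}}(\H)| \le 2^{di}$, so there are at most $O(2^{di})$ such rounds, each with loss $O(z_i) = O(3^{-d(i+1)})$, contributing $O(2^{di} \cdot 3^{-d(i+1)}) = O((2/3)^{di} 3^{-d}) \le O(2^{-i})$ after crude bounding. Summing over all $i \ge 0$: $\sum_i O(2^{-i} d^2 (i+1)) + \sum_i O(2^{-i}) = O(d^2)$. I should also note that since $\width(H_1^0;\cdot) \le \width(\B;\cdot) = O(1)$, only buckets with $i \ge -O(1)$ ever occur, so the sums are genuinely over $i \gtrsim 0$.

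The main obstacle is the small-loss lemma and, specifically, getting the coupling between the two discretization levels exactly right: I need the \emph{same} coin flip to certify progress in $H^{z_i}$ when the loss happens to be large and in $H^{2^{-i}}$ when it happens to be small, and I need to verify that the margin parameters are chosen so that (a) the true hypothesis $f_0$ is never eliminated from any $H_t^z$ with $z$ at least its distance to the nearest net point, (b) in the small-loss case the $2^{-i}$-margin is loose enough that $f_0$ lies strictly inside the region whose hypotheses get killed, yet tight enough that the targeted net point near $g$ actually does get killed, and (c) the perturbation size $2z_i$ is simultaneously $\le$ half the $z_i$-width-gap needed for the large-loss halving and $\ll 2^{-i}$ so it is negligible at the coarse scale. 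Pinning down these inequalities — together with confirming that $\width(H_t^{z_i};x_t)$ and $\width(H_t^0;x_t)$ are comparable enough (up to the additive $z_i$ inflation, which is negligible since $z_i \ll 2^{-i}$) that ``median of the inflated set'' behaves like ``median of the true consistent set'' — is the delicate part; the rest is bookkeeping patterned on the $O(d\log T)$ proof already given. Finally I would remark that replacing $|\cdot|$ by a general loss satisfying the five axioms changes nothing structural: widths and margins are measured in the $\ell$-induced metric $d_\infty$, the $\epsilon$-net cardinality bound is exactly what $\Cdim$ controls, and continuity/order-consistency are precisely what let us pick perturbed guesses $y_t$ at loss exactly $2z_i$ from $m_t$ on either side.
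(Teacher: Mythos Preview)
Your plan is essentially the paper's own proof: the two progress lemmas you describe are exactly Lemmas~\ref{lemma:big_loss} and~\ref{lemma:small_loss}, and the final accounting (defining $A_i$, $B_i$, bounding $\E[A_i]\le O(\log|N_{z_i}|)$ and $\E[B_i]\le O(|N_{2^{-(i+1)}}|)$, then summing $\sum_i r_i A_i + z_i B_i$) matches the paper's proof of Theorem~\ref{main_binaryfeedback} line for line.

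One concrete slip to fix in the small-loss lemma: the perturbation must go \emph{toward} $g$, not toward the opposite side. Suppose $g(x_t) > m_t + 5r$ (with $r=2^{-(i+1)}$) and $|m_t-f_0(x_t)|\le 2z_i$. If you query $y_t=m_t+2z_i$ (toward $g$), then $f_0(x_t)\le y_t$ so $\sigma_t=+1$, and a hypothesis $f'\in H_t^{r}$ is eliminated exactly when $f'(x_t)>y_t+r$; the net point near $g$ satisfies $f'(x_t)\ge m_t+4r > m_t+2z_i+r$, so it dies. If instead you query $y_t=m_t-2z_i$ (away from $g$), the feedback is $\sigma_t=-1$, which eliminates only hypotheses with $f'(x_t)<y_t-r$ and leaves everything on $g$'s side untouched. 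Since you flagged this coupling as the delicate part, it is worth getting the sign right; once corrected, the rest of your sketch goes through exactly as in the paper. (The paper uses coarse scale $r=2^{-(i+1)}$ rather than your $2^{-i}$, but either choice works with the same constants.)
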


\begin{algorithm}[H]
\caption{{\sc General Multi-scale Steiner Potential} }
\begin{algorithmic}
\State Let $z_i = {3^{-d(i+1)}}$ for all $i$ 
\For {$t$ in $1,2, \dots , T$}
\State Adversary picks $x_t$
\State Let $i$ be the largest index such that $\width(H^0_t ; x_t) \leq 10 \cdot 2^{-i}$
\State Let $m_t$ be the median of the set $\{ f(x_t) | f \in H^{z_i}_t \}$
\State Query either $m_t + 2z_{i}$ or $m_t - 2z_{i}$ each with half probability
\EndFor

\end{algorithmic}
\end{algorithm}

Note if there does not exist an index $i$ such that $\width(H_t^0; x_t)  \leq 10 \cdot 2^{-i}$ then we must actually have $\max_{f \in H_t^0}f(x_t) = \min_{f \in H_t^0}f(x_t) = f_0(x_t)$.  In this case we know the value of $f_0(x_t)$ for sure so we simply query this value and incur $0$ loss.

\begin{lemma}\label{lemma:big_loss}
If $i$ is the index chosen in the $t$-th step and $\abs{m_t - f_0(x_t)} > 2z_i$ then $\abs{H^{z_i}_t} \leq \frac{1}{2} \abs{H^{z_i}_{t+1}}$ with probability at least $1/2$.
\end{lemma}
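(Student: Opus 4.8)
The displayed conclusion cannot hold as literally typeset: the update rule gives $H^{z_i}_{t+1}=\{f\in H^{z_i}_t:\sigma_t(y_t-f(x_t))\ge -z_i\}\subseteq H^{z_i}_t$, so $\abs{H^{z_i}_{t+1}}\le\abs{H^{z_i}_t}$ always, and $\abs{H^{z_i}_t}\le\tfrac12\abs{H^{z_i}_{t+1}}$ would force $\abs{H^{z_i}_t}=0$, which never happens since some net point $\tfrac{z_i}{2}$-close to $f_0$ (more precisely, $z_i$-close to $f_0$) is never eliminated. Hence the time indices are swapped and the statement I will actually prove is: if $i$ is the index chosen at step $t$ and $\abs{m_t-f_0(x_t)}>2z_i$, then $\abs{H^{z_i}_{t+1}}\le\tfrac12\abs{H^{z_i}_t}$ with probability at least $1/2$ over the algorithm's coin flip. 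This is the multi-scale analogue of Lemma~\ref{lemma:half_hypothesis}; the plan is to run that same argument while tracking how the margin $z_i$ in the update interacts with the perturbation size $2z_i$ of the query.

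The approach: fix the round, write $m=m_t$ for the median of $\{f(x_t):f\in H^{z_i}_t\}$ and $v=f_0(x_t)$, and note that the hypothesis $\abs{m-v}>2z_i$ means exactly one of $v>m+2z_i$ or $v<m-2z_i$ holds. Treat the case $v>m+2z_i$ (the other being symmetric under reversing the order on $\Y$). With probability $1/2$ the coin selects $y_t=m+2z_i$; since $y_t<v=f_0(x_t)$ the feedback is $\sigma_t=-1$, and the update keeps precisely those $f$ with $f(x_t)\ge y_t-z_i=m+z_i$, hence discards every $f\in H^{z_i}_t$ with $f(x_t)<m+z_i$ — in particular every $f$ with $f(x_t)\le m$. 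By the definition of the median of a finite multiset, this discarded set has size at least $\tfrac12\abs{H^{z_i}_t}$, giving $\abs{H^{z_i}_{t+1}}\le\tfrac12\abs{H^{z_i}_t}$ on this (probability-$1/2$) outcome.

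The symmetric case $v<m-2z_i$ is handled by the coin outcome $y_t=m-2z_i>v$, which yields $\sigma_t=+1$ and an update retaining only $f$ with $f(x_t)\le y_t+z_i=m-z_i$, thereby discarding all $f$ with $f(x_t)\ge m$, again at least half of $H^{z_i}_t$. I do not expect a genuine obstacle here; the only point needing care is the gap between the perturbation $2z_i$ and the pruning margin $z_i$: the ``safe half'' (the $\le m$ side, or the $\ge m$ side) sits a distance $z_i$ strictly beyond the surviving threshold $m\pm z_i$, which is exactly what lets the halving survive the $z_i$-margin and is precisely why the algorithm perturbs by $2z_i$ rather than by $z_i$.
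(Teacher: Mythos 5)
Your proof is correct and follows essentially the same argument as the paper, which simply invokes the proof of Lemma~\ref{lemma:half_hypothesis} with $1/T$ replaced by $z_i$: condition on the side of the median where $f_0(x_t)$ lies, note the favorable perturbation $m_t \pm 2z_i$ is chosen with probability $1/2$, and observe that the resulting feedback eliminates (despite the $z_i$-margin) every hypothesis on the median's other side, i.e.\ at least half of $H^{z_i}_t$. You are also right that the displayed inequality in the lemma has its time indices swapped; the intended (and proved, and later used) conclusion is $\abs{H^{z_i}_{t+1}} \leq \tfrac12 \abs{H^{z_i}_t}$.
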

\begin{proof}
The same as the proof of Lemma \ref{lemma:half_hypothesis} replacing $1/T$ by $z_i$.
\end{proof}

The new ingredient is a ``potential" argument when the loss is small:

\begin{lemma}\label{lemma:small_loss}
If $i$ is the index chosen in the $t$-th step and $\abs{m_t - f_0(x_t)} \leq 2z_i$, then with probability at least $1/2$, $\abs{H^{r}_t} \leq \abs{H^{r}_{t+1}} -1$ for $r = 2^{-(i+1)}$
\end{lemma}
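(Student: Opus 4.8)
\emph{Remark on the statement.} Since $H^r_{t+1}\subseteq H^r_t$ by construction, the inequality to be established is $|H^r_{t+1}| \le |H^r_t| - 1$, i.e.\ the coarse $r$-net loses at least one hypothesis in round $t$ — the form used in the regret calculation of Section~\ref{sec:intuition_const_T}. The plan is, under the hypothesis $|m_t - f_0(x_t)| \le 2z_i$, to exhibit with probability $\ge 1/2$ a specific $g' \in H^r_t$ that is pruned from $H^r_{t+1}$.

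Write $r = 2^{-(i+1)}$, and recall $z_i = 3^{-d(i+1)}$, so $z_i$ is vastly smaller than $r$; concretely $4z_i < 3r$ for all $i \ge 0$, $d \ge 1$. The first step bridges the two discretization scales: for every consistent hypothesis $f \in H^0_t$, its nearest point $f'$ in $N_r(\H)$ lies in $H^r_t$. Indeed $|f(x_\tau) - f'(x_\tau)| \le d_\infty(f, f') \le r$, so $\sigma_\tau(y_\tau - f'(x_\tau)) \ge \sigma_\tau(y_\tau - f(x_\tau)) - r \ge -r$ for every $\tau < t$, which is exactly the condition defining $H^r_t$. Two consequences follow. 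First, $\width(H^r_t; x_t) \ge \width(H^0_t; x_t) - 2r$; and since $i$ is the \emph{largest} index with $\width(H^0_t;x_t) \le 10\cdot 2^{-i}$, we have $\width(H^0_t;x_t) > 10\cdot 2^{-(i+1)} = 10r$, hence $\width(H^r_t;x_t) > 8r$. Second, applying the observation to $f_0$ gives $f_0' \in H^r_t$ with $|f_0'(x_t) - f_0(x_t)| \le r$, and since the two half-widths $\max_{f \in H^r_t} f(x_t) - f_0(x_t)$ and $f_0(x_t) - \min_{f \in H^r_t} f(x_t)$ sum to $\width(H^r_t;x_t) > 8r$, the larger of them exceeds $4r$. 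Without loss of generality it is the upper one (the lower case is identical, with the roles of the two coin outcomes swapped), so there is $g' \in H^r_t$ with $g'(x_t) > f_0(x_t) + 4r$; in particular $g' \ne f_0'$.

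Next I would analyze the round. The coin outcome $y_t = m_t + 2z_i$, which occurs with probability $1/2$, satisfies $f_0(x_t) \le y_t \le f_0(x_t) + 4z_i$ by the hypothesis on $m_t$. Assuming the strict inequality $y_t > f_0(x_t)$ — the only exception is the boundary $m_t = f_0(x_t) - 2z_i$, a measure-zero event one dispatches with the tiny query perturbation of Section~\ref{sec:efficiency} (or a genericity assumption on $f_0(x_t)$) — the feedback is $\sigma_t = +1$, so the round-$t$ update removes from $H^r_t$ every $f$ with $y_t - f(x_t) < -r$, i.e.\ with $f(x_t) > y_t + r$. Since $g'(x_t) > f_0(x_t) + 4r > f_0(x_t) + 4z_i + r \ge y_t + r$ (using $4z_i < 3r$), the hypothesis $g'$ is removed: $g' \in H^r_t \setminus H^r_{t+1}$, so $|H^r_{t+1}| \le |H^r_t| - 1$. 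This holds with probability $\ge 1/2$.

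The crux is the first step — the coupling of scales. The query is calibrated through the \emph{fine} net $H^{z_i}_t$ (the lemma's hypothesis is about its median $m_t$), while the conclusion concerns the \emph{coarse} net $H^r_t$. The argument routes everything through $H^0_t$, whose width in direction $x_t$ is precisely what the algorithm's index $i$ pins down: this width being $> 10r$ forces $H^r_t$ to contain a hypothesis a full $4r$ away from $f_0(x_t)$, while the extreme smallness of $z_i$ compared to $r$ guarantees that the conservative guess $m_t \pm 2z_i$ separates that hypothesis from $f_0(x_t)$ with margin $> r$ — so it is pruned, with no danger of also pruning the protected net point near $f_0$. The only other wrinkle is the degenerate feedback at $y_t = f_0(x_t)$, absorbed by the standard perturbation.
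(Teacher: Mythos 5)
Your proof is correct and takes essentially the same route as the paper's: the key coupling (any exactly consistent hypothesis has its nearest $r$-net point surviving in $H^r_t$ because the margin absorbs the $d_\infty$-error of at most $r$), the bound $\width(H^0_t;x_t) > 10r$ from the maximality of $i$, and the probability-$1/2$ coin which prunes a coarse net point lying at distance at least $4r$ from the guess; the paper locates this far point relative to $m_t$ directly from $H^0_t$ while you locate it relative to $f_0(x_t)$ via the width of $H^r_t$, which is an immaterial difference since $|m_t - f_0(x_t)| \leq 2z_i \ll r$. Your explicit treatment of the knife-edge $y_t = f_0(x_t)$ and of the sign typo in the statement are minor refinements of points the paper glosses over.
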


\begin{proof} By the choice of the index $i$, $\width(H^0_t; x_t) > 10\cdot 2^{-(i+1)} = 10 r$, so there must exist $f \in H_0^t$ such that $\abs{f(x_t) - m_t} \geq 5r$. Let's assume that $f(x_t) \geq m_t + 5 r$ (the other case is analogous). The algorithm will query $m_t + 2z_i$ with half probability and learn that $f_0(x_t) \leq m_t + 2z_i$ (by the assumption that $\abs{m_t - f_0(x_t)} \leq 2z_i$).

Such a query must eliminate some hypothesis $f'  \in H_t^{r}$ since there must be some $f' \in H_t^{r}$ with $d_{\infty}(f,f') \leq r$, so this hypothesis must satisfy $f'(x_t) \geq m_t + 4 \cdot r$ and hence will be ruled out by the information from querying $m_t + 2z_{i}$.
\end{proof}

We can now proceed to prove Theorem \ref{main_binaryfeedback}.

\begin{proof}[Proof of Theorem \ref{main_binaryfeedback}]
Let $A_i$ be the number of times that index $i$ is chosen by the algorithm and $|m_t - f_0(x_t)| > 2z_i$.  Let $B_i$ be the number of times that index $i$ is chosen and $|m_t - f_0(x_t)| \leq 2z_i$.  Combining the previous two claims (Lemmas \ref{lemma:big_loss} and \ref{lemma:small_loss}), we have that
\begin{align*}
\E[A_i] \leq 2\log_2 N_{z_i}(H) \\
\E[B_i] \leq 2N_{r_{i+1}}(H)
\end{align*}
where $r_i = 2^{-i}$.
For each query with index $i$, the loss is at most $10 r_i$.  Also for queries with $|m_t - f_0(x_t)| \leq 2z_i$, the loss is at most $2z_i$.  Thus the total loss is at most $\sum_i (10r_iA_i + 2z_iB_i)$.  It remains to note that
\begin{align*}
\E\left[\sum_i (10r_iA_i + 2z_iB_i)\right] \leq \left( \sum_{i=1}^{\infty} 20r_i\log_2 N_{z_i}(H) + 4z_iN_{v_{i+1}}(H) \right) = O(d^2).
\end{align*}
\end{proof}

\begin{corollary}
In the Contextual Search with Symmetric Loss, if the hidden vector $v \in \R^d$ is guaranteed to be $s$-sparse then there is an algorithm with total regret $O(s^2 \log^2(d))$.
\end{corollary}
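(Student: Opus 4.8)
The plan is to invoke Theorem~\ref{main_binaryfeedback} directly, since it says the \textsc{Multi-scale Steiner Potential} algorithm achieves expected regret $O(d^2)$ in the binary feedback model whenever $d = \Cdim(\H)$. So the only thing to check is that the hypothesis class for $s$-sparse contextual search with symmetric loss has covering dimension $O(s \log d)$, and then substitute $d \leftarrow \Cdim(\H) = O(s\log d)$ into the $O(d^2)$ bound to get $O(s^2 \log^2 d)$.

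\textbf{Step 1: Identify the class and its covering dimension.} The class is $\H = \{f_v : v \in \B,\ \norm{v}_0 \leq s\}$ with loss $\ell(y_1,y_2) = \abs{y_1-y_2}$, exactly the class of Example~\ref{ex:sparse}. There it is shown that $\Cdim(\H) = O(s\log d)$: for each choice of support (there are $\binom{d}{s} \leq s^d$ of them) one uses the $\eps$-net construction of Example~\ref{contextual_search} on the $s$-dimensional ball, giving an $\eps$-net of total size $\binom{d}{s}(1 + 2/\eps)^s$, whence $\log \abs{N_\eps(\H)} = O(s\log d + s\log(1/\eps))$ and dividing by $\log(1/\eps)$ (for $\eps \le 1/2$) yields $\Cdim(\H) = O(s\log d)$, treating $s$ as constant. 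I would just cite Example~\ref{ex:sparse} for this.

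\textbf{Step 2: Apply the general theorem.} Since the symmetric loss $\ell(y_1,y_2)=\abs{y_1-y_2}$ satisfies all the axioms required in Section~\ref{sec_setup} (reflexive, symmetric, triangle inequality, order consistency, continuity) and maps into $[0,1]$ on the relevant range (guesses can be clipped to $[-1,1]$ since $f_v$ takes values in $[-1,1]$ for $v,x \in \B$), Theorem~\ref{main_binaryfeedback} applies to $\H$. It gives expected regret $O(\Cdim(\H)^2) = O((s\log d)^2) = O(s^2\log^2 d)$. This completes the proof.

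\textbf{Main obstacle.} There is essentially no obstacle — the corollary is a one-line consequence of Theorem~\ref{main_binaryfeedback} and Example~\ref{ex:sparse}. The only mild subtlety worth a sentence is that the binary feedback protocol for $s$-sparse contextual search is exactly the one in Section~\ref{sec_setup} (an adversary picks $f_0 \in \H$, i.e. an $s$-sparse $v_0$, contexts $x_t \in \B$ arrive, and feedback reveals $\sign(y_t - \dot{v_0}{x_t})$), so no reduction or adaptation of the algorithm is needed; one runs \textsc{General Multi-scale Steiner Potential} verbatim with the $\eps$-nets of $\H$ from Example~\ref{ex:sparse}.
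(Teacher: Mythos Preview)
Your proposal is correct and takes exactly the same approach as the paper: the paper's proof is the single line ``By combining Theorem~\ref{main_binaryfeedback} and Example~\ref{ex:sparse},'' which is precisely what you do (with some extra justification spelled out).
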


\begin{proof}
By combining Theorem \ref{main_binaryfeedback} and Example \ref{ex:sparse}.
\end{proof}

\begin{remark}
To adjust our proof to deal with any loss functions satisfying the assumptions outlined in Section \ref{sec_setup} we make the following adjustments.  We replace $\max_{f \in H_t^0}f(x_t) - \min_{f \in H_t^0}f(x_t)$ with $L(\max_{f \in H_t^0}f(x_t),\min_{f \in H_t^0}f(x_t))$.  Also, we replace $m_t + 2z_{i_t}$ with any $y \in \mathcal{Y}$ such that $m_t < y$ and $L(m_t,y) = 2z_{i_t}$ and similar for $m_t - 2z_{i_t}$ (note this $y$ exists by the continuity of the loss function).  
\end{remark}

\subsection{Impossibility Results for Pricing Loss}\label{subsec:impossibility_pricing}
The results from the previous section apply for loss functions that are somewhat well-behaved i.e. satisfying the conditions outlined at the beginning of Section \ref{sec_setup}.  Clearly, the pricing loss function does not satisfy these assumptions.  While one may hope to guarantee $\poly(d)\log \log T$ total loss (where $d$ is the covering dimension of the hypothesis class), here we show that for the pricing loss function, it is actually impossible to guarantee regret in this case that is polynomial in the covering dimension.

\begin{claim}
Let $\B$ be the unit ball in $\R^d$.  Consider the domain $\mathcal{X} = \B$ and the hypothesis class $\mathcal{H} = \{f_v |  v \in \R^d,  ||v||_0 = 1, ||v||_2 \leq 1 \}$ where $f_v(x) = \dot{v}{x}$  (note this is the same as the hypothesis class in Example \ref{ex:sparse} with $s = 1$).  Any learner must incur at least $\Omega\left(\sqrt{d}\right)$ regret over $d^2$ rounds with the pricing loss function. 
\end{claim}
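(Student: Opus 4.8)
The plan is to exhibit an adversary strategy that forces $\Omega(\sqrt d)$ regret against the $1$-sparse linear pricing instance. The key structural observation is that although the hypothesis class has covering dimension only $O(\log d)$, it contains $2d$ ``atomic'' hypotheses $f_{\pm e_j}$ for $j \in [d]$ (scaled standard basis vectors), and these are mutually distinguishable only by querying along the corresponding coordinate direction. So the learner effectively faces $d$ independent one-dimensional pricing subproblems glued together, but — crucially for the pricing loss — getting a subproblem ``wrong'' (overpricing) costs a constant, not something that shrinks with the round count. The idea is to have the adversary first commit (adversarially, or via a hard distribution) to the true hypothesis being $f_{v}$ where $v = \beta e_{j^\star}$ for some coordinate $j^\star$ and some scale $\beta$, but reveal nothing about $j^\star$ for many rounds.

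Concretely, I would proceed as follows. First, fix a target magnitude such as $\beta = 1$ (or a value to be tuned), and let the adversary play each context $x_t = e_{j}$ cycling through $j = 1, 2, \dots, d$. On the round where $x_t = e_j$, the learner must guess a price $y_t$ for the one-dimensional quantity $\langle v, e_j\rangle = \beta \cdot \mathbf{1}\{j = j^\star\}$, which is either $\beta$ (if $j = j^\star$) or $0$ (if $j \neq j^\star$). The learner does not yet know which. Second, analyze the learner's dilemma on these $d$ ``first-touch'' rounds: if on the round $x_t = e_j$ the learner guesses $y_t \geq \gamma$ for some threshold $\gamma \in (0, \beta)$, then in the event $j = j^\star = j$ the price is at most $\beta$ but in the (a priori much more likely) event $j \ne j^\star$ the price is $0$, so $y_t > 0$ triggers a no-purchase and the learner eats pricing loss $\ell(y_t, 0) = 0 - 0 = 0$... wait — re-examine: with true value $0$ and guess $y_t > 0$ we get no purchase, and $\ell(y_t, 0) = 0 - y_t\cdot\mathbf 1\{y_t \le 0\} = 0$. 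So the real cost of overpricing comes only on the $j^\star$ round. Instead the adversary should make the true value on every probed coordinate equal to a common small positive value and force the learner to choose between leaving $\Omega(\beta)$ on the table (underpricing, incurring loss $\approx \beta$ on each of the $d$ rounds, total $\Omega(d\beta)$) versus risking overpricing. The right framing: let the true hypothesis be $f_{\beta e_{j^\star}}$ but have the adversary also be free to have answered feedback consistently with $v = \beta e_{j^\star}$ for many choices of $j^\star$ simultaneously until the learner overprices on some coordinate. I would use a potential/charging argument: on each of the $d$ coordinates, either the learner's first guess is below $\beta/2$ — contributing pricing loss $\geq \beta/2$ to regret since the optimal revenue there is $\beta$ and the buyer with value $\beta$ yields revenue $< \beta/2$ — giving total $\Omega(d\beta)$; or the learner guesses at least $\beta/2$ on at least one coordinate where the adversary declares the true value is $0$, incurring... (still $0$ under the stated loss).

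Given this subtlety, the cleaner route — and the one I would actually commit to — is a direct reduction to the Kleinberg–Leighton one-dimensional lower bound combined with an information-theoretic ``search for the needle'' argument. Set $\beta$ so that $\beta^2 \cdot d \approx 1$, i.e. $\beta = 1/\sqrt d$ (this respects $\|v\|_2 \le 1$ trivially and is what produces the $\sqrt d$ bound). The adversary picks $j^\star$ uniformly and plays $x_t = e_{j^\star}$ for all $d^2$ rounds on a *single* coordinate, but the learner does not know $j^\star$; the only way to learn $j^\star$ is... no, with a single coordinate there's nothing to learn. Instead: the adversary plays, over the first $d$ rounds, the contexts $e_1, \dots, e_d$; on coordinate $j$ the true value is $\beta$ if $j \le j^\star$ and drops to $0$ after, with $j^\star$ uniform — encoding $j^\star$ forces the learner to ``binary-search'' and the pricing-style penalty for overshooting $j^\star$ (guessing $y_t = \beta$ on a coordinate $j > j^\star$ with true value $0$)... again gives loss $0$. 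I therefore believe the intended argument is: the true value on every probed coordinate is the *same* unknown scalar $u \in [0,1]$ drawn so that the one-dimensional pricing problem is hard, AND the learner's feedback on coordinate $j$ is informative about $u$ only because $\langle \beta e_{j^\star}, e_j\rangle = \beta$ exactly when $j = j^\star$; so the learner must identify $j^\star$ (costing it $\Omega(d)$ rounds of ``blind'' guessing, each either forfeiting $\Omega(\beta)$ revenue or risking a no-sale that also forfeits $\Omega(\beta)$), and then still solve a $1$-D pricing problem. Summing the $d$ identification rounds at $\Omega(\beta) = \Omega(1/\sqrt d)$ each yields $\Omega(\sqrt d)$ regret within the $d^2$-round budget, which is exactly the claim. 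The main obstacle — and where I expect the real work to lie — is making the ``each blind round forces $\Omega(\beta)$ loss'' step airtight: one must argue that because the learner cannot tell whether the current coordinate is the special one $j^\star$ (where value $= \beta$) or an ordinary one (where value $= 0$), any fixed guess $y_t$ is either $> 0$ (no purchase when the coordinate is ordinary — but zero pricing loss there, and pricing loss $\beta - 0 = \beta$ on the *other* $\Theta(d)$ ordinary coordinates is not incurred either)... so the adversary must instead assign true value $\beta$ to *all* not-yet-probed coordinates and collapse each to $0$ only once probed, with $j^\star$ being the one coordinate that stays at $\beta$; then on coordinate $j$ the learner either underprices (loss $\ge \beta/2$, over $d/2$ such coordinates gives $\Omega(d\beta) = \Omega(\sqrt d)$) or overprices on an ordinary coordinate (no purchase, loss $0$, but this is fine for the adversary) — and we separately charge the $j^\star$ coordinate over the remaining $d^2 - d$ rounds using the $1$-D Kleinberg–Leighton $\Omega(\log\log)$ bound, which is lower-order. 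Pinning down this assignment so that it is realizable by a genuine $f_v \in \H$ throughout (consistency of all feedback with a single $1$-sparse $v$) is the crux; the rest is a routine summation.
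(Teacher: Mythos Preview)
Your proposal does not arrive at a working construction, and the obstacle you keep running into is real rather than a technicality. With contexts $x_t = e_j$ and a $1$-sparse $v$, the true value on every ``wrong'' coordinate is exactly $0$, and the pricing loss $\ell(y_t, 0) = 0 - y_t \cdot \mathbf{1}\{y_t \le 0\} = 0$ for any $y_t \ge 0$. So the learner can probe each coordinate by guessing $y_t = 1$ (or any positive value) and pay \emph{nothing} on the $d-1$ wrong coordinates; there is no $\Omega(\beta)$-per-round identification cost to charge. Your attempted fixes --- dynamically ``collapsing'' coordinates from $\beta$ to $0$, or having all not-yet-probed coordinates carry value $\beta$ --- are not realizable by a single fixed $f_v \in \mathcal{H}$, which you yourself flag at the end. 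The proposal is a sequence of abandoned attempts, not a proof.

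The missing idea, which the paper supplies, is to use contexts with \emph{all} coordinates nonzero: take $x_1 = \bigl(\tfrac{1}{\sqrt 2}, \tfrac{1}{\sqrt{2(d-1)}}, \dots, \tfrac{1}{\sqrt{2(d-1)}}\bigr)$ and its cyclic permutations, with $v$ drawn uniformly from $\{e_1, \dots, e_d\}$. Now $\langle v, x_j\rangle$ is either $1/\sqrt 2$ (if the large coordinate of $x_j$ matches $v$) or $1/\sqrt{2(d-1)}$ (otherwise), but never zero. This creates the genuine dilemma: a guess $y_t \le 1/\sqrt{2(d-1)}$ always results in a purchase and hence reveals nothing, while a guess $y_t > 1/\sqrt{2(d-1)}$ is informative but, with probability $(d-1)/d$, triggers a no-purchase with loss $1/\sqrt{2(d-1)} \approx 1/\sqrt d$. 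Recursing on the reduced $(d-1)$-hypothesis problem gives expected regret $\tfrac{1}{d}\sum_{k=1}^{d-1} \sqrt{k}/\sqrt{2} = \Omega(\sqrt d)$. The design principle you were missing is that the contexts, not the hidden $v$, should carry the $\Theta(1/\sqrt d)$ scale, so that the ``small'' outcome is bounded away from zero and overpricing is never free.
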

\begin{remark}
Note that the covering dimension of $\mathcal{H}$ is $O(\log d)$ so the above claim implies that there is an exponential separation between the regret (in the pricing loss setting) and covering dimension.
\end{remark}
\begin{proof}
Choose $v$ uniformly at random from the $d$ points $(1,0, \dots , 0), (0,1, \dots , 0), \dots (0, \dots , 0, 1)$ and let the true function be $f_v$.  Now let $$x_1 = \left( \frac{1}{\sqrt{2}}, \frac{1}{\sqrt{2(d-1)}}, \frac{1}{\sqrt{2(d-1)}}, \dots , \frac{1}{\sqrt{2(d-1)}} \right)$$
Let $x_2, \dots , x_d$ be obtained by cyclically permuting the coordinates of $x_1$.  Now the adversary randomly permutes $x_1, \dots , x_d$ to obtain a sequence $x_{i_1}, \dots , x_{i_d}$ and presents the points in that order to the learner.  Note the learner gains no information when it guesses a value $y \leq \frac{1}{\sqrt{2(d-1)}}$.  If all of the learner's guesses through the first $d$ rounds are at most $\frac{1}{\sqrt{2(d-1)}}$ then the learner incurs loss at least $\frac{1}{3}$ over the first $d$ rounds and has gained no information.  The adversary can then repeat this process.
\\\\
Now it remains to consider when the learner guesses a value above $\frac{1}{\sqrt{2(d-1)}}$ at some point within the first $d$ rounds.  Say the first time this occurs is at round $j$.  With $\frac{d-1}{d}$ probability, the learner incurs $\frac{1}{\sqrt{2(d-1)}}$ loss this round and is able to eliminate one of the $d$ possible hypotheses.  The problem then effectively reduces to $d-1$ dimensions.  Repeating this argument inductively, we see that the adversary can guarantee regret
\[
\frac{d-1}{d}\left(\frac{1}{\sqrt{2(d-1)}} + \frac{d-2}{d-1} \left(\frac{1}{\sqrt{2(d-2)}} +  \dots     \right) \right) = \frac{1}{\sqrt{2}} \left( \frac{\sqrt{d-1}}{d} + \frac{\sqrt{d-2}}{d} + \dots  \right) = \Omega\left(\sqrt{d}\right).
\]

\end{proof}

\section{Noisy Feedback}

We now consider the binary feedback model with noise, where each round the feedback is (independently) flipped with probability $p < 1/2$. We will no longer be able to eliminate a hypothesis based on the feedback since it is always possible that the feedback was flipped, instead we will keep a weight function expressing the likelihood of each hypothesis given observations.

We start by giving a $O(d \log T)$ algorithm for general hypothesis classes. The algorithmic techniques will be standard and inspired by both Bayesian Inference and by algorithms in the Multiplicative Weight Updates family (such a Hedge or Weighted Majority). The analysis, however, will deviate from the usual analysis of multiplicative weights given the type of feedback. We don't have access to the actual loss of the arm we pulled nor an unbiased estimator thereof. This will require both a new potential function as well as a modification of the multiplicative weights framework: instead of sampling from the distribution induced by the weights, we will get the (weighted) median advice on what is the right guess for this context.

Our main innovation is in Section \ref{subsec:polyd_noisy} where we obtain an algorithm with $O(\poly(d))$ regret (independent of $T$) for the linear contextual search case. Instead of one weight function, we keep a family of weight functions. Each weight function will correspond to different levels of uncertainty about the inner product of the hidden point with the current context. By using geometric techniques to analyze the stochastic evolution of the weights, we show that they must concentrate near the true hypothesis. 

\subsection{ $O(d \log T)$ algorithm for a general hypothesis class }\label{sec:dlogt_noisy}

 The usual approach in Bayesian inference is to start with a uniform prior over the set of hypotheses and given each observation, compute the posterior. It is important to emphasize that the true hypothesis $f_0$ in our model is still chosen adversarially. The Bayesian inference only serves to provide the intuition.

The algorithm will be as follows: as before we will start with a discretized version of the hypothesis class  $N_{T^{-2}}(H)$ which we will call $N$ for short in this section. We will keep a weight function $w_t: N \rightarrow \R$ which roughly expresses the likelihood that a hypothesis is close to the true hypothesis. Our guess will be a perturbed version of the weighted median $m_t$ of the set $\{f(x_t); f \in N\}$. Formally, the weighted median $m_t$ is a number that satisfies:
$$\sum_{f \in N; f(x_t) \geq m_t} w_t(f) \geq \frac{1}{2} \quad \text{and} \quad
\sum_{f \in N; f(x_t) \leq m_t} w_t(f) \geq \frac{1}{2} $$
After receiving the feedback, we will update the weights in the following way (we will choose $y_t$ at random such that $y_t = f(x_t)$ occurs with zero probability):
$$\hat{w}_{t+1}(f) = \left\{ \begin{aligned} 
    & p' \cdot w_t(f) \quad  & & \text{if } \sigma_t (y_t - f(x_t)) < 0 \\
    & (1-p') \cdot w_t(f) \quad  & & \text{if } \sigma_t (y_t - f(x_t)) > 0 
  \end{aligned} \right. $$
for some parameter $p'$ and re-normalizing afterwards:
$$w_{t+1}(f) = \frac{\hat{w}_{t+1}(f)}{\sum_{f' \in N} \hat w_{t+1}(f')}$$
In standard Bayesian inference, we would normally use $p = p'$. For this algorithm, we will choose any parameter $p'$ with $p < p' < 1/2$. The actual choice of parameter will only affect the constants. Also note that unlike Bayesian inference we don't choose the guess with largest likelihood but a perturbed version of the median.

\begin{algorithm}[H]
\caption{{\sc Single-Scale Steiner Potential with Noise}}
\label{alg:general_noise}
\begin{algorithmic}
\State Initialize $w_1(f) = 1/\abs{N}$ for all $f \in N := N_{T^{-2}}(H)$.
\For {$t$ in $1,2, \dots , T$}
\State Adversary picks $x_t$
\State Let $m_t$ be the weighted median $\{f(x_t); f \in N\}$ with respect to weights $w_t$.
\State Choose $y_t \in \left[ m_t - \frac{1}{T}, m_t + \frac{1}{T} \right]$ uniformly at random
\State Choose $\hat w_{t+1}(f) = p' \cdot w_t(f)$ if $\sigma_t(y_t - f(x_t)) < 0$ and  $\hat w_{t+1}(f) = (1-p') \cdot w_t(f)$ otherwise
\State Normalize the weights: $w_{t+1}(f) = \hat w_{t+1}(f) / [\sum_{f' \in N}  \hat w_{t+1}(f')] $

\EndFor
\end{algorithmic}
\end{algorithm}

\begin{theorem}\label{noisygeneral}
In the noisy feedback model, the above algorithm incurs expected regret $O(d \log T)$. 
\end{theorem}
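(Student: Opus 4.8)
I would track the potential $\Psi_t := -\log w_t(f^\star)$, where $f^\star \in N := N_{T^{-2}}(\H)$ is a net point with $d_\infty(f^\star, f_0) \le T^{-2}$ (one exists by definition of the net). Since $w_1 \equiv 1/|N|$ and $|N| \le \max(T^2,2)^{d}$ for $d = \Cdim(\H)$, we have $\Psi_1 \le 2d\log T + O(1)$, and $\Psi_t \ge 0$ for all $t$ because $w_t(f^\star) \le 1$. Let $\mathcal{F}_t$ be everything revealed through the adversary's choice of $x_t$, so that $w_t$ and $m_t$ are $\mathcal{F}_t$-measurable while the perturbation $y_t$ and the flip bit of round $t$ are independent of $\mathcal{F}_t$ and of each other. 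Call round $t$ \emph{cheap} if $\abs{m_t - f_0(x_t)} \le 3/T$ and \emph{expensive} otherwise (both events are $\mathcal{F}_t$-measurable). In a cheap round the loss is at most $\abs{y_t - m_t} + \abs{m_t - f_0(x_t)} \le 4/T$, so all cheap rounds together contribute $O(1)$; since every round's loss is $O(1)$, it suffices to show $\E[\#\{\text{expensive rounds}\}] = O(d\log T)$, which I will get from a drift inequality for $\Psi_t$.

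The key step is the one-step drift on the event that $f^\star$ is consistent with the \emph{true} (pre-flip) feedback. Let $D_t$ be the set of hypotheses consistent with the true feedback; since $\abs{f^\star(x_t) - f_0(x_t)} \le T^{-2}$, we have $f^\star \in D_t$ whenever $\abs{y_t - f_0(x_t)} > T^{-2}$. Conditioning on $\mathcal{F}_t$ and $y_t$, writing $u_t := w_t(D_t^c)$ for the mass of hypotheses disagreeing with the true feedback and using $w_{t+1}(f^\star)/w_t(f^\star) = c_t(f^\star)/\sum_f c_t(f)w_t(f)$ with $c_t(\cdot)\in\{p',1-p'\}$, a short computation of the two cases (flip / no flip) gives, on the event $f^\star\in D_t$,
\[ \E\big[\Psi_{t+1}-\Psi_t \,\big|\, \mathcal{F}_t,y_t\big] = h(u_t), \qquad h(u):=(1-p)\log\!\Big(1-\tfrac{(1-2p')u}{1-p'}\Big)+p\log\!\Big(1+\tfrac{(1-2p')u}{p'}\Big). \]
One checks that $h$ is concave on $[0,1]$ with $h(0)=0$ and $h'(0)=(1-2p')\tfrac{p-p'}{p'(1-p')}<0$; the hypothesis $p'>p$ is exactly what makes $h'(0)$ negative, so $h$ is decreasing and $h(u)<0$ for $u\in(0,1]$. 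Off the event $f^\star\in D_t$ I would only use the crude bound $\abs{\Psi_{t+1}-\Psi_t}\le \log\tfrac{1-p'}{p'}$.

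Applied to the two round types this gives: in an expensive round $\abs{y_t-f_0(x_t)}\ge \abs{m_t-f_0(x_t)}-\abs{y_t-m_t}> 3/T-1/T> T^{-2}$ for every $y_t$, so $f^\star\in D_t$ always and the drift equals $h(u_t)$; moreover the weighted-median property $\sum_{f:f(x_t)\le m_t}w_t(f)\ge\tfrac12$ (in the case $f_0(x_t)>m_t$; the other case is symmetric) together with $\Pr_{y_t}[y_t>f(x_t)]\ge\tfrac12$ for each such $f$ yields $\E[u_t\mid\mathcal{F}_t]\ge\tfrac14$, hence by concavity and monotonicity of $h$,
\[ \E\big[\Psi_{t+1}-\Psi_t\,\big|\,\mathcal{F}_t\big]\le h\big(\E[u_t\mid\mathcal{F}_t]\big)\le h(\tfrac14)=:-\gamma<0. \]
In a cheap round $y_t$ lands within $T^{-2}$ of $f_0(x_t)$ with probability at most $T^{-1}$ (a subinterval of length $2T^{-2}$ inside one of length $2/T$); off this event $f^\star\in D_t$ and the drift is $h(u_t)\le 0$, on it the drift is at most $\log\tfrac{1-p'}{p'}$, so $\E[\Psi_{t+1}-\Psi_t\mid\mathcal{F}_t]\le \tfrac1T\log\tfrac{1-p'}{p'}$ in a cheap round.

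Telescoping, $0\le\E[\Psi_{T+1}]=\Psi_1+\sum_t\E[\Psi_{t+1}-\Psi_t]$, and using that the round type is $\mathcal{F}_t$-measurable, gives $\gamma\,\E[\#\{\text{expensive rounds}\}]\le \Psi_1+T\cdot\tfrac1T\log\tfrac{1-p'}{p'}=2d\log T+O(1)$, so $\E[\#\{\text{expensive rounds}\}]=O(d\log T)$; combined with the $O(1)$ cheap-round loss this bounds the expected regret by $O(d\log T)$. I expect the main obstacle to be the drift identity for expensive rounds: splitting the flipped and unflipped cases separately loses too much and only works for $p$ bounded well below $1/2$, while averaging over the \emph{independent} flip first turns the assumption $p'>p$ into a genuine negative drift $h(u_t)<0$ proportional to the ``wrong-direction'' mass $u_t$ --- everything else (the bound $\E[u_t\mid\mathcal{F}_t]\ge\tfrac14$, concavity of $h$, and the cheap-round bookkeeping) is routine. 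For a general loss satisfying the axioms of Section~\ref{sec_setup} one replaces $\abs{\cdot-\cdot}$ by $\ell$ throughout and chooses $y_t$ with $\ell(m_t,y_t)\le 1/T$, as in the earlier remark.
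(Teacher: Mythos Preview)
Your proof is correct, and the core strategy---track the weight $w_t(f^\star)$ of a net point $T^{-2}$-close to $f_0$, show it drifts upward on expensive rounds and nearly stays put on cheap rounds, then use $w_t(f^\star)\le 1$ to bound the number of expensive rounds---is the same as the paper's. The execution differs in two respects. First, you work with the log-potential $\Psi_t=-\log w_t(f^\star)$ and obtain an additive drift $h(u_t)$ that is concave in the wrong-side mass $u_t$, so you need the Jensen step $\E[h(u_t)]\le h(\E[u_t])\le h(1/4)$; the paper instead tracks $1/w_t(f_1)$ and obtains a factor $(1-cW_t^-)$ that is \emph{linear} in $W_t^-$, so averaging over the two choices of $y_t$ is immediate. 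Second, you conclude by telescoping $\E[\Psi_t]$ directly, whereas the paper builds a supermartingale $Y_t=s_t/w_t(f_1)$ (with $s_t$ a deterministic scaling sequence) and applies Markov's inequality, which yields the stronger statement that the number of expensive rounds exceeds $\Omega(d\log T)$ with probability at most $O(1/T)$. For the expected-regret bound the theorem actually asks for, your telescoping route is the more direct of the two.
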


We will denote the true hypothesis by $f_0$ as usual. Since $f_0$ might not belong to the discretized set $N$, we will control the weight that is placed on the closest hypothesis. Let $f_1$ be a hypothesis in $N$ with $d_{\infty}(f_1, f_0) \leq T^{-2}$.

\begin{lemma}\label{lemma:weight_increase_expectation}
If $f_1(x_t), f_0(x_t)$ are on the same side of $y_t$ and $W_t^-$ is the total weight mass on the other side of $y_t$, then we have the following equality:

$$\E\left[ \frac{1}{w_{t+1}(f_1)} \right] = \frac{1 - c W_t^-}{w_t(f_1)}.$$

Here the expectation is taken over the randomness in the feedback, and $c$ is a constant satisfying $0 < c < 1$ given by

$$c = (p'-p) \left[\frac{1-p'}{p'} - \frac{p'}{1-p'}\right].$$
\end{lemma}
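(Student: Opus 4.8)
The plan is to compute the expectation directly from the definition of the weight update, exploiting the fact that the normalization factor is exactly the probability that the feedback points in a given direction. First I would set up notation: let $W_t^+$ be the total weight on the same side of $y_t$ as $f_1$ and $f_0$, and $W_t^- = 1 - W_t^+$ the weight on the other side. Because $f_0(x_t)$ lies on the $W_t^+$ side, the feedback $\sigma_t$ is ``correct'' (agreeing with the side of $f_0$) with probability $1-p$ and ``flipped'' with probability $p$. I would then observe that in either case the normalization denominator $Z_{t+1} := \sum_{f'} \hat w_{t+1}(f')$ takes one of two values: if $\sigma_t$ reports that $f_0$ is on the $W_t^+$ side, then hypotheses on that side get multiplied by $(1-p')$ and the others by $p'$, giving $Z_{t+1} = (1-p')W_t^+ + p' W_t^-$; in the flipped case $Z_{t+1} = p' W_t^+ + (1-p') W_t^-$.

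Next I would track $w_{t+1}(f_1)$ in each of the two feedback outcomes. Since $f_1$ is on the $W_t^+$ side, in the ``correct feedback'' case (probability $1-p$) we have $\hat w_{t+1}(f_1) = (1-p') w_t(f_1)$, so $w_{t+1}(f_1) = (1-p') w_t(f_1) / Z_{t+1}^{\mathrm{corr}}$ and hence $1/w_{t+1}(f_1) = Z_{t+1}^{\mathrm{corr}} / [(1-p') w_t(f_1)]$; in the ``flipped'' case (probability $p$) we get $\hat w_{t+1}(f_1) = p' w_t(f_1)$ and $1/w_{t+1}(f_1) = Z_{t+1}^{\mathrm{flip}} / [p' w_t(f_1)]$. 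Then I would take the expectation over the feedback:
\begin{align*}
\E\left[\frac{1}{w_{t+1}(f_1)}\right] &= \frac{1}{w_t(f_1)}\left[ (1-p)\cdot\frac{(1-p')W_t^+ + p' W_t^-}{1-p'} + p\cdot\frac{p' W_t^+ + (1-p') W_t^-}{p'} \right].
\end{align*}
Using $W_t^+ = 1 - W_t^-$, I would expand the bracket, collect the coefficient of $1$ and the coefficient of $W_t^-$, and simplify. The constant term should come out to $(1-p)\cdot 1 + p \cdot 1 = 1$ (since the $W_t^+$ coefficients are $(1-p)$ and $p$ which sum to $1$), and the $W_t^-$ coefficient, after combining the $-W_t^+$ and $+W_t^-$ contributions, should reduce to $-(p'-p)[(1-p')/p' - p'/(1-p')]$, i.e. exactly $-c$. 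That yields the claimed identity $\E[1/w_{t+1}(f_1)] = (1 - c W_t^-)/w_t(f_1)$.

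Finally I would verify $0 < c < 1$: positivity follows since $p' > p$ and $(1-p')/p' > p'/(1-p')$ because $p' < 1/2$ implies $1-p' > p'$; the upper bound $c < 1$ follows because $p'-p < 1/2$ and $(1-p')/p' - p'/(1-p') < (1-p')/p' < $ a bounded quantity — more carefully one checks $c = (p'-p)(1-2p')/(p'(1-p'))$ after combining the fractions, and bounds each factor. I do not expect any genuine obstacle here; the only mild subtlety is bookkeeping the two feedback cases correctly and being careful that ``same side'' refers to the side of $y_t$ containing $f_0$, so that the probability of the feedback pointing that way is $1-p$ rather than $p$. The algebraic simplification to isolate precisely the stated form of $c$ is the step most prone to sign errors, so I would double-check it by plugging in a symmetric special case (e.g. $p' \to 1/2$, where $c \to 0$, consistent with the formula).
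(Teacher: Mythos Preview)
Your proposal is correct and follows essentially the same approach as the paper: set up $W_t^+, W_t^-$, compute $w_{t+1}(f_1)$ in the two feedback cases, take the expectation of the reciprocal, and simplify using $W_t^+ = 1 - W_t^-$ to isolate the coefficient $-c$ of $W_t^-$. The paper's verification that $0<c<1$ is slightly cleaner (it just observes $1-c = \tfrac{p'}{1-p'}(1-p) + \tfrac{1-p'}{p'}p > 0$), but otherwise the arguments are identical.
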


\begin{proof}
Assume wlog that $f_1(x_t), f_0(x_t) \geq y_t$ and let $W_t^- = \sum_{\substack{f \in N; f(x_t) < y_t}} w_t(f)$ be the weight on hypotheses in the opposite direction and $W_t^+ = 1-W_t^-$ the remaining weight. Since $y_t$ is chosen from a continuous distribution, the event that some $f$ has $f(x_t) = y_t$ occurs with zero probability. With probability $1-p$ we have
$$w_{t+1}(f_1) = \frac{(1-p') \cdot  w_t(f_1)}{  (1-p') \cdot W_t^+ + p' \cdot W_t^- } $$
with the remaining probability $p$ we have:
$$w_{t+1}(f_1) = \frac{p' \cdot w_t(f_1)}{  p' \cdot  W_t^+ +  (1-p') \cdot W_t^- }$$
Averaging them we have:
$$\begin{aligned} \E\left[ \frac{1}{w_{t+1}(f_1)} \right] & = (1-p) \cdot \frac{  (1-p') \cdot W_t^+ + p' \cdot W_t^- }{(1-p') \cdot  w_t(f_1)} + p \cdot \frac{  p' \cdot  W_t^+ +  (1-p') \cdot W_t^- }{p' \cdot w_t(f_1)}  \\ 
& = \frac{1}{w_t(f_1)} \left[ W_t^+ + \left( \frac{p'}{1-p'}(1-p) + \frac{1-p'}{p'}p \right) W_t^- \right] = \frac{1-c W_t^-}{w_t(f_1)}
\end{aligned}$$
since $0 < \frac{p'}{1-p'}(1-p) + \frac{1-p'}{p'}p = 1 + (p'-p) \left[\frac{p'}{1-p'} - \frac{1-p'}{p'} \right] = 1 - c.$
\end{proof}

\begin{lemma} In expectation over both the randomness in the choice of $y_t$ and the randomness in the feedback, we have:
$$
\E\left[ \frac{1}{w_{t+1}(f_1)} \right] \leq \left(1 + \frac{c'}{T} \right) \cdot \frac{1}{w_t(f_1)} \quad \text{for} \quad c' = \frac{1-p'}{2p'} 
$$
\end{lemma}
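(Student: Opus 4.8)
The plan is to freeze the round: condition on $x_t$, and hence on $w_t$, on the weighted median $m_t$, and on all the values $\{f(x_t)\}_{f\in N}$, and then average over the two independent sources of randomness in round $t$, namely the perturbation $y_t\in[m_t-1/T,\,m_t+1/T]$ and the (possibly flipped) feedback $\sigma_t$. As in the previous proofs I would discard the probability-zero event that $y_t$ coincides with some $f(x_t)$. The governing dichotomy is whether $f_1(x_t)$ and $f_0(x_t)$ lie on the same side of $y_t$: call this event $E_{\mathrm{same}}$ and its complement $E_{\mathrm{diff}}$. The point of the statement is that $E_{\mathrm{diff}}$ is rare and, when it happens, only costs a bounded multiplicative factor.

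On $E_{\mathrm{same}}$, Lemma~\ref{lemma:weight_increase_expectation} applies directly and yields $\E_{\sigma_t}\!\left[1/w_{t+1}(f_1)\right] = (1-cW_t^-)/w_t(f_1)\le 1/w_t(f_1)$, since $c\in(0,1)$ and $W_t^-\ge 0$. So $E_{\mathrm{same}}$ contributes at most $1/w_t(f_1)$ to the overall expectation.

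For $E_{\mathrm{diff}}$ I would establish two facts. First, $E_{\mathrm{diff}}$ forces $y_t$ to lie strictly between $f_0(x_t)$ and $f_1(x_t)$; since $f_1\in N=N_{T^{-2}}(\H)$ was chosen with $d_{\infty}(f_1,f_0)\le T^{-2}$, we have $|f_1(x_t)-f_0(x_t)|\le T^{-2}$, and as $y_t$ is uniform on an interval of length $2/T$ this gives $\Pr[E_{\mathrm{diff}}]\le \tfrac{T^{-2}}{2/T}=\tfrac{1}{2T}$. Second, on $E_{\mathrm{diff}}$ the hypothesis $f_1$ sits on exactly the side of $y_t$ that the truthful feedback downweights, so redoing the two-term average in the proof of Lemma~\ref{lemma:weight_increase_expectation} with the roles of $p'$ and $1-p'$ interchanged for $f_1$ gives
\begin{align*}
\E_{\sigma_t}\!\left[\frac{1}{w_{t+1}(f_1)}\right]
&= \frac{1}{w_t(f_1)}\left[W^{\mathrm{bad}} + \left((1-p)\frac{1-p'}{p'}+p\,\frac{p'}{1-p'}\right)W^{\mathrm{good}}\right]
\le \frac{1-p'}{p'}\cdot\frac{1}{w_t(f_1)},
\end{align*}
where $W^{\mathrm{good}}$ is the weight on the side of $y_t$ containing $f_0$ and $W^{\mathrm{bad}}=1-W^{\mathrm{good}}$ the weight on the side containing $f_1$; the last inequality uses that the bracketed coefficient is a convex combination of $\tfrac{1-p'}{p'}>1$ and $\tfrac{p'}{1-p'}<1$ weighted toward the former, together with $W^{\mathrm{bad}}+W^{\mathrm{good}}=1$ and $\tfrac{1-p'}{p'}\ge 1$. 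This crude constant-factor bound is all that is needed on $E_{\mathrm{diff}}$.

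Combining via the law of total expectation over $y_t$, and writing $\Pr[E_{\mathrm{same}}]=1-\Pr[E_{\mathrm{diff}}]$,
\begin{align*}
\E\!\left[\frac{1}{w_{t+1}(f_1)}\right]
&\le \frac{1}{w_t(f_1)}\left(\Pr[E_{\mathrm{same}}] + \Pr[E_{\mathrm{diff}}]\cdot\frac{1-p'}{p'}\right)
= \frac{1}{w_t(f_1)}\left(1+\Pr[E_{\mathrm{diff}}]\cdot\frac{1-2p'}{p'}\right)\\
&\le \frac{1}{w_t(f_1)}\left(1+\frac{1}{2T}\cdot\frac{1-2p'}{p'}\right)
\le \left(1+\frac{c'}{T}\right)\frac{1}{w_t(f_1)},
\end{align*}
using $1-2p'\le 1-p'$ in the last step, with $c'=\tfrac{1-p'}{2p'}$ as claimed. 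The step that requires the most care — and the conceptual core of the argument — is the probability estimate $\Pr[E_{\mathrm{diff}}]\le 1/(2T)$: it is precisely because the perturbation scale $1/T$ dominates the net resolution $T^{-2}$ that a sign disagreement between $f_1$ and the true $f_0$ is $O(1/T)$-rare, which converts the $O(1)$ damage in that case into the claimed $O(1/T)$ multiplicative overshoot. Everything else is a routine re-run of the averaging computation already carried out for Lemma~\ref{lemma:weight_increase_expectation}.
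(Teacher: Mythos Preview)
Your proof is correct and follows essentially the same route as the paper: split on whether $f_1(x_t)$ and $f_0(x_t)$ fall on the same side of $y_t$, use Lemma~\ref{lemma:weight_increase_expectation} on the ``same'' event, bound the ``different'' event's probability by $1/(2T)$ via the perturbation-vs-net-resolution comparison, and absorb the bounded multiplicative damage on that rare event. The only cosmetic difference is that on $E_{\mathrm{diff}}$ the paper invokes the deterministic trivial bound $1/w_{t+1}(f_1)\le \tfrac{1-p'}{p'}\cdot 1/w_t(f_1)$ (which holds pointwise since $\hat w_{t+1}(f_1)\ge p'w_t(f_1)$ and the normalizer is at most $1-p'$), whereas you re-derive the expectation and then bound it by the same constant; your extra step making $\tfrac{1-2p'}{p'}\le \tfrac{1-p'}{p'}$ explicit is exactly what the paper's combining step needs but leaves implicit.
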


\begin{proof}

With at least $1 - \frac{1}{2 T}$ probability, $f_1(x_t)$ and $f_0(x_t)$ are on the same side of $y_t$ since \[\abs{f_1(x_t) - f_0(x_t)} < 1/T^2\]
and the magnitude of the perturbation is $1/T$. In this case, by the previous lemma, we have: $\E \left[ {1}/{w_{t+1}(f_1)} \right] = {1}/{w_t(f_1)}$. With the remaining probability $f_1(x_t)$ and $f_0(x_t)$ are on opposite sides of $y_t$ and we can use the trivial bound in the equation below.  $$\frac{1}{w_{t+1}(f_1)} \leq \frac{1-p'}{p'} \frac{1}{w_t(f)}.$$ 
Combining, we get the desired inequality.
\end{proof}

\begin{lemma}

If $|f_0(x_t) - m_t| > \frac{2}{T}$ then for the constant $c$ in Lemma \ref{lemma:weight_increase_expectation}, then in expectation over both the
randomness in the choice of $y_t$ and the randomness in the feedback, we have:
$$ \E\left[ \frac{1}{w_{t+1}(f_1)} \right] \leq \frac{1-c/4}{w_t(f)}$$
\end{lemma}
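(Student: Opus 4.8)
The plan is to condition on the draw of $y_t$, invoke Lemma~\ref{lemma:weight_increase_expectation} to handle the feedback randomness at a fixed $y_t$, and then average over $y_t$, showing that the weight mass on the ``wrong'' side of $y_t$ is at least $1/4$ in expectation. The bound $1-c/4$ will come directly from this $1/4$ in the place of the generic factor $W_t^-$ appearing in Lemma~\ref{lemma:weight_increase_expectation}.

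First I would reduce to one case: assume without loss of generality $f_0(x_t) > m_t + 2/T$ (the case $f_0(x_t) < m_t - 2/T$ is symmetric, with the two sides of $y_t$ swapped). Since $y_t$ is drawn from $[m_t - 1/T, m_t + 1/T]$ and $d_\infty(f_1,f_0)\le T^{-2}$, for \emph{every} realization of $y_t$ we have $f_0(x_t) > m_t + 2/T \ge y_t + 1/T > y_t$ and $f_1(x_t) \ge f_0(x_t) - T^{-2} > m_t + 2/T - T^{-2} > y_t$ (for $T\ge 2$). Hence $f_0(x_t)$ and $f_1(x_t)$ lie on the same side of $y_t$, so Lemma~\ref{lemma:weight_increase_expectation} applies and, conditioned on $y_t$,
$$\E\left[\frac{1}{w_{t+1}(f_1)}\;\Big|\;y_t\right] = \frac{1 - c\,W_t^-(y_t)}{w_t(f_1)},\qquad W_t^-(y_t) := \sum_{f\in N:\, f(x_t) < y_t} w_t(f).$$

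Next I would take the expectation over $y_t$, which by linearity reduces the claim to $\E_{y_t}[W_t^-(y_t)] \ge 1/4$. The key observation is a monotonicity/median argument: whenever $y_t > m_t$ we have $\{f: f(x_t)\le m_t\}\subseteq \{f: f(x_t) < y_t\}$, so $W_t^-(y_t) \ge \sum_{f:\,f(x_t)\le m_t} w_t(f) \ge 1/2$ by the defining property of the weighted median $m_t$. Since $y_t$ is uniform on an interval of length $2/T$ centred at $m_t$, the event $y_t > m_t$ has probability $1/2$; on the complement we use only $W_t^-(y_t)\ge 0$. Hence $\E_{y_t}[W_t^-(y_t)]\ge \tfrac12\cdot\tfrac12 = \tfrac14$, and combining with the conditional identity,
$$\E\left[\frac{1}{w_{t+1}(f_1)}\right] = \frac{1 - c\,\E_{y_t}[W_t^-(y_t)]}{w_t(f_1)} \le \frac{1-c/4}{w_t(f_1)},$$
which is the statement (with $w_t(f_1)$ in place of the typo $w_t(f)$).

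The one delicate point — what I would treat as the main subtlety rather than a real obstacle — is handling atoms of the distribution $\{f(x_t)\}$ at the value $m_t$: the median property only guarantees mass $\ge 1/2$ on the \emph{closed} halves $\{f(x_t)\le m_t\}$ and $\{f(x_t)\ge m_t\}$, whereas $W_t^-$ is defined with a strict inequality. This is precisely why the perturbation is needed: pushing $y_t$ strictly past $m_t$ (probability $1/2$) converts that closed half-mass into strict-inequality mass. Because $y_t$ is drawn from a continuous distribution, the boundary event $f(x_t)=y_t$ has probability zero, so no extra care is required there; the rest is bookkeeping.
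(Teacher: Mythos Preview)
Your proposal is correct and follows essentially the same approach as the paper: reduce to the case $f_0(x_t) > m_t + 2/T$, observe that $f_1(x_t)$ and $f_0(x_t)$ are always on the same side of $y_t$ so Lemma~\ref{lemma:weight_increase_expectation} applies, then use that $y_t > m_t$ with probability $1/2$ (giving $W_t^- \ge 1/2$) and the trivial bound $W_t^- \ge 0$ otherwise. Your added discussion of atoms at $m_t$ and the role of the continuous perturbation is a nice clarification but not a departure from the paper's argument.
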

\begin{proof}
Assume without loss of generality that $f_0(x_t) > m_t + \frac{2}{T}$. Since the magnitude of the perturbation is $1/T$, $f_1(x_t)$ will be on the same side of our guess as $f_0(x_t)$ and hence we can apply Lemma \ref{lemma:weight_increase_expectation}. With probability at least $1/2$ we have $y_t > m_t$ and hence $W_t^- \geq 1/2$. With the remaining probability we use the trivial bound $W_t^- \geq 0$. Combining those we get the bound in the statement.
\end{proof}

The previous lemmas imply that $w_t(f)$ grows by a constant factor (in expectation) whenever the median is far from the true point. We conclude the proof by showing that this can't happen too often since weights are bounded.

\begin{proof}[Proof of Theorem \ref{noisygeneral}]

The regret bound follows directly from the fact that  the probability of having $\abs{f_0(x_t) - m_t} > 2/T$ for more than $\Omega(d \log T)$ periods is at most $O(1/T)$.  Our strategy for proving this is to define a random process $Y_t$ that is a super-martingale, i.e. $\E[Y_{t+1}] \leq Y_t$ and argue that if $\abs{f_0(x_t) - m_t} > 2/T$ happens too often, then $Y_T$ will be much larger than $Y_1$.  This happens with small probability by Markov's inequality. 

We will first define an auxiliary sequence of real numbers $s_t \geq 0$ for $t \in \{1\hdots T\}$ as follows. Let $s_1 = 1/\abs{N} \geq T^{-2d}$ and let:
$$s_{t+1} = \left\{ \begin{aligned}
& (1-c/4)^{-1} \cdot s_t, \quad & & \text{if } \abs{f_0(x_t) - m_t} > 2/T \\
& (1+c'/T)^{-1} \cdot s_t, \quad & & \text{otherwise}
\end{aligned} \right.$$
Now define the following stochastic process:
$$Y_t = \frac{s_t}{w_t(f_1)}$$
It is simple to see that $Y_1 = 1$. The previous lemmas imply that $Y_t$ is a super-martingale, i.e. $\E[Y_{t+1}] \leq Y_t$ and hence $\E[Y_T] \leq 1$.  Now, in the case that  $\abs{f_0(x_t) - m_t} > 2/T$ for more than $\Omega(d \log T)$ periods, we have
$$s_T \geq T^{-2d} \cdot (1-c/4)^{-\Omega(d \log T)} \cdot (1+c'/T)^{-T} \geq \Omega(T)$$ 
and hence:
$$Y_T = \frac{s_T}{w_T(f_1)} \geq s_T \geq \Omega(T)$$
but since $\E[Y_T] \leq 1$, this can happen with at most $O(1/T)$ probability by Markov's inequality.
\end{proof}

\begin{remark}
We've assumed here that $p$ is a constant bounded away from $1/2$. How does the regret of this algorithm depend on $p$ as $p$ approaches $1/2$? If $p = \frac{1}{2} - \delta$, then we can set $p' = \frac{1}{2} - \delta'$ where $\delta' = \delta/2$. This leads to $c = O(\delta^{2})$ -- adapting the proof of Theorem \ref{noisygeneral} then shows we can have at most $O\left(\frac{d \log T}{\delta^2}\right)$ inaccurate rounds, for a total of at most $O\left(\frac{d \log T}{\delta^2}\right)$ regret. 
\end{remark}

\paragraph{Comparison with other approaches} It is worth comparing our algorithm with other learning techniques in the multiplicative weights family. The `experts' in our problem form a continuous set with a linear structure, which resembles the settings of Kalai and Vempala \cite{kalai2005efficient} and Abernathy et al \cite{abernethy2012interior}. In their setting, however, the optimal achievable regret is $O(\sqrt{T})$ while in our case we achieve $O(\log T)$. Another feature of our model is the stochasticity of the losses. With stochastic losses, Wei and Luo \cite{wei2018more} recently showed that multiplicative weight update algorithms achieve $O(\log T)$ regret when the learning rate is tuned properly, but their guarantees depend on the inverse of the gap between the two best arms. An important difference, however, is that in our setting we don't have access to the loss. We only learn whether our guess was too large or too small, which doesn't allow us to apply any of those algorithms.

\subsection{$O(\poly(d))$ algorithm for Noisy Contextual Search}\label{subsec:polyd_noisy}

In this section we study contextual search in the noisy feedback model. We show that here we can achieve total loss $O(\poly(d))$ independent of $T$ by exploiting the geometry of the Euclidean space.  Throughout this section we will use $q_0 \in \B$ to denote the true point, i.e. $f_0(x) = \dot{q_0}{x}$.\\

Our approach (Algorithm \ref{alg:noisy_contextual_search_alg}) builds off the Bayesian inference approach in the previous section (Algorithm \ref{alg:general_noise}) by combining it with the multi-scale discretization ideas in Section \ref{subsec:d_square_noiseless}. At a high (and slightly inaccurate) level, Algorithm \ref{alg:noisy_contextual_search_alg} works as follows. Throughout the algorithm, we maintain a distribution $w$ over the unit ball $\B(0, 1)$ where $w(q)$ represents the likelihood that $q$ is our true point $q_0$. Each round $t$, we are provided a direction $x_t$ by the adversary. We begin by measuring the ``width'' of our distribution in the direction $x_t$ -- i.e., the length of the smallest interval in this direction which contains almost all of the mass of our distribution $w$. Then (similarly as in Algorithm \ref{alg:general_noise}), we will guess a perturbed version of the median of $w$ in the direction $x_t$, where the size of the perturbation depends on the width. Finally, we multiplicatively update the distribution $w$, penalizing points on the wrong side of our guess (again, similarly as in Algorithm \ref{alg:general_noise}). \\


In the actual algorithm, we maintain a separate distribution $w_i$ for each possible scale $\gamma_i$ for the width (in particular, we are in scale $w_i$ if almost all of the mass of $w_{i-1}$ is concentrated in a small strip in direction $x_t$). This aids analysis in letting us guarantee we operate in each scale for at most a bounded number of rounds, which lets us bound the total loss of this algorithm.


\begin{algorithm}[H]
\caption{{\sc Noisy Linear Contextual Search} }
\label{alg:noisy_contextual_search_alg}
\begin{algorithmic}
\State \textbf{Initialization:}
\State Let $\eta = 1/(2d^{10})$.
\For {each integer $i > 0$}
\State Let $\beta_i = \frac{1}{2^{100di}}$ and $\gamma_i = \frac{1}{2^i}$. 
\State Construct a distribution $w_i: \B(0,1) \rightarrow \R$. Let $w_{i, t}$ denote the weight function $w_i$ at round $t$. \State Initialize $w_{i, 0}(q) = {1}/{\Vol(\B(0,1))}$ for all $q \in B(0, 1)$.
\State Initialize $C_i = 0$. ($C_i$ will store the number of times we have been in scale $i$). 
\EndFor
\State \textbf{Algorithm:}
\For {$t$ in $1,2, \dots , T$}
\State Adversary picks $x_t$.
\State Let $i_t$ be the largest index $i$ such that at least one of the following is true:
\begin{itemize}
    \item There exists $a,b \in \R$ such that $|a - b| \leq 10\gamma_i$ and $ \int_{ a \leq \dot{x_t}{q} \leq b} w_{i, t}(q) dq \geq 1 -\gamma_i^{4d}$.
    \item $C_{i-1} > 100\left(\frac{d^4(i-1)}{\gamma_{i-1}^{10d}} + d^{25}(i-1)\right)$.
\end{itemize}
\State Let $y$ be the weighted median of $w_{i_t, t}$ in the direction $x_t$ i.e. $y$ satisfies
\[
 \int_{\dot{x_t}{q} < y} w_{i_t, t}(q) dq   =  \int_{\dot{x_t}{q} > y}w_{i_t, t}(q) dq = \frac{1}{2}.
\]
\State Query $\hat{y} = y + \delta$ where $\delta$ is chosen uniformly at random from $[-2\beta_{i_t}, 2\beta_{i_t}]$.  
\State Update weights:
\For {$i$ in $\{i_t, i_t + 1\}$}
\State For each $q \in \B(0,1)$, if $q$ violates the feedback then $w_{i, t+1}(q) = \left(1 - \eta \right)w_{i, t}(q)$.
\State Normalize $w_i$ so that $\int_{\B(0,1)}w_{i, t+1}(p) dp = 1$.
\EndFor
\State $C_{i_t} \leftarrow C_{i_t} + 1$.
\EndFor
\end{algorithmic}
\end{algorithm}

\begin{theorem}\label{noisycontextualsearch}
Algorithm \ref{alg:noisy_contextual_search_alg} incurs $O(\poly(d))$ expected total loss for the problem of noisy linear contextual search.  
\end{theorem}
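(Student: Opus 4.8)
The plan is to track, for each scale $i$, a potential based on the weight that distribution $w_i$ places near the true point $q_0$, and to combine two separate arguments: a "concentration" argument showing the true point stays close to the bulk of $w_i$, and a "progress" argument bounding how many rounds can be spent in each scale. First I would set up the main potential: for each scale $i$, let $M_{i,t}$ be the total $w_{i,t}$-mass of the ball $\B(q_0, \beta_i)$ (or, to avoid issues with $q_0$ lying near the boundary of $\B(0,1)$, of the intersection of this ball with $\B(0,1)$, which has volume at least $\poly(\beta_i)^d$). Exactly as in Lemma \ref{lemma:weight_increase_expectation} and its corollaries in Section \ref{sec:dlogt_noisy}, the multiplicative weight update with rate $\eta$ has the property that, conditioned on the event that $q_0$ and every $q \in \B(q_0,\beta_i)$ lie on the same side of the guess $\hat y$, the quantity $\E[1/w_{i,t+1}(q)]$ does not increase (in fact it decreases by a factor $1 - c W^-_t$ for the mass $W^-_t$ on the wrong side); and whenever the weighted median $y$ is more than $\Omega(\beta_i)$ from $\dot{q_0}{x_t}$, it decreases by a constant factor $1 - \Omega(\eta^2)$ in expectation. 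Averaging $1/w_{i,t+1}(q)$ over $q \in \B(q_0,\beta_i)$ turns the statement about individual hypotheses into a statement about $1/M_{i,t}$, giving a supermartingale $Y_{i,t} = s_{i,t}/M_{i,t}$ after rescaling by an auxiliary deterministic sequence $s_{i,t}$ that grows each time the median is far. This is exactly the machinery of the proof of Theorem \ref{noisygeneral}, and it bounds the number of "far-median" rounds in scale $i$ by $O(d i / \eta^2) = O(d^{21} i)$ with high probability, provided the conditioning event (that $q_0$ is on the same side as all of $\B(q_0,\beta_i)$) holds — which it does because the guess is a perturbation of the median and $\beta_i$ is chosen tiny relative to $\gamma_i$ and the perturbation scale.

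The second ingredient, which I expect to be the main obstacle, is controlling the interaction between scales — specifically, showing that when we move to scale $i$ because $w_{i-1}$ has concentrated into a strip of width $10\gamma_{i-1}$ in direction $x_t$, the true point $q_0$ is actually within (roughly) that strip, so that initializing/continuing $w_i$ from the uniform distribution does not "lose" $q_0$. Here the Euclidean geometry is essential: I would argue that if $w_{i-1}$ put all but a $\gamma_{i-1}^{4d}$ fraction of its mass in a strip $\{a \le \dot{x_t}{q} \le b\}$ with $b - a \le 10\gamma_{i-1}$, and $w_{i-1}$ is itself (by induction) concentrated near $q_0$ in the sense that $M_{i-1,t}$ is not too small, then $q_0$ must lie within $O(\gamma_{i-1})$ of this strip — otherwise $\B(q_0,\beta_{i-1})$ would be disjoint from the strip yet carry mass $\ge M_{i-1,t} \gg \gamma_{i-1}^{4d}$, a contradiction once $\beta_{i-1}$ is small enough that the mass near $q_0$ cannot have been pruned below $\gamma_{i-1}^{4d}$. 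This is where the aggressive choice $\beta_i = 2^{-100di}$ versus $\gamma_i = 2^{-i}$ is used, and where the $C_{i-1}$-threshold second trigger comes in: it guarantees that we never stay in scale $i-1$ so long that the weights collapse (the supermartingale bound from the first ingredient is only valid for $O(d^{21}(i-1))$-ish rounds, and the threshold $C_{i-1} > 100(d^4(i-1)/\gamma_{i-1}^{10d} + d^{25}(i-1))$ forces a scale change before that). So the two triggers together ensure both that we escape any scale quickly and that each escape preserves the concentration invariant.

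Finally I would assemble the regret bound. Fix a scale $i$. The per-round loss while in scale $i$ is $O(\gamma_i)$ when the median is accurate (since the guess is then within $O(\gamma_i)$ of $\dot{q_0}{x_t}$, using that almost all mass is in a $\gamma_i$-strip and $q_0$ is near it), and $O(1)$ when the median is far. By the first ingredient there are only $O(d^{21} i)$ far-median rounds in scale $i$ with high probability, and by the $C_i$-threshold there are at most $O(d^4 i/\gamma_i^{10d} + d^{25} i)$ accurate rounds in scale $i$ total, contributing loss $O(\gamma_i \cdot (d^4 i / \gamma_i^{10d} + d^{25}i))$ — which is $\poly(d) \cdot i \cdot \gamma_i^{1-10d}$; this blows up, so in fact one needs the accurate-round count for scale $i$ to be bounded by the concentration/progress argument itself (each accurate round shrinks some strip or eliminates mass, so there are only $\poly(d)$ accurate rounds per scale before the concentration trigger fires and pushes us to scale $i+1$), making the per-scale accurate loss $O(\gamma_i \cdot \poly(d))$. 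Summing the geometric series $\sum_i \gamma_i \poly(d) \cdot i + \sum_i \gamma_i \cdot O(d^{21} i)$ over all scales $i \ge 1$ converges to $O(\poly(d))$, independent of $T$, and since only $O(1/T)$-probability failure events (the supermartingale deviations, union-bounded over the $O(\log T)$ relevant scales) were discarded, and the loss is bounded by $T$ on those events, the expected total loss is $O(\poly(d))$. The delicate points to get right are: (i) the precise radius within which $q_0$ is guaranteed to sit after a scale change, propagated as an induction invariant across all active scales simultaneously; (ii) verifying the conditioning event for the supermartingale (same-side-ness of $\B(q_0,\beta_i)$ and $q_0$) holds except with negligible probability given the perturbation size $2\beta_{i_t}$; and (iii) checking that the doubled update (over $i_t$ and $i_t+1$) does not break the supermartingale for scale $i_t+1$, which it does not because an extra multiplicative-weights step only helps $1/M$ in expectation under the same conditioning.
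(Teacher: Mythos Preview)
Your proposal has the right high-level architecture (per-scale supermartingale on the weight near $q_0$, bound on far-median rounds, sum over scales), but the central step --- showing that when $w_i$ concentrates on a thin strip, $q_0$ lies within $O(\gamma_i)$ of that strip --- does not follow from the tools you describe, and this is where the paper does essentially all the hard work.

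Concretely: your potential is $M_{i,t} = w_{i,t}(\B(q_0,\beta_i))$, and your supermartingale only guarantees that $M_{i,t}$ does not drop far below its initial value, which is $\Theta(\beta_i^d) = 2^{-\Theta(d^2 i)}$. To conclude the strip must intersect $\B(q_0,\gamma_i)$ you need $w_{i,t}(\B(q_0,\gamma_i)) > \gamma_i^{4d} = 2^{-4di}$, and $\beta_i^d \ll \gamma_i^{4d}$, so the supermartingale bound is far too weak. You cannot repair this by taking the potential ball of radius $\gamma_i$ instead: the perturbation in the guess has magnitude $\beta_i \ll \gamma_i$, so the cutting hyperplane intersects $\B(q_0,\gamma_i)$ with probability $\gamma_i/\beta_i \gg 1$, and the ``same-side'' conditioning that drives your supermartingale simply fails. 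This is the precise obstacle the paper identifies, and it is resolved by a separate \emph{geometric} argument that your proposal is missing: for any three collinear points $q_0,q_1,q_2$ in that order, no halfspace containing $q_0$ and $q_2$ excludes $q_1$, so $\bigl(w_{i,t}(q_2)/w_{i,t}(q_1)\bigr)^{d^9}$ is itself a supermartingale; combined with an $\epsilon$-net argument (to handle approximate collinearity and to pass from points to volumes), this yields $w_{i,t}(\B(q_0,\gamma_i)) \ge \gamma_i^{4d}$ with high probability, which is exactly what is needed.

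A secondary issue is that your loss accounting is inverted. A round with $|y - \dot{x_t}{q_0}| \le \beta_i$ (``accurate'') incurs loss $O(\beta_i)$, not $O(\gamma_i)$; a round with $|y - \dot{x_t}{q_0}| > \beta_i$ (``inaccurate'') incurs loss $O(\gamma_i)$, not $O(1)$, once the strip-contains-$q_0$ claim is in hand. With the correct assignment there is no blow-up: accurate rounds contribute $O(\beta_i C_i)$, which is tiny because $\beta_i = 2^{-100di}$ while $C_i \le \mathrm{poly}(d)\cdot 2^{10di}$; inaccurate rounds contribute $O(\gamma_i B_i)$ with $B_i = O(d^{25}i)$ from the supermartingale, and $\sum_i \gamma_i \cdot d^{25} i$ converges. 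The paper also needs (and proves, via a second supermartingale on $1/w_{i+1,t}(\Gamma_{i+1})$) that the $C_{i-1}$-threshold is essentially never the reason scale $i$ is entered; your proposal treats the threshold as a device to protect the supermartingale, whereas in the paper it is a safety net that, with high probability, is never reached.
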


The proof of Theorem \ref{noisycontextualsearch} is structured roughly as follows. Let $L_i$ be the (expected) loss sustained at scale $i$. We wish to show that $\sum_{i} L_i = \poly(d)$. To bound $L_i$, we'll start by roughly following the analysis in Theorem \ref{noisygeneral}. Specifically, we'll look at the total weight (according to $w_i$) of a tiny ball surrounding the true point $q_0$. Let the weight of this ball at time $t$ be $W_{i, t}$. We'll again show that $1/W_{i, t}$ when suitably scaled is a super-martingale: it decreases in expectation by a large amount whenever our guess is far from accurate and cannot increase very much in expectation even if our guess is close to accurate. Since $1/W_{i,t}$ cannot decrease below $1$, this lets us upper bound the number of rounds where we are far from accurate. 

Now, even when we are far from accurate, we know that since we are in scale $i$, almost all of the mass of $w_{i}$ is concentrated on some thin strip in direction $x_t$. If the true point $q_0$ is located in or near this strip, this lets us bound the loss each round when we are far from accurate (since the median will lie in this strip). So it suffices to show that if a weight function $w_i$ concentrates on some thin strip, then with high probability, the true point $q_0$ lies close to this strip.

To prove this, we again look at the weight of a small ball $\B_{\alpha} = B(q_0, \alpha)$ with radius $\alpha$ around $q_0$ (see left side of Figure \ref{fig:strip_intersection}). If we know that the weight on some strip is at least some threshold $\tau$, then if $w_{i,t}(\B_{\alpha}) + \tau > 1$, we know that the ball and strip intersect, and therefore $q_0$ is at most distance $\alpha$ away from this strip. It thus suffices to show that $w_{i,t}(\B_{\alpha}) > 1 - \tau$ with high probability.

\begin{figure}[h]
\centering
\begin{subfigure}{0.40\textwidth}
\begin{tikzpicture}[scale=1.2]
    
    \begin{scope}
      \clip (-.2+1,-2) rectangle (.2+1,2);
      \fill[blue!20!white] (0,0) circle (2);
    \end{scope}
    \draw (0,0) circle (2);

    \draw[dotted,fill=yellow!70!white, opacity=.6, line width=1pt] (.65,-.4) circle (.35);
\node [shape=circle, fill=black,inner sep=1.5pt,label=left:$q_0$] (X1) at (.65,-.4) {};
\node at (.55,-.95) {$\B_\alpha$};
\end{tikzpicture}
\end{subfigure}
\begin{subfigure}{0.40\textwidth}
\begin{tikzpicture}[scale=1.2]

     \begin{scope}
       \clip (-3,1.5)--(2,0)--(2,-2)--(-3,-2)--cycle;
       \fill[red!10!white] (0,0) circle (2);
     \end{scope}
    \draw (0,0) circle (2);

    \draw[dotted,fill=yellow!70!white, opacity=.6, line width=1pt] (.65,-.4) circle (.35);
\node [shape=circle, fill=black,inner sep=1.5pt,label=left:$q_0$] (X1) at (.65,-.4) {};
\node [shape=circle, fill=black,inner sep=1.5pt,label=right:$q_1$] (X1) at (.7,-.2) {};
\node [shape=circle, fill=black,inner sep=1.5pt,label=left:$q_2$] (X1) at (1,1) {};
\draw (.65,-.4)--(1,1);
\end{tikzpicture}
\end{subfigure}
\caption{}
\label{fig:strip_intersection}
\end{figure}

Now, we will choose $\alpha$ and $\tau$ large enough so that this inequality is satisfied at time $t=0$. We therefore only need to show that this is still true with high probability for all times $t$. Intuitively, this should be true -- the amount of weight on a ball centered at the true point $q_0$ should only increase as time goes on and we get more feedback (the feedback is noisy, so we might occasionally decrease the weight of this ball, but overall the increases should drown out the decreases). Proving this formally, however, is technically challenging and where we need to use the Euclidean geometry specific to linear contextual search. 

To show this, we use the following lemma. Choose two points $q_1$ and $q_2$ on a line through $q_0$ so that $q_1$ lies between $q_0$ and $q_2$. We claim that with high probability (for all times $t$), $w_{i, t}(q_1) \geq \kappa \cdot w_{i, t}(q_2)$ for some constant $\kappa$. To show this, observe that there is no half space which contains both $q_0$ and $q_2$ but not $q_1$. This means that the only way $w_{i,t}(q_2)$ can increase relative to $w_{i,t}(q_1)$ is if a guess separates $q_2$ from $q_1$ and if the feedback on this guess is noisy (right side of Figure \ref{fig:strip_intersection}). This occurs with probability $p < 1/2$ and is unlikelier than the alternative (which increases the weight of $q_1$ relative to $q_2$). We can thus bound the ratio of $w_{i, t}(q_1)/w_{i, t}(q_2)$ from below with high probability over all rounds.

If we could union bound over all points in $\B_{\alpha}$ we would be done (this inequality allows us to relate the weight of all the points outside $\B_{\alpha}$ to the weight of points inside $\B_{\alpha}$). Unfortunately there are infinitely many points inside $\B_{\alpha}$ so we cannot apply a naive union bound. Luckily, we can show that nearby points are very likely to have similar weights: the only way the relative weight of two nearby points $q$ and $q'$ changes is if we guess a hyperplane separating $q$ and $q'$ -- and since we add a perturbation to our guess every round, we can bound the probability of this happening. This allows us to repeat the previous geometric argument with $\epsilon$-nets instead of single points, which completes the proof.\\

\paragraph{Notation.} Below, we will use $q_0$ to denote the hidden point.  We let $\B(0,1)$ denote the unit ball and in general $\B(q,r)$ to denote the ball of radius $r$ centered at $q$. We will use $w_{i,t}$ to denote the weight function $w_i$ at round $t$.  For a set $S \subset \B(0,1)$, we use the notation
\[
w_{i,t}(S)= \int_{S}w_{i,t}(q) dq.
\]

Let $\alpha_i = \frac{1}{2^{10^4d^2i}}$.  Define the set $S_{\alpha_i}$ to be the $\alpha_i$-net consisting of all points in the unit ball whose coordinates are integer multiples of $\frac{\alpha_i}{d}$.  Note that $|S_{\alpha_i}| \leq \left(\frac{2d}{\alpha_i}\right)^d$.  For all $i$, let $\Gamma_i = \B(q_0, \alpha_{i}) \cap \B(0,1)$ be the ball of radius $\alpha_i$ centered at $q_0$. For simplicity, throughout this proof we will assume that the feedback noise is fixed at $p = 1/3$ (it is straightforward to adapt this proof for any other $p < 1/2$; doing so only affects the constant factor of the loss bound). 

\subsubsection*{Step 1: Understanding $1/w_{i,t}$}

As in the analysis of Algorithm \ref{alg:general_noise}, we begin by understanding how the reciprocal of our weight function $1/w_{i, t}(p)$ evolves over time. This will allow us to construct various helpful super-martingales (for example, allowing us to bound the number of rounds we spend in each scale). 

The following claim relates how $1/w_{i,t}(q_1)$ changes when $q_1$ and $q_0$ are on the same side of the hyperplane $\dot{x_t}{q} = \hat{y}$ (i.e. is more likely to be consistent with feedback).

\begin{claim}\label{sameside}
Consider a round $t$.  Say the adversary picks direction $x_t$ and the algorithm queries $\hat{y}$.  Let $q_1$ be a point such that $q_1$ and $q_0$ are on the same side of the hyperplane $\dot{x_t}{q} = \hat{y}$.  Let 
\[
X = \int_{\mathsf{sign}(\dot{x_t}{q} - \hat{y}) \neq  \mathsf{sign}(\dot{x_t}{q_0} - \hat{y})} w_{i,t}(q)dq
\]
Then for $\eta \leq 1/4$ we have the following after applying weight updates
\[
\E\left[\frac{1}{w_{i,t+1}(q_1)} \right] \leq \left(1 - \frac{1}{10}\eta X \right)\frac{1}{w_{i,t}(q_1)}
\]
where the expectation is over the randomness in the feedback.  In particular, we always have
\[
\E\left[\frac{1}{w_{i,t+1}(q_1)} \right] \leq \frac{1}{w_{i,t}(q_1)}.
\]
\end{claim}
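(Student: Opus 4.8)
The plan is to directly compute the expectation of $1/w_{i,t+1}(q_1)$ over the two possible feedback outcomes (honest feedback with probability $1-p$, flipped feedback with probability $p$), and then show the resulting bound is at most $(1 - \tfrac{1}{10}\eta X)\tfrac{1}{w_{i,t}(q_1)}$. First I would fix the round $t$ and the query $\hat y$, and partition the ball $\B(0,1)$ into two regions according to the sign of $\dot{x_t}{q} - \hat y$: write $A$ for the region on the same side as $q_0$ (and hence as $q_1$, by hypothesis), and $A^c$ for the opposite region, so that $X = w_{i,t}(A^c)$ and $w_{i,t}(A) = 1 - X$. The weight update multiplies every point in the ``wrong'' region (the one inconsistent with the \emph{received} feedback) by $(1-\eta)$ and then renormalizes. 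Since $q_1$ always lies in $A$, in the event that the received feedback is consistent with $q_0$'s side, $q_1$'s unnormalized weight is unchanged and the normalization constant is $w_{i,t}(A) + (1-\eta)w_{i,t}(A^c) = 1 - \eta X$; in the event the feedback is flipped, $q_1$'s unnormalized weight gets multiplied by $(1-\eta)$ and the normalization constant is $(1-\eta)w_{i,t}(A) + w_{i,t}(A^c) = 1 - \eta(1-X)$.

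Next I would assemble the expectation. The honest outcome occurs with probability $1-p = 2/3$ and the flipped outcome with probability $p = 1/3$, giving
\[
\E\!\left[\frac{1}{w_{i,t+1}(q_1)}\right] = \frac{1}{w_{i,t}(q_1)}\left[(1-p)(1 - \eta X) + p\cdot\frac{1 - \eta(1-X)}{1-\eta}\right].
\]
It then remains to show the bracketed factor is at most $1 - \tfrac{1}{10}\eta X$. I would expand: the bracket equals $1 - (1-p)\eta X + p\left(\frac{1-\eta(1-X)}{1-\eta} - 1\right) = 1 - (1-p)\eta X + p\cdot\frac{\eta X}{1-\eta}$, using that $\frac{1-\eta(1-X)}{1-\eta} - 1 = \frac{\eta X}{1-\eta}$. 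So the coefficient of $\eta X$ (after factoring it out) is $-(1-p) + \frac{p}{1-\eta} = -\frac{2}{3} + \frac{1/3}{1-\eta}$, which for $\eta \le 1/4$ is at most $-\tfrac23 + \tfrac{1/3}{3/4} = -\tfrac23 + \tfrac49 = -\tfrac{2}{9} < -\tfrac{1}{10}$. Hence the bracket is at most $1 - \tfrac{1}{10}\eta X$, as desired, and since $X \ge 0$ and $\eta > 0$ this immediately gives the ``in particular'' statement $\E[1/w_{i,t+1}(q_1)] \le 1/w_{i,t}(q_1)$.

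The only genuinely delicate point — more bookkeeping than obstacle — is making sure the normalization constants in the two feedback cases are written correctly, which hinges on the observation that $q_1$ lies on the \emph{same} side as $q_0$: this is exactly what guarantees $q_1$'s weight is preserved precisely when the normalizer is $1-\eta X$ (the ``good'' case) rather than the other way around, and it is what makes the $\eta X$ terms have the right sign. I should also note that since $\hat y = y + \delta$ with $\delta$ drawn from a continuous distribution, the hyperplane $\dot{x_t}{q} = \hat y$ passes through any fixed point $q_1$ or $q_0$ with probability zero, so the ``same side'' hypothesis is well-posed and the partition into $A, A^c$ has no boundary ambiguity. With these remarks in place the computation above is complete.
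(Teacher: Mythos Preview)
Your proposal is correct and follows essentially the same approach as the paper: compute $1/w_{i,t+1}(q_1)$ in the two feedback cases, take the expectation, and simplify to obtain the factor $1 - \eta X\bigl(\tfrac{2}{3} - \tfrac{1}{3(1-\eta)}\bigr)$, which is at most $1 - \tfrac{1}{10}\eta X$ for $\eta \le 1/4$. Your explicit verification of the constant $-\tfrac{2}{9} < -\tfrac{1}{10}$ and your remark on the measure-zero boundary are slightly more detailed than the paper's write-up, but the argument is the same.
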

\begin{proof}

Recall that points that violate feedback have their weight multiplied by $(1-\eta)$ (and then the distribution is renormalized). With probability $1-p = 2/3$ (when the feedback is not flipped), we thus have that

$$w_{i, t+1}(q_1) = \frac{w_{i, t}(q_1)}{(1 - X) + (1-\eta) X}.$$

Likewise, with probability $p = 1/3$ (when the feedback is flipped), we have that

$$w_{i, t+1}(q_1) = \frac{(1-\eta) w_{i, t}(q_1)}{(1-\eta)(1 - X) + X}.$$

Taking expectations over the feedback, we therefore have that
\begin{eqnarray*}
\E\left[\frac{1}{w_{i,t+1}(q_1)} \right] &=& \left( \frac{2}{3} \cdot\left(1 - \eta X\right) + \frac{1}{3} \cdot \left(1 + \frac{\eta}{1-\eta}X \right) \right)\frac{1}{w_{i,t}(q_1)} \\
&=&  \left( 1 - \eta X \left(\frac{2}{3} - \frac{1}{3 (1-\eta)}\right) \right)\frac{1}{w_{i,t}(q_1)} \\
&\leq & \left( 1 - \frac{1}{10}\eta X \right)\frac{1}{w_{i,t}(q_1)}
\end{eqnarray*}
\end{proof}

Points very close to $q_0$ are likely to be on the same side of the hyperplane as $q_0$, allowing us to apply Claim \ref{sameside}.

\begin{claim}\label{closest}
Let $q_1 \in \B(0,1)$ such that $\norm{q_1 - q_0}
\leq \alpha_i$.  Then, in expectation both over the randomness in the feedback and the algorithm, 
\[
\E\left[\frac{1}{w_{i,t+1}(q_1)} \right] \leq \frac{1}{ 1 - \frac{\alpha_i}{\beta_i}} \frac{1}{w_{i,t}(q_1)}
\]
regardless of the direction $x_t$ that the adversary chooses.
\end{claim}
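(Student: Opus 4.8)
The plan is to reduce to Claim~\ref{sameside}. Conditioned on the perturbation $\delta$, that claim already gives $\E[1/w_{i,t+1}(q_1)] \leq 1/w_{i,t}(q_1)$ \emph{provided} $q_0$ and $q_1$ land on the same side of the query hyperplane $\dot{x_t}{q} = \hat y$; since $q_1$ is extremely close to $q_0$ this happens with overwhelming probability over $\delta$, and on the rare complementary event I will absorb the (bounded) one-round growth of $1/w_{i,t}(q_1)$. First I would dispense with the trivial case: in round $t$ the algorithm only updates $w_{i}$ for $i \in \{i_t, i_t+1\}$, so if $i \notin \{i_t,i_t+1\}$ then $w_{i,t+1}=w_{i,t}$ and there is nothing to prove. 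Hence assume $i \in \{i_t,i_t+1\}$; in particular $i_t \le i$, and since $\beta_j = 2^{-100dj}$ is decreasing we get $\beta_{i_t} \ge \beta_i$. This bookkeeping — matching the perturbation width $4\beta_{i_t}$ actually used in the round against the scale $\beta_i$ appearing in the statement — is the only genuinely delicate point of the argument.

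Next I would bound the probability of the bad event $\bar E$ that $q_0$ and $q_1$ lie on opposite sides of $\dot{x_t}{q} = \hat y$. Since $\norm{q_1-q_0}_2 \le \alpha_i$ and $\norm{x_t}_2 = 1$, Cauchy--Schwarz gives $\abs{\dot{x_t}{q_0}-\dot{x_t}{q_1}} \le \alpha_i$, so $q_0$ and $q_1$ are separated only if $\hat y$ falls in the interval between $\dot{x_t}{q_0}$ and $\dot{x_t}{q_1}$, an interval of length at most $\alpha_i$. As $\hat y = y + \delta$ with $\delta$ uniform on the length-$4\beta_{i_t}$ interval $[-2\beta_{i_t}, 2\beta_{i_t}]$, we get $\Pr[\bar E] \le \alpha_i/(4\beta_{i_t}) \le \alpha_i/(4\beta_i)$.

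I would then split on $E$ versus $\bar E$. On any $\delta$ for which $E$ holds, Claim~\ref{sameside} (taking expectation over the feedback noise) gives $\E[1/w_{i,t+1}(q_1)\mid\delta] \le 1/w_{i,t}(q_1)$. For \emph{any} $\delta$ (in particular under $\bar E$) I would use the crude deterministic bound coming from the weight update: each point's weight is multiplied by $1$ or $1-\eta$ and then divided by a normalizer $Z \in [1-\eta,1]$, so $w_{i,t+1}(q_1) \ge (1-\eta)w_{i,t}(q_1)$ and hence $1/w_{i,t+1}(q_1) \le \frac{1}{1-\eta}\cdot\frac{1}{w_{i,t}(q_1)}$. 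Averaging over $\delta$ yields
\[
\E\!\left[\frac{1}{w_{i,t+1}(q_1)}\right] \le \left(\Pr[E] + \frac{\Pr[\bar E]}{1-\eta}\right)\frac{1}{w_{i,t}(q_1)} = \left(1 + \Pr[\bar E]\cdot\frac{\eta}{1-\eta}\right)\frac{1}{w_{i,t}(q_1)}.
\]
Finally, since $\eta = 1/(2d^{10}) \le 1/2$ gives $\eta/(1-\eta) \le 1$, and $\Pr[\bar E] \le \alpha_i/(4\beta_i) \le \alpha_i/\beta_i < 1$ (as $\alpha_i = 2^{-10^4 d^2 i}$ is far smaller than $\beta_i$), I would conclude via $1+x \le \frac{1}{1-x}$ for $x\in[0,1)$ that the right-hand side is at most $\frac{1}{1-\alpha_i/\beta_i}\cdot\frac{1}{w_{i,t}(q_1)}$, which is exactly the claim. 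Besides the scale bookkeeping already mentioned, there is no real obstacle here — everything reduces to a single invocation of Claim~\ref{sameside} plus a worst-case one-round estimate.
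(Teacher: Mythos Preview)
Your proof is correct and follows essentially the same approach as the paper: bound the probability that $q_0$ and $q_1$ are separated by the query hyperplane (using the random perturbation), apply Claim~\ref{sameside} on the good event, and absorb the bad event with a crude one-round bound. Your version is in fact more careful than the paper's terse proof: you explicitly handle the scale bookkeeping (that the perturbation is at scale $\beta_{i_t}$ while the claim is stated for $\beta_i$, resolved via $i_t\le i$ so $\beta_{i_t}\ge\beta_i$), and you spell out the worst-case bound $1/w_{i,t+1}(q_1)\le\frac{1}{1-\eta}\cdot 1/w_{i,t}(q_1)$ on the bad event, which the paper leaves implicit.
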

\begin{proof}
Since $\norm{q_1 - q_0} \leq \alpha_i$, and since $\hat{y}$ is chosen by adding a uniform $\beta_i$ random variable to $y$, the probability that $q_1$ and $q_0$ are on opposite sides of the plane $\dot{x_t}{q} = \hat{y}$ is at most $\frac{\alpha_i}{\beta_i}$.  Combining this with Claim \ref{sameside} gives us the desired result.
\end{proof}

We now use Claims \ref{closest} and \ref{sameside} to understand how $1/w_{i,t}(\Gamma_i)$ changes over time (generalizing from single points to small balls). This first claim bounds the decrease in $1/w_{i_t+1, t+1}(\Gamma_{i_t+1})$ when our guess is close to accurate.

\begin{claim}\label{weightincrease1}
Assume $ |y - \dot{x_t}{q_0}| \leq \beta_{i_t}$.  Then, in expectation both over the randomness in feedback and in our algorithm,
\[
\E\left[\frac{1}{w_{i_t+1,t+1}(\Gamma_{i_t + 1})} \right] \leq \left(1 - \gamma_{i_t}^{10d} \right)\frac{1}{w_{i_t+1,t}(\Gamma_{i_t + 1})}.
\]
\end{claim}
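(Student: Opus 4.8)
The plan is to run the averaging computation behind Claim~\ref{sameside} with the tiny ball $\Gamma_{i_t+1}$ in place of a single point, and then to argue that the perturbation $\delta$ steers the query hyperplane so that, with constant probability, a sizeable slice of $w_{i_t+1,t}$ ends up on the side of $\dot{x_t}{q}=\hat y$ opposite to $q_0$. First I would record a ``ball version'' of Claim~\ref{sameside}: if $\Gamma_{i_t+1}$ lies entirely on one side of the query hyperplane --- which, since $q_0\in\Gamma_{i_t+1}$, is automatically the side of $q_0$ --- then the identical averaging over the noise gives
\[
\E\bigl[\,1/w_{i_t+1,t+1}(\Gamma_{i_t+1})\,\bigm|\,\delta\,\bigr]\;\le\;\Bigl(1-\tfrac{\eta}{10}X_\delta\Bigr)\,\frac{1}{w_{i_t+1,t}(\Gamma_{i_t+1})},
\]
where $X_\delta:=\int_{\sign(\dot{x_t}{q}-\hat y)\neq\sign(\dot{x_t}{q_0}-\hat y)}w_{i_t+1,t}(q)\,dq$; and on the event that $\Gamma_{i_t+1}$ straddles the hyperplane, the trivial bound $1/w_{i_t+1,t+1}(\Gamma_{i_t+1})\le(1-\eta)^{-1}\,1/w_{i_t+1,t}(\Gamma_{i_t+1})$ always holds. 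Since $\hat y=y+\delta$ with $|\delta|\le 2\beta_{i_t}$ and $|y-\dot{x_t}{q_0}|\le\beta_{i_t}$ by hypothesis, the hyperplane is within $3\beta_{i_t}$ of $\dot{x_t}{q_0}$, and because $\Gamma_{i_t+1}$ has radius $\alpha_{i_t+1}\ll\beta_{i_t}$, it straddles the hyperplane only when $\delta$ falls in an interval of length $\le 2\alpha_{i_t+1}$, i.e.\ with probability at most $\alpha_{i_t+1}/(2\beta_{i_t})$, which is doubly-exponentially small.

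The real work is lower-bounding $X_\delta$ for a constant fraction of the perturbations. Here I would use that $i_t$ is the \emph{largest} index passing either selection test, so $i_t+1$ fails the first one: no strip of width $10\gamma_{i_t+1}$ in direction $x_t$ carries $\ge 1-\gamma_{i_t+1}^{4d}$ of the mass of $w_{i_t+1,t}$. Applying this to the strip $\{q:|\dot{x_t}{q}-\dot{x_t}{q_0}|\le 5\gamma_{i_t+1}\}$, its complement carries more than $\gamma_{i_t+1}^{4d}$ of $w_{i_t+1,t}$, so (without loss of generality) the ``upper'' piece $R=\{q:\dot{x_t}{q}>\dot{x_t}{q_0}+5\gamma_{i_t+1}\}$ carries more than $\tfrac12\gamma_{i_t+1}^{4d}$. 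Since $3\beta_{i_t}\ll 5\gamma_{i_t+1}$, every point of $R$ satisfies $\dot{x_t}{q}>\hat y$ for every value of $\delta$; and since $|y-\dot{x_t}{q_0}|\le\beta_{i_t}$ there is at least $\beta_{i_t}$ of slack in the interval $[-2\beta_{i_t},2\beta_{i_t}]$ on the side that pushes $\hat y$ above $\dot{x_t}{q_0}+\alpha_{i_t+1}$. Hence with probability at least $1/5$ over $\delta$, the whole ball $\Gamma_{i_t+1}$ (and in particular $q_0$) lies on the side $\dot{x_t}{q}<\hat y$, opposite $R$, forcing $X_\delta\ge w_{i_t+1,t}(R)>\tfrac12\gamma_{i_t+1}^{4d}$.

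It remains to assemble these. On the probability-$\ge 1/5$ steering event the ball-version estimate gives a drop of $1-\tfrac{\eta}{20}\gamma_{i_t+1}^{4d}$; on the remaining non-straddling $\delta$'s the estimate with $X_\delta\ge 0$ shows $1/w$ does not increase; on the probability-$\le\alpha_{i_t+1}/(2\beta_{i_t})$ straddling event $1/w$ grows by at most $(1-\eta)^{-1}\le 1+2\eta$. Averaging,
\[
\E\bigl[\,1/w_{i_t+1,t+1}(\Gamma_{i_t+1})\,\bigr]\;\le\;\Bigl(1-\tfrac{1}{5}\cdot\tfrac{\eta}{20}\gamma_{i_t+1}^{4d}+2\eta\cdot\tfrac{\alpha_{i_t+1}}{2\beta_{i_t}}\Bigr)\,\frac{1}{w_{i_t+1,t}(\Gamma_{i_t+1})}\;\le\;\Bigl(1-\tfrac{\eta}{100}\gamma_{i_t+1}^{4d}\Bigr)\,\frac{1}{w_{i_t+1,t}(\Gamma_{i_t+1})},
\]
the last step absorbing the $\alpha_{i_t+1}/\beta_{i_t}$ term. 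Finally I would check $\tfrac{\eta}{100}\gamma_{i_t+1}^{4d}\ge\gamma_{i_t}^{10d}$: with $\eta=1/(2d^{10})$ and $\gamma_i=2^{-i}$ this amounts to $2^{6di_t-4d}\ge 400d^{10}$, which holds for all $i_t\ge 1$ thanks to the multiplicative gap between the exponents $4d$ and $10d$ (for the finitely many small $d$ one enlarges the constants defining $\gamma_i$ and $\beta_i$, which is harmless for an $O(\poly(d))$ bound). This yields the claimed inequality.

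The hard part is the lower bound on $X_\delta$, which is where the Euclidean structure and the three-scale hierarchy $\alpha_{i+1}\ll\beta_i\ll\gamma_{i+1}$ enter: the ball $\Gamma_{i_t+1}$ must be far thinner than the perturbation scale (so it does not straddle), the perturbation scale must be far thinner than the width scale (so the heavy piece $R$ lies cleanly on one side of the hyperplane for all $\delta$), and the maximality of $i_t$ must force non-concentration of $w_{i_t+1,t}$. One point worth flagging is that the guaranteed progress is advertised at the \emph{coarse} scale ($\gamma_{i_t}^{10d}$) even though it is produced at the \emph{fine} scale ($\gamma_{i_t+1}^{4d}$); this only closes because the exponents were designed with a constant-factor gap.
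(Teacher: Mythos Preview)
Your proposal is correct and follows essentially the same approach as the paper: a ball version of Claim~\ref{sameside}, a small straddling probability $O(\alpha_{i_t+1}/\beta_{i_t})$, maximality of $i_t$ forcing $\ge\tfrac12\gamma_{i_t+1}^{4d}$ mass of $w_{i_t+1,t}$ outside a thin strip, and a constant-probability perturbation event placing that mass opposite $q_0$. The only cosmetic difference is that the paper centers its strip at the median $y$ (using $X^\pm$ defined relative to $y\pm 2\beta_{i_t}$) whereas you center yours at $\langle x_t,q_0\rangle$; since $|y-\langle x_t,q_0\rangle|\le\beta_{i_t}\ll\gamma_{i_t+1}$ both placements work, and the final numerical slack issue for small $d$ you flag is present in the paper's computation as well.
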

\begin{proof}
Recall that $y$ is the median of $w_{i_t, t}$ in direction $x_t$ as computed by our algorithm. Now, define the two quantities
\[
X^{-} = \int_{\dot{x_t}{q} \leq y - 2\beta_{i_t}} w_{i_t + 1, t}(q)dq 
\quad \text{and} \quad
X^{+} = \int_{\dot{x_t}{q} \geq y +  2\beta_{i_t}} w_{i_t + 1, t}(q)dq.
\]

These quantities represent the mass of $w_{i_t + 1}$ above and below the strip of width $2\beta_{i_t}$ around the median. Note that by the maximality of $i_t$, either $X^- \geq \gamma_{i_t + 1}^{4d}/2$ or $X^+ \geq \gamma_{i_t + 1}^{4d}/2$; if not, then there exists a strip of width $2\beta_{i_t} \leq 10\gamma_{i_t+1}$ containing at least $1 - \gamma_{i_t+1}^4d$. Without loss of generality, assume $X^- \geq \gamma_{i_t + 1}^{4d}/2$. 

Now, recall that $\hat{y}$ is chosen uniformly in the interval $[y - 2\beta_{i_t}, y + 2\beta_{i_t}]$. We will divide the expectation in the theorem statement into three cases, based on where $\hat{y}$ lies.

\begin{itemize}
    \item \textbf{Case 1}: $\hat{y} \in [y - 2\beta_{i_t}, y - \beta_{i_t} - \alpha_{i_t+1}]$.
    
    This case occurs with probability $\frac{1}{4} - \frac{\alpha_{i_t+1}}{2\beta_{i_t}}$. Note that since $|y - \dot{x_t}{q_0}| \leq \beta_{i_t}$, in this case we also have that $\hat{y} \leq \dot{x_t}{q} - \alpha_{i_t+1}$. Therefore in this case we know that the ball $\Gamma_{i_t + 1}$ lies entirely to the left of the hyperplane $\dot{x_t}{q} = \hat{y}$. By applying Claim \ref{sameside}, we know that, conditioned on being in this case,
    
    \begin{eqnarray*}
    \E\left[\frac{1}{w_{i_t+1, t+1}(\Gamma_{i_t + 1})}\right] &\leq & \left(1 - 0.1X^{-}\eta\right)\frac{1}{w_{i_t+1, t+1}(\Gamma_{i_t + 1})} \\
    &\leq & \left(1 - 0.05\gamma_{i_t + 1}^{4d}\eta\right)\frac{1}{w_{i_t+1, t+1}(\Gamma_{i_t + 1})}.
    \end{eqnarray*}
    
    \item \textbf{Case 2}: $\hat{y} \in [\dot{x_t}{q_0} - \alpha_{i_t+1}, \dot{x_t}{q_0} + \alpha_{i_t + 1}]$.
    
    This case covers the $\hat{y}$ where the hyperplane $\dot{x_t}{q} = \hat{y}$ intersects the ball $\Gamma_{i_t + 1}$. For $\hat{y}$ in this case, we pessimistically bound the change in weight via 
    
    $$\frac{1}{w_{i_t+1, t+1}(\Gamma_{i_t + 1})} \leq \frac{1}{1 - \eta}\frac{1}{w_{i_t+1, t+1}(\Gamma_{i_t + 1})}.$$
    
    Luckily, this case occurs rarely, with probability $\frac{\alpha_{i_t + 1}}{\beta_{i_t}}$.
    
    \item \textbf{Case 3}: remainder of the interval $[y - 2\beta_{i_t}, y + 2\beta_{i_t}]$.
    
    Since case 1 and case 2 together cover at least $1/4$ of the interval, this case occurs with probability at most $3/4$. In this case the ball $\Gamma_{i_t+1}$ does not intersect the hyperplane (since all such $\hat{y}$ are covered by case 2). We can therefore apply (the weaker variant of) Claim \ref{sameside} to show that, conditioned on being in this case,

    $$\E\left[\frac{1}{w_{i_t+1, t+1}(\Gamma_{i_t + 1})}\right] \leq \frac{1}{w_{i_t+1, t+1}(\Gamma_{i_t + 1})}. $$
\end{itemize}

Combining these three cases, we have that

\begin{eqnarray*}
\E\left[\frac{1}{w_{i_t+1,t+1}(\Gamma_{i_t + 1})} \right] &\leq& \left(\left( \frac{1}{4} - \frac{\alpha_{i_t+1}}{2\beta_{i_t}}\right)\left( 1 - 0.05\gamma_{i_t + 1}^{4d}\eta \right) + \frac{\alpha_{i_t+1}}{\beta_{i_t}}\frac{1}{\left(1 - \eta \right)} + \frac{3}{4}\right)\frac{1}{w_{i_t+1,t}(\Gamma_{i_t + 1})}  \\ 
&\leq & \left(1 - \frac{1}{80}\gamma_{i_t}^{4d}\eta + 2\frac{\alpha_{i_{t+1}}}{\beta_{i_t}}\right)\frac{1}{w_{i_t+1,t}(\Gamma_{i_t + 1})} \leq  \frac{\left(1 - \gamma_{i_t}^{10d} \right)}{w_{i_t+1,t}(\Gamma_{i_t + 1})}.
\end{eqnarray*}

\end{proof}

When our guess is far from accurate, we can instead (more strongly) bound the decrease in $1/w_{i_t, t}(\Gamma_t)$. 

\begin{claim}\label{weightincrease2}
Assume $|y - \dot{x_t}{q_0}| > \beta_{i_t}$.  Then, in expectation both over the randomness in feedback and in our algorithm,
\[
\E\left[ \frac{1}{w_{i_t, t+1}(\Gamma_{i_t})}\right] \leq \left(1 - \frac{1}{d^{21}} \right)\frac{1}{w_{i_t,t}(\Gamma_{i_t})}
\]
\end{claim}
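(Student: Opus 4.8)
The plan is to run the same three-region case analysis as in the proof of Claim~\ref{weightincrease1}, but now at scale $i_t$ rather than $i_t+1$, tracking $w_{i_t,t}(\Gamma_{i_t})$ and crucially using that $y$ is the \emph{exact} weighted median of $w_{i_t,t}$ in direction $x_t$. Assume without loss of generality that $\dot{x_t}{q_0} < y - \beta_{i_t}$ and write $g = \dot{x_t}{q_0}$ (the case $g > y + \beta_{i_t}$ is symmetric). The first thing I would record is a ``ball version'' of Claim~\ref{sameside}: whenever the hyperplane $\dot{x_t}{q} = \hat{y}$ does not pass through $\Gamma_{i_t}$, every point of $\Gamma_{i_t}$ lies on the same side of $\hat{y}$ as $q_0$ (since $q_0 \in \Gamma_{i_t}$), so all points of $\Gamma_{i_t}$ receive the identical multiplicative weight update and the computation in the proof of Claim~\ref{sameside} goes through verbatim with $q_1$ replaced by the set $\Gamma_{i_t}$; this yields $\E[1/w_{i_t,t+1}(\Gamma_{i_t})] \leq (1 - \tfrac{1}{10}\eta X)/w_{i_t,t}(\Gamma_{i_t})$, where $X$ is the mass of $w_{i_t,t}$ strictly on the far side of $\hat{y}$ from $q_0$.

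Next I would split on where $\hat{y} = y + \delta$ lands, recalling that $\delta$ is uniform on $[-2\beta_{i_t}, 2\beta_{i_t}]$ and that $\alpha_{i_t}/\beta_{i_t} = 2^{-(10^4 d^2 - 100d)\, i_t}$ is astronomically small. Call the event $\delta \in [\max(-2\beta_{i_t},\, g - y + \alpha_{i_t}),\, 0]$ \emph{good}; since $g - y + \alpha_{i_t} < -\beta_{i_t} + \alpha_{i_t} < -\beta_{i_t}/2$, this event has probability at least $1/8$. On the good event $\hat{y} \leq y$, so the mass of $w_{i_t,t}$ above $\hat{y}$ is at least the mass above the median, i.e. $X \geq 1/2$; and $\hat{y} \geq g + \alpha_{i_t}$, so $\Gamma_{i_t}$ lies entirely below $\hat{y}$ (hence on the same side as $q_0$). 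The ball version of Claim~\ref{sameside} then gives a factor $(1 - \eta/20)$. On the complementary event there are only two possibilities: either the hyperplane still misses $\Gamma_{i_t}$, in which case the weak form of the ball version of Claim~\ref{sameside} gives a factor of at most $1$ (no increase); or the hyperplane meets $\Gamma_{i_t}$, which forces $\hat{y} \in [g - \alpha_{i_t}, g + \alpha_{i_t}]$ and hence has probability at most $\alpha_{i_t}/(2\beta_{i_t})$, and there we use the crude bound $1/w_{i_t,t+1}(\Gamma_{i_t}) \leq \tfrac{1}{1-\eta} \cdot 1/w_{i_t,t}(\Gamma_{i_t})$ (each point's weight is scaled by at least $1-\eta$ before renormalizing by a factor $\leq 1$).

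Combining the three contributions and writing $p_g \geq 1/8$ for the good-event probability and $p_s \leq \alpha_{i_t}/(2\beta_{i_t})$ for the straddling probability,
\[
\E\!\left[\frac{1}{w_{i_t,t+1}(\Gamma_{i_t})}\right] \;\leq\; \Big(1 - \frac{p_g\,\eta}{20} + \frac{p_s\,\eta}{1-\eta}\Big)\frac{1}{w_{i_t,t}(\Gamma_{i_t})} \;\leq\; \Big(1 - \frac{\eta}{200}\Big)\frac{1}{w_{i_t,t}(\Gamma_{i_t})},
\]
where the last inequality uses $\eta \leq 1/2$ and $\alpha_{i_t}/\beta_{i_t} \leq 2^{-100}$, so the straddling term is utterly negligible against the gain from the good event. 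Plugging in $\eta = 1/(2d^{10})$ gives $\eta/200 \geq 1/d^{21}$ for all $d \geq 2$, which is the claimed bound. The only delicate point — and the place where the Euclidean geometry and the scale separation $\alpha_{i_t} \ll \beta_{i_t}$ are really used — is verifying that in the good event the ball is genuinely on the correct side of $\hat y$, and that $1/w$ can increase only in the rare straddling event; once those are pinned down the arithmetic has enormous slack.
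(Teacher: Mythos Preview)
Your proof is correct and follows essentially the same approach as the paper's own proof: both use the three-region decomposition (a ``good'' sub-interval of $\hat y$ where the ball $\Gamma_{i_t}$ is entirely on $q_0$'s side and the median property forces $X\geq 1/2$, the rare straddling event bounded crudely, and the remainder handled by the weak form of Claim~\ref{sameside}). The only differences are cosmetic --- you take the opposite WLOG direction and use a slightly more conservative good-event probability of $1/8$ rather than the paper's $\tfrac14 - \alpha_{i_t}/\beta_{i_t}$ --- but the structure and the key use of the median and the scale separation $\alpha_{i_t}\ll\beta_{i_t}$ are identical.
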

\begin{proof}
We essentially repeat the logic from the proof of Claim \ref{weightincrease1}, with the change that we can more strongly lower bound $X^{-}$. Without loss of generality, assume that $y < \dot{x_t}{q_0} - \beta_{i_t}$. Define

\[
X^{-} = \int_{\dot{x_t}{q} \leq \hat{y}}w_{i_t,t}(q)dq.
\]
\noindent
Note that since $y$ is the weighted median of $w_{i_t, t}$ in the direction $x_t$, $X^{-} \geq 1/2$. 

Now, we again have three cases. To begin, with probability $\frac{1}{4} - \frac{\alpha_{i_t}}{\beta_{i_t}}$, $\hat{y}$ lies in the interval $[y, y + \beta_{i_t} - \alpha_{i_t}]$. Since $y + \beta_{i_t} < \dot{x_t}{q_0}$, the hyperplane $\dot{x_t}{q} = \hat{y}$ does not intersect $\Gamma_{i_t}$, and therefore we can apply Claim \ref{sameside} to show that

\begin{eqnarray*}
\E\left[\frac{1}{w_{i_t+1, t+1}(\Gamma_{i_t + 1})}\right] &\leq & \left(1 - 0.1X^{-}\eta\right)\frac{1}{w_{i_t+1, t+1}(\Gamma_{i_t + 1})} \\
&\leq & \left(1 - 0.05\eta \right)\frac{1}{w_{i_t+1, t+1}(\Gamma_{i_t + 1})}.
\end{eqnarray*}

Likewise, the probability that the hyperplane $\dot{x_t}{q} = \hat{y}$ intersects $\Gamma_{i_t}$ is at most $\frac{\alpha_{i_t}}{\beta_{i_t}}$ (in which case we can pessimistically bound the decrease in weight as in the proof of Claim \ref{weightincrease1}), and with the remaining $3/4$ probability the hyperplane $\dot{x_t}{q} = \hat{y}$ does not intersect $\Gamma_{i_t}$, and we can apply the weaker variant of Claim \ref{sameside}. Combining these observations, we get that

\begin{eqnarray*}
\E\left[ \frac{1}{w_{i_t, t+1}(\Gamma_{i_t})}\right] &\leq& \left(\left( \frac{1}{4} - \frac{\alpha_{i_t}}{\beta_{i_t}} \right) \left(1 - 0.05\eta \right) + \frac{\alpha_{i_t}}{\beta_{i_t}}\frac{1}{\left( 1 - \eta \right)} + \frac{3}{4}\right) \frac{1}{w_{i_t,t}(\Gamma_{i_t})} \\ 
&\leq& \left(1 - \frac{1}{d^{21}} \right)\frac{1}{w_{i_t,t}(\Gamma_{i_t})}.
\end{eqnarray*}

\end{proof}

\subsubsection*{Step 2: Bounding the number of rounds}

In this step we will bound the total number of rounds in each scale $i$. Specifically, our algorithm ensures that we move onto the next scale once either the weight concentrates on a strip or once $C_i$ grows large enough - we will show with high probability that this is always due to the weight concentrating on a small strip. 

Let $A_i$ be the number of rounds $t$ such that $i_t = i$ and $|y - \dot{x_t}{q_0}| \leq \beta_i$ (i.e. the number of rounds where we are ``accurate'').  Let $B_i$ be the number of rounds $t$ such that $i_t = i$ and $|y - \dot{x_t}{q_0}| > \beta_i$ (i.e. the number of rounds where we are ``inaccurate'').  Note that $A_i + B_i = C_i$. Also, recall that our algorithm ensures that $C_{i} \leq 100(\frac{d^4i}{\gamma_i^{10d}} + d^{25}i) + 1$ for all rounds $t$.

We first show that with high probability, $B_i$ will be no larger than $O(\poly(d)i)$.
\begin{claim}\label{farquerybound}
For any constant $c > 0$, with probability at least
$1 - {2^{-d^4i c}}$
we have throughout all rounds that
$$B_i \leq 100d^{25}i (1 + c)$$
\end{claim}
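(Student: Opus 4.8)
The plan is to use the super-martingale $1/w_{i,t}(\Gamma_i)$ whose behavior is governed by Claims \ref{weightincrease1} and \ref{weightincrease2}. The key point is that whenever we are in scale $i$ and \emph{inaccurate} (i.e. $|y - \dot{x_t}{q_0}| > \beta_i$), Claim \ref{weightincrease2} tells us $1/w_{i,t}(\Gamma_i)$ shrinks by a factor $(1 - d^{-21})$ in expectation, while on all other rounds it can only grow by a negligible amount. Concretely, I would define a rescaled process: let $Z_t = s_t / w_{i,t}(\Gamma_i)$ where $s_t$ is an auxiliary deterministic sequence that multiplies by $(1 - d^{-21})^{-1}$ on every inaccurate round in scale $i$ and stays put (or multiplies by a tiny correction factor absorbing the ``accurate'' and ``not in scale $i$'' increases) otherwise. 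By the two claims, $\E[Z_{t+1} \mid \mathcal{F}_t] \leq Z_t$, so $Z_t$ is a super-martingale with $Z_0 = s_0 / w_{i,0}(\Gamma_i) = s_0 \cdot \Vol(\B(0,1))/\Vol(\Gamma_i) \leq s_0 \cdot (2/\alpha_i)^d$. Since $w_{i,t}(\Gamma_i) \leq 1$ always, $Z_T \geq s_T$, and $s_T \geq (1 - d^{-21})^{-B_i}$ once $B_i$ inaccurate rounds have occurred in scale $i$.

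The tail bound then comes from a (super-martingale) Markov-type estimate: $\Pr[Z_T \geq \lambda \cdot Z_0] \leq 1/\lambda$ for any $\lambda > 0$ — more carefully, one should apply this to the stopped process at the first time $B_i$ exceeds the claimed threshold, so that the bound holds ``throughout all rounds'' and not merely at time $T$. Setting the threshold $B_i^* = 100 d^{25} i (1+c)$, we get $s_{T} \geq (1 - d^{-21})^{-B_i^*} \geq \exp(B_i^* / (2 d^{21})) = \exp(50 d^{4} i (1+c))$ (using $d^{25} / d^{21} = d^4$ and $(1-x)^{-1} \geq e^x$). Meanwhile $Z_0 \leq s_0 (2/\alpha_i)^d$ and with $\alpha_i = 2^{-10^4 d^2 i}$ this is at most $\exp(O(d^3 i \log d) + \log(1/s_0))$. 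Choosing $s_0$ a fixed small constant (say $1$), the ratio $s_T / Z_0 \geq \exp(50 d^4 i (1+c) - O(d^3 i \log d)) \geq \exp(d^4 i (1 + c))$ for $d$ large, so the stopped super-martingale inequality gives failure probability at most $\exp(-d^4 i (1+c)) \leq 2^{-d^4 i c}$ as claimed (after re-indexing constants). The $i$-dependence in the exponent is exactly what is needed later to union-bound over all scales $i$.

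The main obstacle I anticipate is handling the ``other'' rounds correctly — i.e. rounds where $i_t \neq i$ but the weight function $w_i$ is still being updated (recall the algorithm updates $w_{i_t}$ and $w_{i_t + 1}$ each round, so $w_i$ evolves both when $i_t = i$ and when $i_t = i - 1$), as well as the accurate rounds in scale $i$. For the rounds where $w_i$ is updated as the ``$i_t + 1$'' copy with an accurate guess, Claim \ref{weightincrease1} gives a genuine \emph{decrease}, which only helps. For rounds where $w_i$ is updated but the guess is far (or $w_i$ isn't updated at all), I need the weak variant of Claim \ref{sameside}, which says $1/w_{i,t+1}(\Gamma_i) \leq 1/w_{i,t}(\Gamma_i)$ in expectation — i.e. no increase. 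Combined with the fact that on rounds with $i_t = i$ and accurate guesses Claim \ref{weightincrease1} also gives a decrease, the auxiliary sequence $s_t$ really only needs to compensate for the inaccurate rounds, so actually no correction factor is needed: $s_t$ can simply be $(1-d^{-21})^{-(\text{number of inaccurate scale-}i\text{ rounds so far})}$. The one subtlety to verify carefully is that the relevant expectations are with respect to the correct filtration (including the adversary's choice of $x_t$, the algorithm's randomness in $\delta$, and the feedback noise), which the cited claims already handle since they are stated ``in expectation both over the randomness in feedback and in our algorithm'' and hold for every adversarial $x_t$.
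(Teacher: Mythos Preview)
Your overall strategy matches the paper's proof: build a nonnegative super-martingale $Y_t = s_t / w_{i,t}(\Gamma_i)$, arrange for $s_t$ to grow by a $(1-d^{-21})^{-1}$ factor on every inaccurate scale-$i$ round, and apply Doob's maximal inequality. The numerics in your second paragraph are essentially correct.

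The last paragraph, however, contains a real slip. Claim~\ref{weightincrease1} controls $w_{i_t+1}$, not $w_{i_t}$; when $i_t=i$ and the guess is accurate it says nothing about $w_i$. And the ``weak variant'' of Claim~\ref{sameside} only applies when the entire ball $\Gamma_i$ lies on the same side of the hyperplane as $q_0$ --- on accurate rounds in scale $i$, or on arbitrary rounds with $i_t=i-1$, the hyperplane $\langle x_t,q\rangle=\hat y$ may pass within $\alpha_i$ of $q_0$ and slice through $\Gamma_i$, so you cannot conclude $\E[1/w_{i,t+1}(\Gamma_i)]\le 1/w_{i,t}(\Gamma_i)$ outright. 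The correct tool for all of these ``other'' rounds is Claim~\ref{closest}, which absorbs the cut-through event into a factor $1/(1-\alpha_i/\beta_i)$. So a correction factor \emph{is} needed in $s_t$ (exactly as you allowed for in your first paragraph): the paper sets $Z_{t+1}=(1-\alpha_i/\beta_i)Z_t$ on every round with $i_t\in\{i-1,i\}$ that is not an inaccurate scale-$i$ round. This is harmless only because the algorithm hard-caps $C_{i-1}$ and $C_i$, so there are at most $O(d^{25}i/\gamma_i^{10d})$ such rounds, and $\alpha_i/\beta_i$ is so minuscule that $(1-\alpha_i/\beta_i)^{C_{i-1}+C_i}$ is essentially $1$ in the final computation. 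Your attempt to argue the correction away is the one place the proposal goes wrong; revert to the plan in your first paragraph and invoke Claim~\ref{closest}.
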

\begin{proof}
We will construct a sequence $Z_t$ so that $\frac{Z_t}{w_{i,t}(\Gamma_i)}$ is a super-martingale. Consider the sequence $Z_t$ defined as follows.
\begin{itemize}
    \item $Z_1 = \left(\frac{\alpha_i}{2}\right)^d$.
    \item If $i_t \not\in \{i, i-1\}$ then $Z_{t+1} = Z_t$.
    \item If $i_t = i$ and $|y - \dot{x_t}{q}| \leq \beta_i$ or $i_t = i-1$ then $Z_{t+1} = \left(1 - \frac{\alpha_i}{\beta_i}\right)Z_t$.
    \item If $i_t = i$ and $|y - \dot{x_t}{q}| > \beta_i$ then $Z_{t+1} = \left(1 + \frac{1}{d^{21}}\right)Z_t$.
\end{itemize}
Consider the ratio $Y_t = \frac{Z_t}{w_{i,t}(\Gamma_i)}$.  Note that Claim \ref{closest} and Claim \ref{weightincrease2} imply that
\[
\E[Y_{t+1} | Y_t] \leq Y_t,
\]
so $Y_t$ is a super-martingale.

Now, note that $Y_1 \leq 1$ since 
\[
w_{i,1}(\Gamma_i) = \frac{\Vol(\B(q_0, \alpha_i) \cap \B(0,1))}{\Vol(\B(0,1))} \geq \frac{1}{2^d}\cdot \frac{\Vol(\B(q_0, \alpha_i))}{\Vol(\B(0,1))} = \frac{\alpha_i^d}{2^d}
\]
(Here we have used the fact that $\Vol(\B(q_0, \alpha_i) \cap \B(0,1))$ must contain a ball of radius $\alpha_i / 2$.) Since $Y_t$ is a non-negative super-martingale, by Doob's martingale inequality it holds that for any constant $M$
\[
\Pr[\exists t| Y_t \geq M] \leq \frac{1}{M}.
\]
However note that if there exists a round where $B_i \geq 100d^{25}i (1 + c)$, then for $t$ sufficiently large

\begin{eqnarray*}
Z_t &\geq& Z_1\left(1 + \frac{1}{d^{21}} \right)^{B_i} \left( 1 - \frac{\alpha_i}{\beta_i}\right)^{C_i + C_{i-1}}\\ 
&\geq& \frac{\alpha_i^d}{2^d} \left(1 + \frac{1}{d^{21}} \right)^{100d^{25}i (1 + c)} \left(1 - \frac{\alpha_i}{\beta_i} \right)^{1000\left(\frac{d^4i}{\gamma_i^{10d}} + d^{25}i\right) } \\
&\geq& \frac{\alpha_i^d}{2^{d}} \cdot  2^{d^4i(1+c)} \\
&\geq& 2^{d^4ic}.
\end{eqnarray*}

(Here in the last inequality we have used the fact that $2^{d^4i} \geq (2/\alpha_i)^d$). Since $w_{i,t}(\Gamma_i) \leq 1$ for all $t$, this implies that $Y_t \geq 2^{d^4ic}$, which immediately implies the desired claim.
\end{proof}

Recall that in our algorithm, we check the following two conditions for determining the scale $i$ we use for the current query:
\begin{itemize}
    \item There exists $a, b \in \R$ such that $|a - b| \leq 10\gamma_i$ and $\int_{a \leq \dot{x_t}{q} \leq b} w_i(q) dq \geq 1 -\gamma_i^{4d}$.
    \item $C_{i-1} > 100(\frac{d^4(i-1)}{\gamma_{i-1}^{10d}} + d^{25}(i-1))$.
\end{itemize}

We now show that with high probability, only the first condition is ever relevant.

\begin{claim}\label{counterbound}
For an index $i$, the probability that we ever have
$$C_{i} \geq 100\left(\frac{d^4i}{\gamma_i^{10d}} + d^{25}i\right)$$ is at most
${2^{-d^4i}}$.
\end{claim}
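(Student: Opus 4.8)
The plan is to bound $C_i = A_i + B_i$, where (as in Step~2) $A_i$ counts the rounds $t$ with $i_t = i$ on which we are accurate ($|y - \dot{x_t}{q_0}| \le \beta_i$) and $B_i$ counts the inaccurate ones. Claim~\ref{farquerybound} already controls $B_i$: with $c = 1$ it gives $B_i \le 200 d^{25} i$ throughout all rounds except with probability $\le 2^{-d^4 i}$. So the genuinely new content is a high-probability upper bound $A_i \lesssim d^4 i / \gamma_i^{10d}$; the claim then follows because the event $C_i \ge 100(d^4 i/\gamma_i^{10d} + d^{25} i)$ forces either $A_i \ge 100\, d^4 i/\gamma_i^{10d}$ or $B_i \ge 100\, d^{25} i$, and we bound the probability of each (adjusting the universal constants so that the two pieces together cost at most $2^{-d^4 i}$).

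To bound $A_i$ I would re-run the super-martingale argument of Claim~\ref{farquerybound}, but now tracking $w_{i+1,t}(\Gamma_{i+1})$, the weight of the tiny ball $\Gamma_{i+1} = \B(q_0,\alpha_{i+1})\cap\B(0,1)$ under the \emph{next} scale's weight function. The key point is that $w_{i+1}$ is updated exactly on the rounds with $i_t \in \{i, i+1\}$, and on a round with $i_t = i$ that is accurate, Claim~\ref{weightincrease1} gives a true multiplicative decrease $\E[1/w_{i+1,t+1}(\Gamma_{i+1})] \le (1-\gamma_i^{10d})\, / w_{i+1,t}(\Gamma_{i+1})$. On every other round touching $w_{i+1}$ — an inaccurate round with $i_t = i$, or any round with $i_t = i+1$ — the ball version of Claim~\ref{closest} (used in precisely the same way as in the proof of Claim~\ref{farquerybound}) shows $1/w_{i+1,t}(\Gamma_{i+1})$ can grow in expectation by at most a factor $(1 - O(\alpha_{i+1}/\beta_{i+1}))^{-1}$, since a perturbation of magnitude $\ge \beta_{i+1}$ separates $\Gamma_{i+1}$ from $q_0$ with probability $O(\alpha_{i+1}/\beta_{i+1})$. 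Defining $Z_t$ to multiply by $(1-\gamma_i^{10d})^{-1}$ on accurate scale-$i$ rounds, by $(1 - O(\alpha_{i+1}/\beta_{i+1}))$ on the remaining $w_{i+1}$-updating rounds, and to stay fixed otherwise, makes $Y_t := Z_t / w_{i+1,t}(\Gamma_{i+1})$ a non-negative super-martingale; with $Z_1 = (\alpha_{i+1}/2)^d$ we get $Y_1 \le 1$ because $w_{i+1,0}(\Gamma_{i+1}) \ge (\alpha_{i+1}/2)^d$ (the intersection of $\B(q_0,\alpha_{i+1})$ with the unit ball contains a ball of radius $\alpha_{i+1}/2$). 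Since $w_{i+1,t}(\Gamma_{i+1}) \le 1$ always, $Y_t \ge Z_t$, and Doob's martingale inequality gives $\Pr[\exists t: Y_t \ge M] \le 1/M$.

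It then remains to check that the ``growth'' part of $Z_t$ is negligible. The number of inaccurate scale-$i$ rounds and the number of scale-$(i+1)$ rounds are bounded \emph{deterministically} by the algorithm's hard cap ($C_j \le 100(d^4 j/\gamma_j^{10d} + d^{25} j) + 1$ for all $j$) — this is exactly where the potential circularity is avoided, since we use only the built-in cap on $C_{i+1}$, not any probabilistic statement about it. As $\alpha_{i+1} = 2^{-10^4 d^2(i+1)}$ is astronomically smaller than $\beta_i, \beta_{i+1} = 2^{-O(di)}$, the product of $\alpha_{i+1}/\beta_{i+1}$ with these caps is $2^{-\Omega(d^2 i)}$, so the cumulative growth factor of $Z_t$ over all such rounds is at least a constant, say $\ge 1/4$. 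Hence on any round where $A_i$ reaches $100\, d^4 i/\gamma_i^{10d}$, we get $Z_t \ge \tfrac14 (\alpha_{i+1}/2)^d (1-\gamma_i^{10d})^{-A_i} \ge \tfrac14 (\alpha_{i+1}/2)^d e^{\Omega(d^4 i)}$, which exceeds $2^{d^4 i}$ since $(\alpha_{i+1}/2)^d = 2^{-O(d^3 i)}$; by Doob this happens with probability at most $2^{-d^4 i}$. Combining with the $B_i$ bound completes the argument.

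The main obstacle is the bookkeeping: carefully identifying which weight functions are updated on which rounds so that the \emph{only} multiplicative-decrease events for $1/w_{i+1,t}(\Gamma_{i+1})$ are the accurate scale-$i$ rounds, with every update coming from scale $i+1$ or from an inaccurate scale-$i$ round controlled by the ball version of Claim~\ref{closest}; and then confirming that the accumulated effect of those off-scale updates is genuinely dominated by the exponential growth $(1-\gamma_i^{10d})^{-A_i}$. This is precisely where the huge separations between the parameters $\alpha$, $\beta$, $\gamma$ and the deterministic hard cap on the counters $C_j$ are indispensable, and where the constant-chasing is most delicate.
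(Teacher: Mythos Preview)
Your proposal is correct and follows essentially the same approach as the paper: both build a super-martingale $Y_t = Z_t / w_{i+1,t}(\Gamma_{i+1})$, using Claim~\ref{weightincrease1} for accurate scale-$i$ rounds and Claim~\ref{closest} for the remaining rounds that touch $w_{i+1}$, then invoke the deterministic hard cap on $C_{i+1}$ to control the off-scale drift and Claim~\ref{farquerybound} to handle $B_i$. The only cosmetic difference is organizational: you split the bad event into $\{A_i\ \text{large}\} \cup \{B_i\ \text{large}\}$ and union-bound, whereas the paper conditions on $C_i$ being large and on the high-probability event $B_i \le 1100\,d^{25}i$ to force $A_i$ large inside the super-martingale argument.
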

\begin{proof}
Again, we will construct a sequence $Z_t$ so that $\frac{Z_t}{w_{i+1,t}(\Gamma_{i+1})}$ is a super-martingale. Consider the sequence $Z_t$ defined as follows.
\begin{itemize}
    \item $Z_1 = \frac{\alpha_{i+1}^d}{2^d}$
    \item If $i_t \not\in \{i, i+1\}$ then $Z_{t+1} = Z_t$.
    \item If $i_t = i$ and $|y - \dot{x_t}{q_0}| > \beta_i$ or $i_t = i+1$ then $Z_{t+1} = \left(1 - \frac{\alpha_{i+1}}{\beta_{i+1}} \right)Z_t$
    \item If $i_t = i$ and $|y - \dot{x_t}{q_0}| \leq \beta_i$ then $Z_{t+1} = \left( 1 + \gamma_i^{10d} \right) Z_t$
\end{itemize}
Consider the ratio $Y_t = \frac{Z_t}{w_{i+1,t}(\Gamma_{i+1})}$. Similarly as in the proof of Claim \ref{farquerybound}, $Y_1 \leq 1$. Note that Claim \ref{closest} and Claim \ref{weightincrease1} imply that 
\[
\E[Y_{t+1}] \leq Y_t
\]
\noindent
so $Y_t$ is a super-martingale.

Now assume that $C_{i} \geq 100\left(\frac{d^4i}{\gamma_i^{10d}} + d^{25}i\right)$. By the constraints of our algorithm, we are guaranteed that
\[
C_{i+1} \leq 100\left(\frac{d^4(i+1)}{\gamma_{i+1}^{10d}} + d^{25}(i+1)\right) + 1\leq \frac{1000d^{25}i}{\gamma_{i+1}^{10d}}
\]
Also by Claim \ref{farquerybound}, with probability at least $1 - \frac{1}{2^{10d^4i}}$, $B_i \leq 1100d^{25}i$ over all rounds $t$.  This implies that eventually
\[
A_i = C_i - B_i \geq 99\left(\frac{d^4i}{\gamma_i^{10d}} \right) 
\]
Thus, for sufficiently large $t$, we have that
\[
Z_t \geq \frac{\alpha_{i+1}^d}{2^d} \left(1 + \gamma_i^{10d} \right)^{99\left(\frac{d^4i}{\gamma_i^{10d}} \right) } \left(1 - \frac{\alpha_{i+1}}{\beta_{i+1}} \right)^{C_{i+1} + B_i} \geq 2^{2d^4i}
\]
However note $Y_1 \leq 1$ and $Y_t$ is a supermartingale.  Also, $w_{i+1,t}(\Gamma_{i+1}) \leq 1$ for all rounds $t$.  Thus, by Doob's martingale inequality, the probability we ever have $C_{i} \geq 100\left(\frac{d^4i}{\gamma_i^{10d}} + d^{25}i\right)$ is at most
\[
\frac{1}{2^{10d^4i}} + \frac{1}{2^{2d^4i}} \leq \frac{1}{2^{d^4i}}
\]
(the first term is from the probability that at some point $B_i \geq 1100d^{25}i$).
\end{proof}

\subsubsection*{Step 3: Proving $w_i$ concentrates near $q_0$}

We now aim to show that with high probability, if the weight function $w_i$ is concentrated on a thin strip, this strip must be close to the true point $q_0$ (this is necessary to bound the total regret we incur each round in scale $i$). To do this, we will argue that we can ``round" points to the $\alpha_i$-net $S_{\alpha_i}$ without significantly affecting their weight.  We will then rely on the geometric observation mentioned earlier: that for points $q_1,q_2 \in S_{\alpha_i}$ for some $i$ such that $q_0$, $q_1$, and $q_2$ are nearly collinear, we can relate the weights $w_{i,t}(q_1)$ and $w_{i,t}(q_2)$. We begin by relating the weights of collinear points. 

\begin{claim}\label{collinear}
Fix an index $i$. If $q_0$, $q_1$, and $q_2$ are collinear in that order, then with probability at least
$1 - {2^{-d^{10}i}}$
we have that for all rounds $t$
$$w_{i, t}(q_1) \geq \gamma_i^d w_{i,t}(q_2)$$
\end{claim}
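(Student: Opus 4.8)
The plan is to track the ratio $R_t = w_{i,t}(q_1)/w_{i,t}(q_2)$ and show it is very unlikely to ever drop below $\gamma_i^d$. The starting value is $R_1 = 1$ since the weight function is initialized to be uniform. The key geometric observation is that, since $q_0$, $q_1$, $q_2$ are collinear in that order (so $q_1$ lies on the segment from $q_0$ to $q_2$), any halfspace of the form $\{q : \dot{x_t}{q} \leq \hat{y}\}$ or $\{q : \dot{x_t}{q} \geq \hat{y}\}$ that contains both $q_0$ and $q_2$ must also contain $q_1$ (convexity of halfspaces). Consequently, in any round $t$ the only way the multiplicative weight update can decrease $R_t$ is if the guessed hyperplane $\dot{x_t}{q} = \hat{y}$ separates $q_1$ from $q_2$ with $q_2$ on the side of $q_0$ — but in that configuration $q_1$ is on the consistent side together with $q_0$, so $q_2$'s weight is multiplied by $(1-\eta)$ while $q_1$'s is not exactly when the feedback is \emph{not} flipped (probability $1-p = 2/3$), and the reverse happens when the feedback \emph{is} flipped (probability $p = 1/3$). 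So conditioned on the hyperplane separating $q_1$ from $q_2$ in this bad orientation, $R_{t+1}/R_t$ equals $1/(1-\eta)$ with probability $2/3$ and $(1-\eta)$ with probability $1/3$; in all other rounds $R_{t+1} = R_t$. (If the hyperplane separates them with $q_1$ on the $q_0$-side we also just get $R_{t+1} \geq R_t$; the only real case to handle is the one above.)

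Next I would set up a supermartingale to control the downward drift. Taking logarithms, $\log R_{t+1} - \log R_t$ in the relevant rounds is $+\log\frac{1}{1-\eta}$ w.p. $2/3$ and $-\log\frac{1}{1-\eta}$ w.p. $1/3$, so it has strictly positive expectation — the biased random walk drifts \emph{up}, which is exactly the direction we want. To get the high-probability bound I would exponentiate: consider $Y_t = \lambda^{-\log R_t}$ for a suitable $\lambda > 1$ (equivalently $Y_t = R_t^{-\theta}$ for small $\theta > 0$), chosen so that $\E[Y_{t+1} \mid \mathcal{F}_t] \leq Y_t$, i.e. so that $\frac{2}{3}(1-\eta)^{\theta} + \frac{1}{3}(1-\eta)^{-\theta} \leq 1$; such a $\theta = \Theta(1)$ exists (in fact any $\theta$ up to roughly $1$ works for small $\eta$, but a fixed constant like $\theta = 1/2$ suffices). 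Then $Y_t$ is a nonnegative supermartingale with $Y_1 = 1$, so by Doob's maximal inequality $\Pr[\exists t : Y_t \geq M] \leq 1/M$. Choosing $M = 2^{d^{10}i}$ gives failure probability at most $2^{-d^{10}i}$, and on the complementary event $R_t^{-\theta} < 2^{d^{10}i}$ for all $t$, i.e. $R_t > 2^{-d^{10}i/\theta}$ for all $t$. It then remains to check $2^{-d^{10}i/\theta} \geq \gamma_i^d = 2^{-di}$, which holds comfortably for $\theta$ a constant and $d$ large (indeed $d^{10}i/\theta \gg di$), so we conclude $w_{i,t}(q_1) \geq \gamma_i^d\, w_{i,t}(q_2)$ for all $t$ with the claimed probability.

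One subtlety I would be careful about: the normalization step in the weight update multiplies every weight by the same factor $1/\sum_{f'} \hat w_{t+1}(f')$, so it does not affect the ratio $R_t$ at all — the analysis of $R_t$ can therefore ignore renormalization entirely. A second subtlety is that $q_1, q_2$ here are genuine points in $\B(0,1)$ (the weight function is a density), so ``$w_{i,t}(q_1)$'' should be read as the density value, and the multiplicative update $w_{i,t+1}(q) = (1-\eta) w_{i,t}(q)$ (before renormalization) applies pointwise; the ratio of two density values transforms exactly as described, so nothing changes. The main obstacle — and the only place the argument has any content beyond routine random-walk estimates — is the geometric collinearity fact that pins down which rounds can possibly decrease $R_t$ and with what orientation; once that is in hand, the supermartingale/Doob argument is the same template already used in Claims \ref{farquerybound} and \ref{counterbound}.
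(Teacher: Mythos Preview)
Your overall strategy is the same as the paper's: track the ratio of the two density values, observe via the collinearity/convexity argument that the ratio can only change in rounds where the hyperplane separates $q_1$ from $q_2$ (and then $q_1$ must be on the $q_0$-side), raise the inverse ratio to a power $\theta$ to get a nonnegative supermartingale, and apply Doob. That is exactly what the paper does, with $R_t = (w_{i,t}(q_2)/w_{i,t}(q_1))^{d^9}$.

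However, your choice of the exponent is where the argument breaks. You take $\theta$ to be an absolute constant, but then your final inequality is backwards: from $R_t^{-\theta} < 2^{d^{10}i}$ you get $R_t > 2^{-d^{10}i/\theta}$, and for this to imply $R_t \geq \gamma_i^d = 2^{-di}$ you need $d^{10}i/\theta \leq di$, i.e.\ $\theta \geq d^9$, not $\theta = \Theta(1)$. Equivalently, if you fix $\theta$ constant and instead set the Doob threshold so that crossing it corresponds to $R_t < \gamma_i^d$, you only get failure probability $2^{-\Theta(di)}$, far short of the claimed $2^{-d^{10}i}$. The paper takes $\theta = d^9$; the supermartingale inequality $\tfrac{2}{3}(1-\eta)^{\theta} + \tfrac{1}{3}(1-\eta)^{-\theta} \leq 1$ still holds there because $\eta = 1/(2d^{10})$, so $(1-\eta)^{d^9} = 1 - o(1) \in [1/2,1]$ and one uses $\tfrac{2}{3}x + \tfrac{1}{3x} \leq 1$ on that range. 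Your remark that ``any $\theta$ up to roughly $1$ works'' undersells the truth: any $\theta$ up to roughly $1/\eta = \Theta(d^{10})$ works, and you must use this slack.

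A smaller point: your case analysis of which side $q_1$ and $q_2$ fall on is garbled. The configuration ``hyperplane separates $q_1$ from $q_2$ with $q_2$ on the side of $q_0$'' is precisely what your own convexity observation rules out; the only possible separating configuration has $q_1$ on the $q_0$-side, and in that case $R_{t+1}/R_t$ is random (equal to $1/(1-\eta)$ or $(1-\eta)$), not deterministically $\geq 1$ as your parenthetical suggests. Your subsequent computation of the two outcomes and their probabilities is nonetheless correct, so this is a wording issue rather than a mathematical one.
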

\begin{proof}
Consider a time step $t$ where $i_t = i$ or $i_t = i-1$.  We say a point is on the ``good" side of the hyperplane $\dot{x_t}{q} = \hat{y}$ if it is on the same side as $q_0$.  Otherwise we say the point is on the ``bad" side.  Note for $q_1, q_2$ satisfying the conditions of the claim, one of the following statements must be true:
\begin{itemize}
    \item \textbf{Case 1}: $q_1, q_2$ are on the same side of the hyperplane $\dot{x_t}{q} = \hat{y}$.
    \item \textbf{Case 2}: $q_1$ is on the good side of the hyperplane and $q_2$ is on the bad side of the hyperplane.
\end{itemize}

We will now consider the quantity $R_t = \left( \frac{w_{i,t}(q_2)}{w_{i,t}(q_1)} \right)^{d^9}$. Note that in Case 1, then $R_{t+1} = R_t$ (the relative weights remain unchanged if both $q_1$ and $q_2$ are on the same side of the hyperplane). In Case 2, 

\begin{eqnarray*}
\E[R_{t+1}] &=& \E\left[ \left( \frac{w_{i,t + 1}(q_2)}{w_{i,t + 1}(q_1)} \right)^{d^9}\right] = \left(\frac{2}{3} \left(1 - \eta \right)^{d^9} + \frac{1}{3} \left(\frac{1}{1 - \eta} \right)^{d^9}\right)\left( \frac{w_{i,t}(q_2)}{w_{i,t}(q_1)} \right)^{d^9} 
\leq  \left( \frac{w_{i,t}(q_2)}{w_{i,t}(q_1)} \right)^{d^9}
\end{eqnarray*}

Here the last inequality follows from the fact that $2x/3 + 1/(3x) \leq 1$ for all $x \in [1/2, 1]$, and $\eta = \Theta(d^{-10})$ so $(1 - \eta)^{d^9} = 1 - o(1) \in [1/2, 1]$. Note that this implies that $R_{t}$ is a non-negative super-martingale (with $R_1 = 1$). 

Now, if $w_{i, t}(q_1) < \gamma_i^d w_{i, t}(q_2)$, this would mean that

$$R_t = \left(\frac{w_{i, t}(q_2)}{w_{i, t}(q_1)}\right)^{d^9} > \gamma_i^{-d^{10}}.$$

By Doob's martingale inequality, the probability that this ever happens is at most $\gamma_i^{d^{10}} \leq 2^{-d^{10}i}$, as desired.
\end{proof}

We next relate the weights of nearby points $q_1$ and $q_2$. If $q_1$ and $q_2$ are close together, then it is unlikely they are ever separated by a hyperplane, and their weights should be similar. The following claim captures this intuition. 

\begin{claim}\label{approximate}
Fix an index $i$. If $q_1$, and $q_2$ satisfy $\norm{q_1 - q_2} \leq 2\beta_{i}^{10}$, then with probability at least
$1 - {2^{-d^{10}i}}$
we have that for all rounds $t$
$$w_{i, t}(q_1) \geq \gamma_i w_{i,t}(q_2)$$
\end{claim}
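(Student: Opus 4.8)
The plan is to control the log-ratio $\Lambda_t:=\ln\!\big(w_{i,t}(q_2)/w_{i,t}(q_1)\big)$ and show it almost never exceeds $\ln(1/\gamma_i)=i\ln 2$, which is exactly the inequality $w_{i,t}(q_1)\ge\gamma_i w_{i,t}(q_2)$. The first observation is that $\Lambda_t$ can change only in a round $t$ where $w_i$ is updated (i.e.\ $i_t\in\{i-1,i\}$) \emph{and} where the perturbed query hyperplane $\{q:\dot{x_t}{q}=\hat y\}$ separates $q_1$ from $q_2$; call such a round \emph{separating}. Indeed, the renormalization step scales all weights by the same factor and so cancels in the ratio, and if $q_1$ and $q_2$ lie on the same side of the hyperplane they are simultaneously consistent or inconsistent with the (possibly flipped) feedback, leaving the ratio untouched. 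In a separating round exactly one of $q_1,q_2$ violates the feedback, so its weight is multiplied by $1-\eta$ and $\Lambda_t$ changes by exactly $\pm\ln\frac1{1-\eta}$, which has magnitude at most $\tfrac1{d^{10}}$ since $\eta=\tfrac1{2d^{10}}$. As $w_{i,1}$ is uniform, $\Lambda_1=0$, so the bad event ``$\Lambda_t>i\ln 2$ for some $t$'' forces at least $i\ln 2\cdot d^{10}\ge\tfrac12 d^{10}i$ separating rounds.

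Next I would bound the number of separating rounds. Since $w_i$ is touched only when $i_t\in\{i-1,i\}$, there are at most $N:=C_{i-1}+C_i$ rounds in which $\Lambda_t$ can move, and the counter rule of the algorithm deterministically enforces $C_j\le 100\!\left(\tfrac{d^4 j}{\gamma_j^{10d}}+d^{25}j\right)+1$, so $N\le 2^{O(di)}$ (the term $\gamma_j^{-10d}=2^{10dj}$ dominates and absorbs the $\poly(d,i)$ factors). Moreover, conditioned on the full history up to round $t$ (in particular on $x_t$, on the median $y$, and on the scale $i_t$), the perturbation $\delta$ is uniform on an interval of length $4\beta_{i_t}\ge 4\beta_i$ (using $i_t\in\{i-1,i\}$ and $\beta_{i-1}>\beta_i$), whereas the set of $\delta$ for which the hyperplane separates $q_1$ and $q_2$ is an interval of length $\abs{\dot{x_t}{q_1-q_2}}\le\norm{q_1-q_2}\le 2\beta_i^{10}$ (as $\norm{x_t}\le 1$). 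Hence each such round is separating with conditional probability at most $p:=\tfrac{2\beta_i^{10}}{4\beta_i}=\tfrac12\beta_i^{9}=2^{-900di-1}$, so the number of separating rounds is stochastically dominated by $\mathrm{Bin}(N,p)$.

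The claim then follows from the binomial tail bound $\Pr[\mathrm{Bin}(N,p)\ge k]\le (eNp/k)^k$ with $k=\tfrac12 d^{10}i$: since $Np\le 2^{O(di)}2^{-900di}\le 2^{-800di}$ we have $eNp/k\le 2^{-700di}$, hence
\[
\Pr\!\big[\,\Lambda_t>i\ln 2\ \text{for some}\ t\,\big]\;\le\;\Pr\!\big[\,\ge\tfrac12 d^{10}i\ \text{separating rounds}\,\big]\;\le\;\big(2^{-700di}\big)^{d^{10}i/2}\;=\;2^{-\Theta(d^{11}i^2)}\;\le\;2^{-d^{10}i},
\]
and on the complementary event $w_{i,t}(q_1)\ge\gamma_i w_{i,t}(q_2)$ holds for every $t$.

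The main obstacle (and the reason the single-ratio supermartingale argument of Claim~\ref{collinear} does not carry over) is that $q_1$ and $q_2$ are merely close, not ordered around $q_0$, so a separating round can move $\Lambda_t$ in either direction and $\left(w_{i,t}(q_2)/w_{i,t}(q_1)\right)^{d^9}$ is no longer a supermartingale. The fix is to abandon the martingale and bound the raw \emph{count} of separating rounds; this works only because of the enormous gap between the perturbation/closeness scale $\beta_i=2^{-100di}$ and the target ratio $\gamma_i=2^{-i}$, together with the small step size $\eta=\Theta(d^{-10})$, which forces $\Omega(d^{10}i)$ separating rounds before any harm is done while making $\Omega(d^{10}i)$ independent rare ($\approx 2^{-900di}$) events over only $2^{O(di)}$ trials overwhelmingly unlikely. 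The only bookkeeping requiring care is that $\norm{x_t}\le 1$ and that the perturbation scale never drops below $\beta_i$ in a round affecting $w_i$, both immediate from the setup.
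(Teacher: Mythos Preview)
Your proposal is correct and follows essentially the same route as the paper's own proof: bound the deterministic cap on the number of rounds that can touch $w_i$ via the counter rule, observe each such round is separating with conditional probability $O(\beta_i^{9})$ because the uniform perturbation has width $\ge 4\beta_i$ while $\abs{\dot{x_t}{q_1-q_2}}\le 2\beta_i^{10}$, and then use a binomial tail bound to conclude that far fewer than the $\Theta(d^{9}i)$--$\Theta(d^{10}i)$ separations needed to drive the weight ratio below $\gamma_i$ ever occur. The only cosmetic difference is that the paper sets the threshold at $d^{9}i$ separating rounds (using $(1-\eta)^{d^{9}i}\ge\gamma_i$ directly) whereas you track the log-ratio and arrive at $\tfrac12 d^{10}i$; both thresholds work given the enormous slack in the parameters.
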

\begin{proof}
We will bound the number of rounds $t$ such that $i_t = i$ or $i_t = i-1$ and the hyperplane $\dot{x_t}{q} = \hat{y}$ intersects the segment connecting $q_1$ and $q_2$.  We will show that with high probability, this quantity is at most $d^9i$. Note that since $w_{i, t}(q_1)/w_{i, t}(q_2)$ is unchanged when $q_1$ and $q_2$ both lie on the same side of the hyperplane, and decreases by at most a factor of $(1-\eta)$ when they lie on different sides, this will show that with high probability 

\begin{eqnarray*}
w_{i, t}(q_1) &\geq& (1 - \eta)^{d^9i} w_{i, t}(q_2) 
\geq  (1 - 2d^{-10})^{d^9i} w_{i, t}(q_2) 
\geq 2^{-i}w_{i, t}(q_2) 
= \gamma_i w_{i, t}(q_2),
\end{eqnarray*}
\noindent
as desired.

Now, for a fixed round $t$, note that the probability that the hyperplane $\dot{x_t}{q} = \hat{y}$ intersects the segment connecting $q_1$ and $q_2$ is at most $\frac{\norm{q_1 - q_2}}{2\beta_i} \leq \beta_i^9$. There are at most 
\[
C_i + C_{i-1} \leq \frac{1000d^4i}{\gamma_i^{10d}}
\]
indices $t$ for which $i_t = i$ or $i_t = i-1$.  The probability that at least $d^9i$ of the hyperplanes $\dot{x_t}{q} = \hat{y}$ intersect the segment connecting $q_{2}$ and $q_1$ is at most
\[
\binom{\frac{1000d^4i}{\gamma_i^{10d}}}{d^9i}\beta_i^{9d^9i} \leq \left(\frac{1000d^4i\beta_i^9}{\gamma_i^{10d}} \right)^{d^9i} \leq 2^{-d^{10}i},
\]

\noindent
which implies our desired result.

\end{proof}

Finally, we apply Claims \ref{collinear} and \ref{approximate} to bound the relative weights for all approximately collinear pairs of points in $S_{\alpha}$. 

\begin{claim}\label{weightratio}
Fix an index $i$.  With probability at least $
1 - {2^{-d^{9}i}}$
the following claim holds: for all $q_1,q_2 \in S_{\alpha_i}$ such that 
\begin{itemize}
    \item The angle between the vectors $q_1 - q_0$ and $q_2 - q_0$ is at most $\beta_i^{10}$
    \item $\norm{q_1 - q_0} \leq \norm{q_2 - q_0}$
\end{itemize}
we have for all rounds $t$
\[
w_{i,t}(p_1) \geq \gamma_i^{2d}w_{i,t}(p_2)
\]
\end{claim}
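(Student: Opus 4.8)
The plan is to reduce the statement for an arbitrary approximately-collinear pair $q_1, q_2 \in S_{\alpha_i}$ (with $\|q_1 - q_0\| \le \|q_2 - q_0\|$ and angle at most $\beta_i^{10}$ between $q_1 - q_0$ and $q_2 - q_0$) to the two geometric building blocks already established: Claim \ref{collinear} for exactly collinear triples and Claim \ref{approximate} for nearby points. The idea is that $q_2$ is \emph{almost} on the ray from $q_0$ through $q_1$, so we can find an auxiliary point $q_2'$ that \emph{is} exactly on that ray (at distance $\|q_2 - q_0\|$ from $q_0$), apply Claim \ref{collinear} to the collinear triple $q_0, q_1, q_2'$, and then apply Claim \ref{approximate} to the nearby pair $q_2, q_2'$. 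The key computation is to check that $\|q_2 - q_2'\| \le 2\beta_i^{10}$: since the angle between $q_2 - q_0$ and $q_2' - q_0$ is at most $\beta_i^{10}$ and both have the same length $\le 2$, the chord length $\|q_2 - q_2'\|$ is at most $2 \cdot 2 \sin(\beta_i^{10}/2) \le 2\beta_i^{10}$, so Claim \ref{approximate} applies to this pair. Chaining the two bounds gives $w_{i,t}(q_1) \ge \gamma_i^d w_{i,t}(q_2') \ge \gamma_i^d \cdot \gamma_i \cdot w_{i,t}(q_2) \ge \gamma_i^{2d} w_{i,t}(q_2)$ for all $t$, which is the desired inequality (with room to spare).

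First I would fix such a pair $q_1, q_2$ and define $q_2'$ as above; strictly this $q_2'$ need not lie in $S_{\alpha_i}$, but Claims \ref{collinear} and \ref{approximate} are stated for arbitrary points of $\B(0,1)$ (their proofs only use collinearity/proximity, not membership in the net), so this is not an issue — one should double check $q_2' \in \B(0,1)$, which holds since it has the same norm as $q_2$. Then I would invoke Claim \ref{collinear} on the triple $(q_0, q_1, q_2')$ and Claim \ref{approximate} on the pair $(q_2, q_2')$, each of which fails with probability at most $2^{-d^{10}i}$. Finally I would union bound over all pairs $q_1, q_2 \in S_{\alpha_i}$: there are at most $|S_{\alpha_i}|^2 \le (2d/\alpha_i)^{2d}$ such pairs, and since $\alpha_i = 2^{-10^4 d^2 i}$ we have $(2d/\alpha_i)^{2d} \le 2^{d^2 \cdot 10^5 i} \ll 2^{d^{10}i - d^9 i}$ (using $d \ge 1$ generously; the exponents are chosen precisely so this slack exists), so the total failure probability is at most $2^{-d^9 i}$ as claimed.

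The main obstacle — such as it is — is purely bookkeeping: making sure the chord-length estimate $\|q_2 - q_2'\| \le 2\beta_i^{10}$ matches exactly the hypothesis of Claim \ref{approximate} (which requires $\le 2\beta_i^{10}$), and that the union bound over $|S_{\alpha_i}|^2$ pairs still leaves the failure probability below $2^{-d^9 i}$; both work out because the parameter hierarchy $\alpha_i \ll \beta_i \ll \gamma_i$ was set up with exactly these estimates in mind. There is no genuinely hard step here — all the real geometric and probabilistic content was already done in Claims \ref{collinear} and \ref{approximate}, and this claim is just the ``net version'' obtained by triangulating an almost-collinear pair through a genuinely collinear one.
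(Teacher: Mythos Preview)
Your approach is correct and essentially the same as the paper's: both triangulate through an auxiliary point, invoke Claims \ref{collinear} and \ref{approximate} once each, and union bound over the $|S_{\alpha_i}|^2$ pairs. The only difference is the choice of auxiliary point: you place $q_2'$ on the ray from $q_0$ through $q_1$ at distance $\|q_2-q_0\|$ (so that $q_0,q_1,q_2'$ are collinear and Claim \ref{approximate} is applied to $(q_2',q_2)$), whereas the paper uses $q_\perp$, the foot of the perpendicular from $q_1$ to the segment $q_0q_2$ (so that $q_0,q_\perp,q_2$ are collinear and Claim \ref{approximate} is applied to $(q_1,q_\perp)$). Both give $w_{i,t}(q_1)\ge\gamma_i^{d+1}w_{i,t}(q_2)\ge\gamma_i^{2d}w_{i,t}(q_2)$.

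One small slip: your justification that $q_2'\in\B(0,1)$ ``since it has the same norm as $q_2$'' is wrong --- $q_2'$ has the same distance \emph{from $q_0$} as $q_2$, not the same norm. In fact $q_2'$ can lie just outside $\B(0,1)$; from your chord estimate you only get $\|q_2'\|\le\|q_2\|+\|q_2-q_2'\|\le 1+2\beta_i^{10}$. This is harmless (the proofs of Claims \ref{collinear} and \ref{approximate} only track ratios of multiplicative updates and do not actually use membership in $\B(0,1)$), but the paper's choice of $q_\perp$ sidesteps the issue entirely: as a point on the segment from $q_0$ to $q_2$, it lies in $\B(0,1)$ by convexity.
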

\begin{proof}
Fix a pair of points $q_1,q_2 \in S_{\alpha_i}$ satisfying the conditions in the statement. Let $q_{\perp}$ be the foot of the perpendicular from $q_1$ to the segment connecting $q_2$ and $q_0$.   Note that
\[
\norm{q_{\perp} - q_1} \leq 2\beta_i^{10}
\]

Therefore, by the conditions of Claim \ref{approximate}, with probability at least $1 - 2^{-d^{10}i}$, for all rounds $t$, 

$$w_{i, t}(q_1) \geq \gamma_i w_{i, t}(q_\perp).$$

Note that $q_0$, $q_{\perp}$, and $q_2$ are collinear. Since $\norm{q_{\perp} - q_0} \leq \norm{q_1 - q_0} \leq \norm{q_2 - q_0}$, $q_{\perp}$ lies between $q_0$ and $q_2$ on this line. By Claim \ref{collinear}, this means that with probability at least $1 - 2^{-d^{10}i}$, for all rounds $t$, 
$$w_{i, t}(q_{\perp}) \geq \gamma_i^{d}w_{i, t}(q_2).$$

Combining these two claims, we know that with probability at least $1 - 2^{-d^{10}i + 1}$, for all rounds $t$,

$$w_{i, t}(q_1) \geq \gamma_i^{d+1}w_{i, t}(q_2).$$

This is for a specific pair of points in $S_{\alpha_i}$. Union bounding over all $|S_{\alpha_i}|^2$ pairs of points, we have that the theorem statement fails with probability at most

$$|S_{\alpha_i}|^2 2^{-d^{10}i + 1} = \left(\frac{2d}{\alpha_i}\right)^{2d}2^{-d^{10}i + 1} \leq 2^{-d^{9}i}.$$
\end{proof}

Finally, we prove that if $w_i$ concentrates on a strip, $q_0$ is within $\gamma_i$ of this strip. To show this, it suffices to show that the weight of $\B(q_0, \gamma_i)$ is large enough that it must intersect a sufficiently concentrated strip. We do this by using Claim \ref{weightratio} to relate the weight of points of $S_{\alpha_i}$ inside and outside $\B(q_0, \gamma_i)$.

\begin{claim}\label{stripweight}
Fix an index $i$.  With probability at least $1 - {2^{-d^8i}}$, the following statement holds for all $t$:
\begin{itemize}
    \item If there exist $a,b$ such that $|a - b| \leq 10\gamma_i$ and 
    \[
    \int_{a \leq \dot{x_t}{q} \leq b} w_{i,t}(q)dq \geq 1 -\gamma_i^{4d}
    \]
    then $a - \gamma_i \leq \dot{x_t}{q_0} \leq b + \gamma_i $.
\end{itemize}
\end{claim}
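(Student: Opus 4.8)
The plan is to reduce Claim~\ref{stripweight} to a single quantitative statement: with probability at least $1 - 2^{-d^8 i}$, the weight function satisfies $w_{i,t}(\B(q_0,\gamma_i)) > \gamma_i^{4d}$ for \emph{every} round $t$ (all balls are implicitly intersected with $\B(0,1)$, on which $w_{i,t}$ is supported). This suffices, and the reduction is purely geometric: if some strip $\{a \le \dot{x_t}{q} \le b\}$ of width at most $10\gamma_i$ carries $w_{i,t}$-mass at least $1 - \gamma_i^{4d}$, then the complement of the strip carries mass at most $\gamma_i^{4d} < w_{i,t}(\B(q_0,\gamma_i))$, so the strip must meet $\B(q_0,\gamma_i)$ on a set of positive measure; since $\|x_t\|_2 = 1$, every point of $\B(q_0,\gamma_i)$ has its $\dot{x_t}{\cdot}$-value within $\gamma_i$ of $\dot{x_t}{q_0}$, so $\dot{x_t}{q_0} \in [a - \gamma_i, b + \gamma_i]$. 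At $t = 0$ the bound is immediate since $w_{i,0}$ is uniform on $\B(0,1)$, so all the work is in showing it is maintained.

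I would condition on the intersection of two high-probability events. The first is the conclusion of Claim~\ref{weightratio} for the index $i$ (failure probability $\le 2^{-d^9 i}$). The second is the event that every cell $V_s$ of the grid $S_{\alpha_i}$ (the set of points of $\B(0,1)$ nearest to the net point $s$; these cells have diameter $< \alpha_i$ and tile $\B(0,1)$) is cut by at most $d^6$ of the query hyperplanes $\dot{x_t}{q} = \hat y$ coming from rounds with $i_t \in \{i-1,i\}$. A hyperplane cuts $V_s$ only when $|\dot{x_t}{s} - \hat y| \le \alpha_i$, and since $\hat y$ is uniform in an interval of length at least $4\beta_i$, this has conditional probability at most $\alpha_i / (2\beta_i)$ in any given round; there are at most $C_{i-1} + C_i = O(\poly(d)\, i\, \gamma_i^{-10d})$ such rounds by the deterministic cap built into the algorithm, so --- because $\alpha_i$ is chosen with an exponent ($10^4 d^2 i$) that dwarfs the exponents of $\beta_i$ ($100 d i$) and of $\gamma_i^{-10d}$ --- a union bound over the $|S_{\alpha_i}| \le (2d/\alpha_i)^d$ cells controls this event at the required level. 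On this event the density $w_{i,t}$ only changes across a cell when a cut of that cell separates two of its points, so $w_{i,t}(q)/w_{i,t}(q') \in [(1-\eta)^{d^6}, (1-\eta)^{-d^6}] \subseteq [1-o(1),1+o(1)]$ for any $q, q'$ in the same cell (using $\eta = 1/(2d^{10})$); in particular the density anywhere in a cell agrees with the value at the net point up to a factor $1 \pm o(1)$.

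The next step is to upgrade Claim~\ref{weightratio} from net points to a pointwise radial comparison. Fixing a direction $u$, for an inner point $q_0 + ru$ with $r \in [\alpha_i^{1/2}, \gamma_i]$ and an outer point $q_0 + r'u$ with $r' \ge \gamma_i$, the cells containing these two points are net points whose displacements from $q_0$ subtend an angle $O(\alpha_i/r) \le \beta_i^{10}$ and whose distances from $q_0$ are ordered the right way (the $\alpha_i$ rounding error is negligible at these scales); applying Claim~\ref{weightratio} to those net points and the cell-flatness estimate twice gives $w_{i,t}(q_0 + ru) \ge (1 - o(1))\gamma_i^{d+1} w_{i,t}(q_0 + r'u)$ --- here I use the exponent $d+1$ that actually comes out of the proof of Claim~\ref{weightratio} (via Claims~\ref{collinear} and \ref{approximate}) rather than the weaker stated $\gamma_i^{2d}$. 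I then integrate in polar coordinates around $q_0$, writing $dq = r^{d-1}\,dr\,du$: for each $u$ the outer part satisfies $\int_{\gamma_i}^{R(u)} r^{d-1}\,dr \le 2^d/d$, while the inner part satisfies $\int_{\alpha_i^{1/2}}^{\gamma_i} r^{d-1}\,dr \ge (1-o(1))\gamma_i^d/d$, and the pointwise comparison bounds the outer contribution along $u$ by the inner one times $2^d/((1-o(1))\gamma_i^{d+1}\gamma_i^d)$. Integrating over $u$ yields $w_{i,t}(\B(0,1)\setminus\B(q_0,\gamma_i)) \le \big(2^d/((1-o(1))\gamma_i^{2d+1})\big)\, w_{i,t}(\B(q_0,\gamma_i))$, and since these two sets carry all the mass, $w_{i,t}(\B(q_0,\gamma_i)) \ge (1-o(1))\gamma_i^{2d+1}/2^{d+1}$, which exceeds $\gamma_i^{4d}$ for all $i \ge 1$ once $d$ is above an absolute constant (using $\gamma_i \le 1/2$); the finitely many remaining dimensions are a constant-dimensional instance for which the statement holds trivially up to constants.

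The step I expect to be the crux is the passage from net points to the continuous density --- bounding the number of hyperplane cuts of each microscopic cell and doing so uniformly over all $(2d/\alpha_i)^d$ of them. This is exactly why the cell scale $\alpha_i$ is taken so much finer than the perturbation scale $\beta_i$, and it is the one place where the Euclidean structure is essential (both in the hyperplane-cut probability and in the $r^{d-1}$ Jacobian that makes the radial averaging go through); the separate geometric input, Claim~\ref{weightratio}, is already available. A minor but genuine point is that the $2^{O(d)}\gamma_i^{-O(d)}$ losses must beat $\gamma_i^{4d}$ already at $i = 1$, which is comfortable only because the concentration threshold was fixed at $\gamma_i^{4d}$ and because one uses the sharper $\gamma_i^{d+1}$ exponent from inside the proof of Claim~\ref{weightratio}.
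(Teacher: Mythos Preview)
Your argument is correct and follows the same route as the paper: reduce to showing $w_{i,t}(\B(q_0,\gamma_i))>\gamma_i^{4d}$ with high probability, condition on Claim~\ref{weightratio} together with a uniform bound on the number of query hyperplanes meeting each cell of the $\alpha_i$-grid (the paper uses $d^9i$ cuts per cell, you use $d^6$; both work), pass from the continuous density to net-point values via the resulting cell-flatness, and then compare inner to outer net points through Claim~\ref{weightratio}. The only organizational difference is in this last comparison: the paper pushes every net point into $\B(q_0,\gamma_i)$ via the discrete contraction $q\mapsto g\bigl((1-\tfrac{\gamma_i}{2})q_0+\tfrac{\gamma_i}{2}q\bigr)$ (contract toward $q_0$, then round to the net) and counts preimages, at most $(10/\gamma_i)^d$, whereas you integrate in polar coordinates around $q_0$ and pick up the same $(1/\gamma_i)^d$ loss from the $r^{d-1}$ Jacobian. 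One small correction: you do not need the sharper $\gamma_i^{d+1}$ exponent from inside the proof of Claim~\ref{weightratio}; the stated $\gamma_i^{2d}$ already yields $w_{i,t}(\B(q_0,\gamma_i))\gtrsim \gamma_i^{3d}/2^d\ge \gamma_i^{4d}$ for all $i\ge 1$, so there is no need to open up that proof.
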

\begin{proof}
First for each point $q \in S_{\alpha_i}$, consider an axis-parallel box centered at that point with side length $\frac{\alpha_i}{d}$.  Now consider all rounds $t$ with $i_t = i$ or $i_t = i-1$ and all planes of the form $\dot{x_t}{q} = \hat{y}$ for these rounds.  We show that with high probability, all boxes intersect at most $d^9i$ of these planes.  Using essentially the same argument as in Claim \ref{approximate}, we find that this probability is at least
\[
1 - \left( \frac{2d}{\alpha_i}\right)^{d} \binom{\frac{1000d^4i}{\gamma_i^{10d}}}{d^9i}\beta_i^{9d^9i} \geq 1 - \frac{1}{2^{d^{10}i}}
\]
In particular, with at least $1 - \frac{1}{2^{d^{10}i}}$ probability, the following two inequalities hold:
\begin{equation}\label{smallball}
\int_{q \in \B(q_0, \gamma_i) \cap \B(0,1)}w_{i,t}(q)dq \geq \left(1 - \eta \right)^{d^9i}\left(\frac{\alpha_i}{d}\right)^d\sum_{\substack{q \in \B(q_0, \gamma_i - \alpha_i) \cap S_{\alpha_i}}}w_{i,t}(q).
\end{equation}
\begin{equation}\label{whole}
1 = \int_{q \in \B(0,1)}w_{i,t}(q)dq \leq \frac{1}{\left(1 - \eta \right)^{d^9i}}\left(\frac{\alpha_i}{d}\right)^d\sum_{q \in S_{\alpha_i}}w_{i,t}(q).
\end{equation}
In both of these inequalities we are using the fact that if at most $d^9i$ planes intersect any box, then the weights of any two points in the same box are within a factor of $\left(1 - \eta \right)^{d^9i}$.
\\\\
Next, consider the ball $\B(q_0, \gamma_i - \alpha_i)$.  Let $T_i = \{\B(q_0, \gamma_i- \alpha_i) \cap S_{\alpha_i} \}$. Consider the following two transformations: $f : S_{\alpha_i} \rightarrow \B(q_0, \gamma_i - \alpha_i)$, which sends a point $q$ to

\[
f(q) = \left(1 - \frac{\gamma_i}{2}\right)q_0 + \frac{\gamma_i}{2}q
\]
and the transformation $g : \B(q_0, \gamma_i - \alpha_i) \rightarrow T_i$, where $g(q)$ is the point obtained by rounding the coordinates of $q$ to the nearest integer multiple of $\frac{\alpha_i}{d}$ (note that $g(q) \in T_i$). If we consider the map $q \rightarrow q' = g(f(q))$ given by the above, the number of points $q \in S_{\alpha}$ that map to a fixed point $q' \in T_i$ is at most 
\[
\left(\frac{10}{\gamma_i}\right)^d.
\]
To see this, note that $g^{-1}(q)$ is an axis-parallel box with side-length $\frac{\alpha_i}{d}$, and thus $f^{-1}(g^{-1}(q))$ contains all the points in $S_{\alpha}$ contained within an axis aligned box with side-length $\frac{2\alpha_i}{d\gamma_i}$, which contains at least $(2/\gamma_i + 1)^d < (10/\gamma_i)^d$ points.

Now, note that $q$ and $q'=g(f(q))$ satisfy the conditions of Claim \ref{weightratio}.  Thus, by Claim \ref{weightratio}, with probability at least $1 - \frac{1}{2^{d^9i}}$
we have that
\[
\sum_{q \in \B(q_0, \gamma_i - \alpha_i) \cap S_{\alpha_i}}w_{i,t}(q) \geq \gamma_i^{2d}\left( \frac{\gamma_i}{10}\right)^d \sum_{q \in S_{\alpha_i}}w_{i,t}(q) = \frac{\gamma_i^{3d}}{10^d}\sum_{q \in S_{\alpha_i}}w_{i,t}(q)
\]
Combining the above with equations (\ref{smallball}) and (\ref{whole}), we conclude that with probability at least 
\[
1 -\frac{1}{2^{d^9i}}  - \frac{1}{2^{d^{10}i}} \geq 1 - \frac{1}{2^{d^8i}}
\]
we have
\[
\int_{q \in \B(q_0, \gamma_i)}w_{i,t}(q)dq \geq \left(1 - \frac{1}{2d^{10}} \right)^{2d^9i}\frac{\gamma_i^{3d}}{10^d} \geq \gamma_i^{4d}
\]
This implies that the ball $\B(q_0, \gamma_i)$ must intersect the strip $a \leq \dot{x_t}{q} \leq b $.  If this happens then the desired condition is clearly satisfied.
\end{proof}

\subsubsection*{Step 4: Completing the proof}

Finally, we can proceed to prove the main theorem.

\begin{proof}[Proof of Theorem \ref{noisycontextualsearch}]
First, for each $i$, let $L_i$ be the total loss at scale $i$. We will bound $\E[L_i]$. 

By Claim \ref{counterbound}, with probability at least $1 - \frac{1}{2^{d^4(i-1)}}$, $C_{i-1} \leq  100(\frac{d^4(i-1)}{\gamma_{i-1}^{10d}} + d^{25}(i-1))$ for all rounds.  If this is true, then the only time we query at level $i$, there must be some strip given by $a \leq \dot{x_t}{q} \leq b$ of width at most $10\gamma_i$ that contains $1 - \gamma_i^{4d}$ of the total weight of $w_i$.  Thus, by Claim \ref{stripweight}, with at least $ 1- \frac{1}{2^{d^4(i-1)}} -  \frac{1}{2^{d^8(i-1)}}$ probability, all queries at level $i$ incur loss at most $12\gamma_i + 2\beta_i \leq 14\gamma_i$.  Now, by using Claim \ref{farquerybound}, we can bound the expected total loss at level $i$ as 
\begin{eqnarray*}
\E[L_i] &\leq& \left( \frac{1}{2^{d^4(i-1)}} +  \frac{1}{2^{d^8(i-1)}}\right)C_i + 2\beta_iC_i + 14 \gamma_i \left( 100d^{25}i + \sum_{j=0}^{\infty}100d^{25}i\Pr\left[ B_i \geq 100d^{25}i (1 + j)  \right] \right) \\
&\leq& \left( \frac{1}{2^{d^4(i-1)}} +  \frac{1}{2^{d^8(i-1)}} + 2\beta_i \right) \cdot \left( 100\left(\frac{d^4i}{\gamma_i^{10d}} + d^{25}i \right) + 1 \right) + 1400d^{25}i\gamma_i\left( 1 + \sum_{j=0}^{\infty} \frac{1}{2^{d^4ij}} \right) \\
&\leq & 4\beta_i \cdot \left(200\frac{d^{25}i}{\gamma_i^{10d}}\right) + 2800d^{25}i\gamma_i \\
&=& 4\cdot2^{-100di}\cdot (200d^{25}i 2^{10di}) + 2800d^{25}i2^{-i} \\
&=& 800d^{25}i 2^{-90di} + 2800d^{25}i2^{-i}.
\end{eqnarray*}

It follows that

$$\sum_{i=1}^{\infty} \E[L_i] \leq \sum_{i=1}^{\infty} \left(800d^{25}i 2^{-90di} + 2800d^{25}i2^{-i}\right) = O(d^{25}) = O(\poly(d)).$$

\end{proof}

\begin{remark}
Naively, one can implement Algorithm \ref{alg:noisy_contextual_search_alg} with time complexity $T^{O(d)}$, via the observation that $T$ hyperplanes divide $\B(0, 1)$ into at most $O(T^{d})$ pieces, so we can simply compute this division and the weight of each distribution $w_i$ (we care about at most $T$ scales) on each component of this division.

It is an interesting open question if it is possible to implement Algorithm \ref{alg:noisy_contextual_search_alg} (or otherwise achieve $O(\poly(d))$ regret) with time complexity $\poly(d, T)$. To do so, it would suffice to be able to efficiently sample from the distributions $w_{i}$.
\end{remark}

\section{Tight loss bounds for full feedback}\label{sec:full_feedback}

We also study the problem where the learner has full feedback, i.e, after the prediction $y_t$ the feedback is the actual value of $f_0(x_t)$. We show that the optimal regret can be completely characterized (up to constant factors) by a continuous analogue of the Littlestone dimension.

For this section we don't require the assumption that $\Y$ is ordered, only that the loss function forms a valid metric (i.e. is symmetric and satisfies the triangle inequality). 

\subsection{Tree Dimension}
\begin{definition}
A $(\X, \Y)$-tree of cost $c$ is a labeled binary tree with the following properties
\begin{itemize}
    \item There is a root node and each interior node has two children
    \item Each interior node is labeled with a triple $(x,y_1,y_2)$ where $x \in \mathcal{X}$, $y_1,y_2 \in \mathcal{Y}$
    \item For each leaf, the sum of $\ell(y_1,y_2)$ over all nodes on the path from the root to the leaf is at least $c$.
\end{itemize}
\end{definition}

\begin{definition}
We say a $(\X, \Y)$-tree $\Tree$ is $\H$-satisfiable if we can label each leaf with some $f \in \H$ such that for each node $(x,y_1,y_2) \in \Tree$, all leaves of the left subtree satisfy $f(x) = y_1$ and all leaves of the right subtree satisfy $f(x) = y_2$.
\end{definition}

\begin{definition}[Tree dimension]
We define $\tau(\H)$, the tree dimension of $\H$, to be the maximum cost of a $(\X, \Y)$-tree that is $\H$-satisfiable.
\end{definition}

\begin{remark}
Note we can naturally extend the above definition to any subset $\H' \subset \H$.
\end{remark}

It is worth noting that covering dimension is ``more restrictive" than tree dimension in the sense that bounded covering dimension implies bounded tree dimension. 
\begin{theorem}\label{covertreedim}
Let $\H$ be a hypothesis class consisting of functions mapping $\X \rightarrow \Y$ and let $L$ be a loss function that defines a metric on $\Y$.  If $\Cdim(\H)$ is finite then 
\[
\tau(\H) \leq 6 \cdot \Cdim(\H)
\]   
\end{theorem}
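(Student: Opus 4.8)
\medskip
\noindent\textit{Proof idea.} The plan is a structural induction on the tree. Fix an $\H$-satisfiable $(\X,\Y)$-tree $\Tree$ of cost $c$ (so every root-to-leaf path has $L$-sum at least $c$); we may assume $\Tree$ is finite, since a binary $\H$-satisfiable tree all of whose branches end in a leaf is finite by K\"onig's lemma. For a node $v$ let $\H_v\subseteq\H$ be the finite set of hypotheses labelling the leaves below $v$, let $P_\epsilon(\mathcal{G})$ denote the largest cardinality of an $\epsilon$-\emph{separated} subset of $\mathcal{G}$ (all pairwise $d_\infty$-distances $>\epsilon$), and let $D(v)$ be the smallest value of $\sum L(y_1,y_2)$ over the interior nodes $(x,y_1,y_2)$ on a path from $v$ down to a leaf of the subtree at $v$; thus $D(\mathrm{root})\ge c$ and $D(v)=L(v)+\min(D(v_L),D(v_R))$ at an interior node $v$ with children $v_L,v_R$ and label $(x,y_1,y_2)$, writing $L(v):=L(y_1,y_2)$. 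I would prove, by induction on the subtree rooted at $v$, the single inequality
\[
D(v)\ \le\ \Phi(\H_v)\ :=\ \sum_{j\ge 1}2^{-j}\log_2 P_{2^{-j}}(\H_v),
\]
the leaf case being immediate since then $P_\epsilon(\H_v)=1$ and both sides vanish.

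\medskip
\noindent For the inductive step at an interior node $v$ with $\rho:=L(v)$, I would exploit two facts about $P_\epsilon$. First, $\H$-satisfiability forces $f(x)\ne g(x)$ at the context $x$ of $v$ for every $f\in\H_{v_L}$ and $g\in\H_{v_R}$, so $d_\infty(f,g)\ge\rho$; hence for every $\epsilon<\rho$ the union of an $\epsilon$-separated set in $\H_{v_L}$ with one in $\H_{v_R}$ is again $\epsilon$-separated, giving $P_\epsilon(\H_v)\ge P_\epsilon(\H_{v_L})+P_\epsilon(\H_{v_R})$. Second, since $\H_{v_L},\H_{v_R}\subseteq\H_v$, always $P_\epsilon(\H_v)\ge\max\{P_\epsilon(\H_{v_L}),P_\epsilon(\H_{v_R})\}$. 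Using $a+b\ge 2\sqrt{ab}$ and $\max\{a,b\}\ge\sqrt{ab}$ (all packing numbers are $\ge 1$), these translate, scale by scale, into
\[
\log_2 P_{2^{-j}}(\H_v)-\tfrac12\log_2 P_{2^{-j}}(\H_{v_L})-\tfrac12\log_2 P_{2^{-j}}(\H_{v_R})\ \ge\ \mathbf{1}[\,2^{-j}<\rho\,]\qquad(j\ge1).
\]
Weighting by $2^{-j}$ and summing, the right side is a geometric tail that I would check equals at least $\rho$, so $\Phi(\H_v)-\tfrac12\Phi(\H_{v_L})-\tfrac12\Phi(\H_{v_R})\ge\rho$. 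Combining this with $D(v)=\rho+\min(D(v_L),D(v_R))\le\rho+\tfrac12(D(v_L)+D(v_R))$ and the inductive hypothesis at $v_L$ and $v_R$ closes the step (if $\rho=0$, no scale satisfies $2^{-j}<\rho$ and the step is vacuous).

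\medskip
\noindent To finish, I would pass from packing to covering numbers using $P_{2^{-j}}(\H)\le N_{2^{-(j+1)}}(\H)$ --- two hypotheses more than $2^{-j}$ apart cannot lie in a common $d_\infty$-ball of radius $2^{-(j+1)}$ --- and then invoke Definition~\ref{def:covering_dim}, which gives $N_{2^{-(j+1)}}(\H)\le\max(2^{j+1},2)^{\Cdim(\H)}=2^{(j+1)\Cdim(\H)}$ for $j\ge1$. Since $\H_\Tree\subseteq\H$ and $P_\epsilon$ is monotone, this yields
\[
c\ \le\ D(\mathrm{root})\ \le\ \Phi(\H_\Tree)\ \le\ \Cdim(\H)\sum_{j\ge1}(j+1)2^{-j}\ =\ 3\,\Cdim(\H),
\]
and as $\Tree$ was an arbitrary $\H$-satisfiable tree, $\tau(\H)\le 3\,\Cdim(\H)\le 6\,\Cdim(\H)$.

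\medskip
\noindent The step I expect to be the main obstacle is picking the inductive invariant: trying to bound $D(v)$ by a constant times $\Cdim(\H_v)$, or by a packing number at a single fixed scale, both break at the inductive step, because removing one node perturbs a fixed-scale packing number only negligibly and ``smaller subtree cost'' does not imply ``smaller packing number at every scale.'' The device that makes it work is to combine $\min(D(v_L),D(v_R))\le\tfrac12(D(v_L)+D(v_R))$ with the multiscale potential $\Phi$, so that the per-scale surpluses from $\log_2(a+b)\ge 1+\tfrac12\log_2 a+\tfrac12\log_2 b$, summed over all scales finer than $\rho$, telescope to exactly $\rho$; the rest is routine.
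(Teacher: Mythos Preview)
Your proof is correct and in fact yields the sharper bound $\tau(\H)\le 3\,\Cdim(\H)$, but the route is genuinely different from the paper's.

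The paper argues by a Ramsey-type coloring: it buckets each internal node by the dyadic scale of its loss $\ell(y_1,y_2)$, shows via a packing argument (Lemma~\ref{completetree}) that no monochromatic complete binary subtree at scale $i$ can have depth exceeding $(i+1)\Cdim(\H)$, and then invokes a purely combinatorial lemma about colored binary trees (Lemma~\ref{colortree}) to exhibit a root-to-leaf path with at most $(i+1)\Cdim(\H)$ nodes of color $i$ for every $i$; summing $2^{-(i-1)}\cdot(i+1)\Cdim(\H)$ gives $6\,\Cdim(\H)$. Your argument replaces the coloring and the combinatorial lemma by a single multiscale potential $\Phi(\H_v)=\sum_j 2^{-j}\log_2 P_{2^{-j}}(\H_v)$ and a structural induction: the AM--GM step $\log_2(a+b)\ge 1+\tfrac12\log_2 a+\tfrac12\log_2 b$ at every scale finer than $\rho$ is exactly what produces the $\rho$ needed to absorb the node's contribution, and the $\min\le\tfrac12(\cdot+\cdot)$ trick lets the induction close. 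What your approach buys is a tighter constant and no auxiliary combinatorial lemma; what the paper's approach buys is a cleaner separation between the metric content (packing at a fixed scale bounds subtree depth) and the tree combinatorics (the coloring lemma is loss-function-agnostic and reusable). Both ultimately rest on the same packing observation that leaves in opposite subtrees of a node with loss $\rho$ are $\rho$-separated in $d_\infty$.
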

Before proving the above we prove a few preliminary lemmas.
\begin{lemma}\label{completetree}
Let $\H$ be a hypothesis class and $\ell$ be a loss function that defines a metric on $\Y$.  Let $\Tree$ be an $\H$-satisfiable $(\X, \Y)$-tree where all leaves have depth $d$ and such that for each internal node $(x,y_1,y_2)$, $\ell(y_1,y_2) > 2^{-i}$.  Then
\[
d \leq (i+1) \cdot \Cdim(\H)
\]
\end{lemma}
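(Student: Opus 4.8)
The plan is to read off from $\Tree$ a large \emph{packing} of $\H$ in the $d_\infty$ metric and then convert this packing into a lower bound on the size of a minimal $\eps$-net, which by the definition of $\Cdim$ immediately gives the claimed inequality. Since all leaves of $\Tree$ have depth $d$ and every interior node has two children, $\Tree$ has exactly $2^d$ leaves; fix the labeling $\ell \mapsto f_\ell \in \H$ of the leaves witnessing $\H$-satisfiability.

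First I would establish the key claim: any two distinct leaves $\ell_1 \neq \ell_2$ satisfy $d_\infty(f_{\ell_1}, f_{\ell_2}) > 2^{-i}$. Let $(x, y_1, y_2)$ be the deepest common ancestor of $\ell_1$ and $\ell_2$ (an interior node, since the two root-to-leaf paths diverge there). Then one of the leaves — say $\ell_1$ — lies in the left subtree of this node and the other, $\ell_2$, in the right subtree, so by $\H$-satisfiability $f_{\ell_1}(x) = y_1$ and $f_{\ell_2}(x) = y_2$. Hence
\[
d_\infty(f_{\ell_1}, f_{\ell_2}) \;=\; \sup_{x' \in \X}\ell\big(f_{\ell_1}(x'), f_{\ell_2}(x')\big) \;\ge\; \ell(y_1, y_2) \;>\; 2^{-i}.
\]
In particular, since $\ell$ is a metric, all $2^d$ hypotheses $f_\ell$ are distinct.

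Next I would turn this $2^{-i}$-separated set of $2^d$ points into a covering lower bound. Set $\eps := 2^{-(i+1)}$ (which is at most $1/2$, so the scale $\eps$ is admissible in the definition of $\Cdim$). Let $S = N_\eps(\H)$ be a minimal $\eps$-net, and for each leaf $\ell$ pick $s_\ell \in S$ with $d_\infty(f_\ell, s_\ell) \le \eps$. If $s_{\ell_1} = s_{\ell_2}$ for two distinct leaves, the triangle inequality for $d_\infty$ would give $d_\infty(f_{\ell_1}, f_{\ell_2}) \le 2\eps = 2^{-i}$, contradicting the previous paragraph; hence $\ell \mapsto s_\ell$ is injective and $|N_\eps(\H)| \ge 2^d$. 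Plugging into the definition of covering dimension,
\[
\Cdim(\H) \;\ge\; \frac{\log |N_\eps(\H)|}{\log(1/\eps)} \;\ge\; \frac{\log 2^{\,d}}{\log 2^{\,i+1}} \;=\; \frac{d}{i+1},
\]
which rearranges to $d \le (i+1)\Cdim(\H)$, as desired. This is essentially routine packing/covering duality; the only points needing care are locating the separating context $x$ via the deepest common ancestor and checking that the chosen scale $\eps = 2^{-(i+1)}$ falls in the range $(0, 1/2]$ used in the definition of $\Cdim$ — I would flag these as the mild ``obstacles,'' though neither is substantive.
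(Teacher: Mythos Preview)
Your proof is correct and follows essentially the same approach as the paper: extract a $2^{-i}$-separated family of $2^d$ hypotheses from the leaf labels via the common-ancestor node, deduce $|N_{2^{-(i+1)}}(\H)|\ge 2^d$, and plug into the definition of $\Cdim$. Your write-up is in fact a bit more careful than the paper's (you explicitly justify distinctness of the leaf labels and check that $\eps=2^{-(i+1)}\le 1/2$), but the argument is the same.
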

\begin{proof}
Since the tree is $\H$-satisfiable, we can label the leaves with functions $f \in \H$.  Any two of these functions $f_1,f_2$ must satisfy $d_{\infty}(f_1,f_2) \geq 2^{-i}$ since there must be some internal node $(x,y_1,y_2)$ where $f_1(x) = y_1$ and $f_2(x) = y_2$.  Therefore, there are $2^d$ functions in $\H$ such that any two have $d_{\infty}$ distance bigger than $2^{-i}$.  This implies that 
\[
\left|N_{{2^{-(i+1)}}}(\H) \right| \geq 2^d
\]
Now by the definition of covering dimension, we conclude
$d \leq(i+1) \cdot \Cdim(\H)$.
\end{proof}

Given a rooted binary tree $\Tree$, we say a rooted binary tree $\Tree'$ is contained in $\Tree$ if all of the nodes of $\Tree'$ are nodes of $\Tree$ and the nodes of $\Tree'$ form a binary tree where each interior node has two children under the topology given by $\Tree$.
\begin{lemma}\label{colortree}
Consider a rooted binary tree $\Tree$ (where all interior nodes have exactly two children) and say its nodes are colored with colors $1,2, \dots , c$.  We say the colored tree satisfies property $(x_1, \dots , x_c)$ if for each $i \in [c]$, it does not contain a monochromatic complete binary tree of color $i$ and depth $x_i$.  If the coloring of $\Tree$ satisfies property $(x_1, \dots , x_c)$, there exists a leaf such that on the path from the root to the leaf, there are at most $x_i$ nodes of color $i$ for all $i \in [c]$.   
\end{lemma}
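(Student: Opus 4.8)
The plan is to induct on $n=x_1+\cdots+x_c$, peeling off the root of $\Tree$ and recursing into one of its two subtrees with a reduced budget. Write $i_0$ for the color of the root $r$ of $\Tree$. A preliminary bookkeeping observation: if $\Tree$ satisfies property $(x_1,\dots,x_c)$ then $x_{i_0}\ge 1$, since the single node $r$ is a monochromatic complete binary tree of color $i_0$ and depth $0$ contained in $\Tree$, so $x_{i_0}=0$ would violate the property. If $r$ is a leaf (so $\Tree$ is a single node) the root-to-leaf path is just $r$, with one node of color $i_0\le x_{i_0}$ and none of any other color; this is the base case.

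For the inductive step, $r$ has exactly two children, with subtrees $T_L$ and $T_R$. The crux is the dichotomy: \emph{at least one of $T_L,T_R$ satisfies the reduced property $(x_1,\dots,x_{i_0}-1,\dots,x_c)$}, decrementing only the $i_0$-th coordinate. Granting it — say $T_L$ does — the induction hypothesis (its budget sum is $n-1$) gives a leaf of $T_L$, hence of $\Tree$, whose root-to-leaf path inside $T_L$ uses at most $x_{i_0}-1$ nodes of color $i_0$ and at most $x_j$ of each color $j\ne i_0$; prepending $r$ (of color $i_0$) produces a root-to-leaf path in $\Tree$ with at most $x_{i_0}$ nodes of color $i_0$ and at most $x_j$ of color $j$, which is the leaf we want.

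Proving the dichotomy is the step I expect to be the real obstacle, as it is where the definition of ``contained'' has to be used with care. I would argue by contradiction: if neither $T_L$ nor $T_R$ satisfies the reduced property, then — because $T_L,T_R$ are (full) subtrees of $\Tree$ and $\Tree$ satisfies property $(x_1,\dots,x_c)$ — neither of them can contain a monochromatic complete binary tree of color $j$ and depth $x_j$ for any $j\ne i_0$; so the only way each of them can fail the reduced property is to contain a monochromatic complete binary tree of color $i_0$ and depth $x_{i_0}-1$. Attaching $r$ as a fresh root whose two children are the roots of these two trees (legitimate since one lies in $T_L$, the other in $T_R$, i.e.\ in the two subtrees of $r$ — the definition of ``contained'' only requires a node's two children to lie in its two subtrees, not to be immediate children) yields a monochromatic complete binary tree of color $i_0$ and depth $x_{i_0}$ inside $\Tree$, contradicting the property. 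The remaining details are routine: the edge case $x_{i_0}-1=0$ (where the reduced property merely forbids any node of color $i_0$) specializes cleanly, and the descent terminates at a genuine leaf because the budget sum strictly decreases at each level and cannot reach $0$ at an interior node still satisfying its property, so a leaf is reached within $n$ levels.
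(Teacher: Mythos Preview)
Your argument is correct and follows exactly the same route as the paper: induction on $x_1+\cdots+x_c$, noting $x_{i_0}>0$ for the root color $i_0$, and recursing into a subtree that satisfies the decremented property. The paper's proof merely asserts the key dichotomy (``either the left or right subtree of the root must satisfy property $(x_1,\dots,x_i-1,\dots,x_c)$''), whereas you spell out the contradiction that proves it; otherwise the proofs are identical.
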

\begin{proof}
We prove the lemma by induction on $x_1 + \dots + x_c$.  The base cases are obvious.  Now say the root of $\Tree$ is colored with color $i$.  Clearly we must have $x_i > 0$.  Then either the left or right subtree of the root must satisfy property $(x_1, \dots , x_i - 1, \dots , x_c)$.  Using the inductive hypothesis, we get the desired.
\end{proof}

\begin{proof}[Proof of Theorem \ref{covertreedim}]
Assume for the sake of contradiction that $\tau(\H) > 6 \cdot \Cdim (\H)$.  Consider a $(\X, \Y)$-tree $\Tree$ that is $\H$-satisfiable and has cost larger than $6\cdot \Cdim(\H)$.  Note we can assume that there are no nodes in $\Tree$ where $\ell(y_1,y_2) = 0$ since otherwise, we can delete that node and keep only its left subtree.  Let $c$ be an integer such that for all nodes $(x,y_1,y_2)$, we have $\ell(y_1,y_2) > 2^{-c}$.
\\\\
Now color the internal nodes of $\Tree$ with $c$ colors $\{1,2, \dots c \}$ where a node $(x,y_1,y_2)$ is color $i$ if 
\[ 
\frac{1}{2^{i}} < \ell(y_1,y_2) \leq \frac{1}{2^{i-1}} 
\]
Note by Lemma \ref{completetree}, $\Tree$ does not contain any monochromatic, complete binary trees of color $i$ with depth at least $(i+1) \cdot \Cdim(\H)$.  By Lemma \ref{colortree}, this implies the total cost of $\Tree$ is at most
\[
\sum_{i=1}^c\frac{(i+1) \cdot \Cdim(\H)}{2^{i-1}} \leq 6 \cdot \Cdim(\H)
\]
which completes the proof.  
\end{proof}

However, we cannot hope for any sort of converse to Theorem \ref{covertreedim} as evidenced by the following example.  Let $\mathcal{X} = \mathcal{Y} = [0,1]$ and let $\ell(y_1,y_2) = |y_1 - y_2|$.  Let $H = \{1_{x = c}\mid c \in [0,1] \}$ be the set of all indicator functions of points in $[0,1]$.  $H$ has infinite covering dimension but its tree dimension is just $1$.

\subsection{Regret Bounds from Tree Dimension}

\begin{theorem}\label{full_feedback}
In the full feedback model there exists an algorithm with regret $O(\tau(\H))$.  Furthermore, no algorithm can guarantee less than $\tau(\H)/2$ regret.
\end{theorem}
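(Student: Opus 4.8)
The plan is to use the tree dimension itself as a potential function for both directions. For the \textbf{upper bound}, I would analyze the natural ``standard optimal algorithm'' adapted to this metric-valued, full-feedback setting: maintain the version space $\H_t = \{f \in \H : f(x_\tau) = f_0(x_\tau) \text{ for all } \tau < t\}$, and on context $x_t$ predict a value $y_t$ among the achievable values $\{f(x_t) : f \in \H_t\}$ that (approximately) maximizes $\tau(\H_t^{y_t})$, where $\H_t^{v} := \{f \in \H_t : f(x_t) = v\}$. The key structural lemma is: for any two distinct achievable values $v_1,v_2$, one has $\min\!\big(\tau(\H_t^{v_1}),\,\tau(\H_t^{v_2})\big) + \ell(v_1,v_2) \le \tau(\H_t)$. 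This is proved by taking near-optimal satisfiable trees $\Tree_1,\Tree_2$ for $\H_t^{v_1},\H_t^{v_2}$ of cost within $\epsilon$ of $\min(\tau(\H_t^{v_1}),\tau(\H_t^{v_2}))$, gluing them as the left and right subtrees under a new root labelled $(x_t,v_1,v_2)$, observing the result is $\H_t$-satisfiable (leaf labels remain in $\H_t^{v_1}\cup\H_t^{v_2}\subseteq\H_t$ and respect the root constraint), and letting $\epsilon\to 0$. Consequently, if $y_t$ is chosen as an $\epsilon_t$-approximate maximizer and the revealed value is $f_0(x_t)=v$, then since $\tau(\H_t^{y_t})\ge\tau(\H_t^{v})-\epsilon_t$ the incurred loss satisfies $\ell(y_t,v)\le \tau(\H_t)-\tau(\H_{t+1})+\epsilon_t$ (and this also holds trivially when $v=y_t$, using $\tau(\H_{t+1})\le\tau(\H_t)$). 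Summing over $t$ telescopes, using $\tau(\H_1)=\tau(\H)$ and $\tau(\H_{T+1})\ge 0$, to a total regret at most $\tau(\H)+\sum_t\epsilon_t$; taking $\epsilon_t = 2^{-t}$ gives regret $\le \tau(\H)+1 = O(\tau(\H))$.

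For the \textbf{lower bound}, I would have the adversary fix an $\H$-satisfiable tree $\Tree$ of cost $c$ arbitrarily close to $\tau(\H)$, first pruning every internal node with $\ell(y_1,y_2)=0$ (deleting the node and keeping its left subtree, as in the proof of Theorem \ref{covertreedim}; this leaves the cost unchanged), and then walk down $\Tree$ against the learner. At an internal node labelled $(x,y_1,y_2)$ the adversary presents $x$; for whatever the learner guesses $y$, the triangle inequality gives $\max\big(\ell(y,y_1),\ell(y,y_2)\big)\ge \frac{1}{2}\ell(y_1,y_2)$, so the adversary reveals $f_0(x)$ to be whichever of $y_1,y_2$ attains this maximum and descends to the corresponding child. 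Upon reaching a leaf, its $\H$-label is a single hypothesis in $\H$ consistent with every value revealed along the path, so the run is realizable; and the total loss is at least $\frac{1}{2}$ times the sum of $\ell(y_1,y_2)$ along this root--leaf path, which is $\ge c/2$ by definition of the cost. (This even works against randomized learners, since the descent is determined by the learner's realized guesses, which the adversary observes before giving feedback.) Letting $c\to\tau(\H)$ shows that no algorithm can guarantee regret below $\tau(\H)/2$.

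I expect the \textbf{main obstacle} to be the bookkeeping around the supremum defining $\tau$ possibly not being attained: this forces the $\epsilon_t$-approximate maximization in the algorithm and the $\epsilon\to 0$ limit in the gluing lemma, and one must check the glued tree is genuinely $\H_t$-satisfiable with the claimed path costs. A second subtlety, needed on the lower-bound side, is verifying that after discarding the zero-loss internal nodes an $\H$-satisfiable tree cannot repeat a context $x$ along any root--leaf path -- otherwise the learner would already know $f_0(x)$ at the repeat and incur no loss there; this holds because both children of the second occurrence would be forced to carry the (common) label dictated by the first occurrence, contradicting $\ell(y_1',y_2')>0$. Both directions also use the trivial monotonicity $\tau(\H')\le\tau(\H)$ for $\H'\subseteq\H$ (any $\H'$-satisfiable tree is $\H$-satisfiable), and the degenerate case $\tau(\H)=\infty$ is handled by applying the same lower-bound construction to trees of unbounded cost, which also makes the upper bound vacuous.
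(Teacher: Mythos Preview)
Your proposal is correct and follows essentially the same approach as the paper. The upper bound is identical: the paper's algorithm also maintains the version space $S_t$, predicts a $y_t$ with $\tau(S_t\cap\{f:f(x_t)=y_t\})$ as large as possible, proves the same gluing lemma (their Lemma~\ref{rangebound} is the contrapositive of your structural inequality), and telescopes to get regret $\le \tau(\H)+1$ after handling the non-attained supremum exactly as you do with $\epsilon_t=2^{-t}$.

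The only difference is in the lower-bound presentation: the paper pads the tree so all leaves have the same depth and has the adversary pick a leaf \emph{uniformly at random}, so that at each internal node the feedback is $y_1$ or $y_2$ each with probability $1/2$, giving expected loss $\tfrac12\ell(y,y_1)+\tfrac12\ell(y,y_2)\ge\tfrac12\ell(y_1,y_2)$. You instead use an adaptive adversary that deterministically picks the worse side. Both arguments rest on the same triangle-inequality step and yield the same $\tau(\H)/2$ bound; your version gives a pointwise (not just expected) guarantee against deterministic learners, while the paper's randomized version directly handles randomized learners via Yao's principle without needing to argue about adaptive adversaries. Your worry about repeated contexts is correctly resolved but in fact unnecessary: as you observe, a repeated context forces $\ell(y_1',y_2')=0$ at the second occurrence, so it contributes nothing to the tree's cost and the bound holds regardless.
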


First we prove that the algorithm below achieves the upper bound.

\begin{algorithm}[H]
\caption{{\sc Contextual Binary Search} }\label{algo:cbs}
\begin{algorithmic}
\For {$t$ in $1,2, \dots , T$}
\State Adversary picks $x_t$
\State Let $S_t$ be the set of hypotheses consistent with the feedback so far
\State For each $\epsilon$ define $A_{\epsilon, t} = \{y| y\in \mathcal{Y}, \tau\left(S_t \cap \{f| f(x_t) = y \}\right)\geq \tau\left(S_t\right) - \epsilon\}$
\State Choose $y_t \in A_{\epsilon, t}$ for the smallest $\epsilon$ such that $A_{\epsilon, t} \neq \emptyset$
\EndFor
\end{algorithmic}
\end{algorithm}

\begin{lemma}\label{rangebound}
For any $y_1,y_2 \in A_{\epsilon,t}$, $\ell(y_1,y_2) \leq \epsilon$.  
\end{lemma}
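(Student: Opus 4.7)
The plan is to prove the contrapositive by contradiction: assume $\ell(y_1, y_2) > \epsilon$ for some $y_1, y_2 \in A_{\epsilon, t}$, and construct an $S_t$-satisfiable $(\X, \Y)$-tree whose cost exceeds $\tau(S_t)$.

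First I would unpack the membership $y_j \in A_{\epsilon, t}$. By definition, the two restricted classes
\[
\H_j := S_t \cap \{f : f(x_t) = y_j\}, \qquad j \in \{1,2\},
\]
both satisfy $\tau(\H_j) \geq \tau(S_t) - \epsilon$. Hence, for any $\delta > 0$, the definition of tree dimension supplies $\H_j$-satisfiable $(\X, \Y)$-trees $\Tree_j$ of cost at least $\tau(S_t) - \epsilon - \delta$.

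Next I would glue $\Tree_1$ and $\Tree_2$ under a new root node labeled by the triple $(x_t, y_1, y_2)$, taking $\Tree_1$ as the left subtree and $\Tree_2$ as the right subtree. The check that this new tree $\Tree$ is $S_t$-satisfiable is direct: we inherit the leaf labelings from $\Tree_1, \Tree_2$, which live in $\H_j \subseteq S_t$; the root condition at $(x_t, y_1, y_2)$ holds automatically because every leaf of $\Tree_j$ is labeled by some $f$ with $f(x_t) = y_j$; and the conditions at interior nodes of the $\Tree_j$ are unchanged.

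Finally I would compute the cost of $\Tree$: every root-to-leaf path picks up $\ell(y_1, y_2)$ at the new root plus at least $\tau(S_t) - \epsilon - \delta$ from the subtree portion, giving total cost at least $\ell(y_1, y_2) + \tau(S_t) - \epsilon - \delta$. Choosing $\delta$ strictly smaller than $\ell(y_1, y_2) - \epsilon > 0$ makes this quantity strictly exceed $\tau(S_t)$, contradicting the definition of $\tau(S_t)$ as the supremum cost of an $S_t$-satisfiable tree. Hence $\ell(y_1, y_2) \leq \epsilon$. I do not anticipate a real obstacle here; the only subtle point is making the $\delta$ slack explicit in case $\tau$ is a supremum rather than a maximum.
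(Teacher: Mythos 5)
Your proposal is correct and is exactly the argument the paper intends, just spelled out in full: the paper's one-sentence proof also constructs a contradiction by gluing trees for the two restricted classes under a new root labeled $(x_t, y_1, y_2)$. The explicit $\delta$-slack you add to handle the case where $\tau$ is a supremum rather than an attained maximum is a small but legitimate refinement.
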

\begin{proof}
Assume for the sake of contradiction that this is false.  Then we can construct a $S_t$-satisfiable $(\mathcal{X}, \mathcal{Y}, s_i)$-tree with $(x_t,y_1,y_2)$ as its root node and cost bigger than $\tau(S_t)$.
\end{proof}

\begin{lemma} For each $t$, we have $$\tau(S_{t+1}) \leq \tau(S_t) - \ell(y_t, f_0(x_t))$$
where $\ell(y_t, f_0(x_t))$ is the loss of the algorithm in round $t$.
\end{lemma}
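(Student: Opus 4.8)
The plan is to compare $\tau(S_{t+1})$ and $\tau(S_t)$ directly through the definition of the sets $A_{\epsilon,t}$ together with Lemma~\ref{rangebound}. In the full feedback model, after the guess $y_t$ the learner observes $f_0(x_t)$, so the updated consistent set is exactly $S_{t+1} = S_t \cap \{f : f(x_t) = f_0(x_t)\}$; in particular $S_{t+1} \subseteq S_t$, and since tree dimension is monotone under inclusion of hypothesis classes, $\tau(S_{t+1}) \le \tau(S_t)$. Hence the quantity $\epsilon^\star := \tau(S_t) - \tau(S_{t+1})$ is nonnegative. Because $\tau\!\left(S_t \cap \{f : f(x_t) = f_0(x_t)\}\right) = \tau(S_{t+1}) = \tau(S_t) - \epsilon^\star$, the value $y = f_0(x_t)$ satisfies the defining inequality of $A_{\epsilon^\star,t}$ with equality, so $f_0(x_t) \in A_{\epsilon^\star,t}$.

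Next I would connect $\epsilon^\star$ to the scale used by the algorithm. The family $A_{\epsilon,t}$ is monotone in $\epsilon$ (if $\epsilon_1 \le \epsilon_2$ then $A_{\epsilon_1,t} \subseteq A_{\epsilon_2,t}$), and the algorithm selects $y_t \in A_{\epsilon_0,t}$ where $\epsilon_0$ is the smallest $\epsilon$ with $A_{\epsilon,t} \neq \emptyset$. Since $A_{\epsilon^\star,t}$ is nonempty (it contains $f_0(x_t)$), we must have $\epsilon_0 \le \epsilon^\star$, and therefore $y_t \in A_{\epsilon_0,t} \subseteq A_{\epsilon^\star,t}$. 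Thus both $y_t$ and $f_0(x_t)$ lie in $A_{\epsilon^\star,t}$, and applying Lemma~\ref{rangebound} with $\epsilon = \epsilon^\star$ gives $\ell(y_t, f_0(x_t)) \le \epsilon^\star = \tau(S_t) - \tau(S_{t+1})$, which rearranges to the claimed bound $\tau(S_{t+1}) \le \tau(S_t) - \ell(y_t, f_0(x_t))$.

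I do not expect a serious obstacle here: once Lemma~\ref{rangebound} is in hand, the argument is essentially bookkeeping with the definition of $A_{\epsilon,t}$ and its monotonicity. The one point worth a moment's care is the assertion that there is a \emph{smallest} $\epsilon$ with $A_{\epsilon,t} \neq \emptyset$ (equivalently, that the supremum defining tree dimension is attained and that the loss is regular enough for these extremal objects to exist). If one prefers to avoid that assumption, the same proof goes through after replacing ``smallest $\epsilon$'' with an infimum: for every $\delta > 0$ the algorithm's choice $y_t$ lies in $A_{\epsilon^\star + \delta, t}$, so $\ell(y_t, f_0(x_t)) \le \epsilon^\star + \delta$ by Lemma~\ref{rangebound}, and letting $\delta \to 0$ yields the same conclusion. (One may also assume $\tau(\H) < \infty$ throughout, which holds in all cases of interest by Theorem~\ref{covertreedim}; otherwise the inequality is vacuous in the extended reals.)
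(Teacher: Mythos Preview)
Your argument is correct and uses the same key ingredient as the paper (Lemma~\ref{rangebound}), but the organization is a bit different and arguably cleaner. The paper splits into two cases according to whether the loss $L=\ell(y_t,f_0(x_t))$ is $\le \epsilon_0$ or $>\epsilon_0$: in the first case it uses minimality of $\epsilon_0$ to conclude $\tau(S_{t+1})\le \tau(S_t)-\epsilon_0\le \tau(S_t)-L$, and in the second case it argues $f_0(x_t)\notin A_{L',t}$ for every $L'<L$ (via Lemma~\ref{rangebound}, since $y_t\in A_{L',t}$ would force $L\le L'$), hence $\tau(S_{t+1})\le \tau(S_t)-L$. You instead define $\epsilon^\star=\tau(S_t)-\tau(S_{t+1})$ up front, observe $f_0(x_t)\in A_{\epsilon^\star,t}$, and use monotonicity to place $y_t$ in the same set before invoking Lemma~\ref{rangebound} once. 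This avoids the case split and makes the role of monotonicity of $\epsilon\mapsto A_{\epsilon,t}$ explicit; the paper's route, on the other hand, makes the contrapositive use of Lemma~\ref{rangebound} (membership in $A_{L',t}$ would contradict $L'<L$) a bit more visible. Your remark on handling the infimum rather than a minimum matches the paper's own caveat in the remark following Theorem~\ref{full_feedback}.
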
 

\begin{proof}
Since $\epsilon$ is the smallest value such that $A_{\epsilon,t}$ is non-empty, then for every $y \in A_{\epsilon,t}$ we must have $\tau\left(S_t \cap \{f| f(x_t) = y \}\right) = \tau\left(S_t\right) - \epsilon$. It follows that if $\ell(y_t, f_0(x_t)) \leq \epsilon$ we are done.\\

Consider now the case where $L := \ell(y_t, f_0(x_t)) > \epsilon$. For this case, we want to argue that $f_0(x_t) \notin A_{L', t}$ for any $L' < L$, since after we get the feedback, we will update $S_{t+1} = \{f \in S_t; f(x_t) = f_0(x_t)\}$. Therefore $f_0(x_t) \notin A_{L', t}$ for all  $L' < L$ implies that: $\tau(S_{t+1}) \leq \tau(S_t) - L$.

Pick any $L'$ with $\epsilon < L' < L$. To see that $f_0(x_t) \notin A_{L', t}$, observe that $y_t \in A_{\epsilon, t} \subseteq A_{L',t}$. If it were the case that  $f_0(x_t) \in A_{L', t}$, we would have $L \leq L'$ by Lemma \ref{rangebound}, contradicting the fact that  $L' < L$.
\end{proof}

 \begin{proof}[Proof of Theorem \ref{full_feedback}]
The upper bound follows directly from the previous lemma. For the lower bound, consider an $\left( \X, \Y\right)$-tree with cost $\tau(\H)$ that is $\H$-satisfiable.  We can ensure that all leaves have the same depth $d$ (by adding nodes of the form $(x,y,y)$).  Now the adversary chooses a leaf uniformly at random.  If the sequence of nodes from the root to the leaf are
\[
(x_1, y_{11}, y_{12}), \dots , (x_d, y_{d1}, y_{d2})
\]
then the adversary presents the inputs $x_1, x_2, \dots  x_d$ in that order to the learner.  Since the loss function satisfies the triangle inequality, the expected loss of any learner is at least $\tau(\H)/2$ so we are done.
\end{proof}

\begin{remark}
Algorithm \ref{algo:cbs} assumes that the set $\{\epsilon; A_{\epsilon, t} \neq \emptyset\}$ has a minimum, which is always the case if the hypothesis class $\H$ is finite. For infinite $\H$, this minimum might not exist. In such a case, choose $\epsilon = \inf \{\epsilon; A_{\epsilon, t} \neq \emptyset\}$ and choose $y_t \in A_{\epsilon + g_t, t}$ for $g_t = 1/2^t$. Theorem \ref{full_feedback} can then be easily adapted to provide a bound of $\tau(\H) + \sum_t g_t \leq \tau(\H) + 1$.
\end{remark}

\subsection{Separating binary and full feedback}

With binary feedback we can no longer obtain loss bounds that depend only on tree dimension.  To see this, consider the following example:

Let $\H$ be the set of all functions $f:[n] \rightarrow \{0,1/n, ..... (n-1)/n\}$.  There are $n^n$ such functions.  Now for each, slightly perturb the outputs (i.e. $f(i) = j/n + \eps$) so that for every $f_1,f_2$ and $i$, $f_1(i) \neq f_2(i)$.  Let $\ell(y_1,y_2) = |y_1 - y_2|$.  The tree dimension of this class is $O(1)$.  However, clearly any algorithm must incur $\Omega(n)$ loss in expectation with binary feedback.


\bibliographystyle{plain}
\footnotesize
\bibliography{pricing}

\begin{thebibliography}{10}

\bibitem{abernethy2012interior}
Jacob~D Abernethy, Elad Hazan, and Alexander Rakhlin.
\newblock Interior-point methods for full-information and bandit online
  learning.
\newblock {\em IEEE Transactions on Information Theory}, 58(7):4164--4175,
  2012.

\bibitem{amin2014repeated}
Kareem Amin, Afshin Rostamizadeh, and Umar Syed.
\newblock Repeated contextual auctions with strategic buyers.
\newblock In {\em Advances in Neural Information Processing Systems 27: Annual
  Conference on Neural Information Processing Systems 2014, December 8-13 2014,
  Montreal, Quebec, Canada}, pages 622--630, 2014.

\bibitem{bartlett1996fat}
Peter~L Bartlett, Philip~M Long, and Robert~C Williamson.
\newblock Fat-shattering and the learnability of real-valued functions.
\newblock {\em Journal of Computer and System Sciences}, 52(3):434--452, 1996.

\bibitem{bayati2016}
Hamsa Bastani and Mohsen Bayati.
\newblock Online decision-making with high-dimensional covariates.
\newblock {\em Working paper, Stanford University}, 2016.

\bibitem{BertsimasV04}
Dimitris Bertsimas and Santosh Vempala.
\newblock Solving convex programs by random walks.
\newblock {\em J. {ACM}}, 51(4):540--556, 2004.

\bibitem{bubeck2011x}
S{\'e}bastien Bubeck, R{\'e}mi Munos, Gilles Stoltz, and Csaba Szepesv{\'a}ri.
\newblock X-armed bandits.
\newblock {\em Journal of Machine Learning Research}, 12(May):1655--1695, 2011.

\bibitem{cesa2017algorithmic}
Nicol{\`o} Cesa-Bianchi, Pierre Gaillard, Claudio Gentile, and S{\'e}bastien
  Gerchinovitz.
\newblock Algorithmic chaining and the role of partial feedback in online
  nonparametric learning.
\newblock {\em arXiv preprint arXiv:1702.08211}, 2017.

\bibitem{CohenLL16}
Maxime~C. Cohen, Ilan Lobel, and Renato~Paes Leme.
\newblock Feature-based dynamic pricing.
\newblock In {\em Proceedings of the 2016 {ACM} Conference on Economics and
  Computation, {EC} '16, Maastricht, The Netherlands, July 24-28, 2016}, page
  817, 2016.

\bibitem{davenport20141}
Mark~A Davenport, Yaniv Plan, Ewout Van Den~Berg, and Mary Wootters.
\newblock 1-bit matrix completion.
\newblock {\em Information and Inference: A Journal of the IMA}, 3(3):189--223,
  2014.

\bibitem{gaillard2015chaining}
Pierre Gaillard and S{\'e}bastien Gerchinovitz.
\newblock A chaining algorithm for online nonparametric regression.
\newblock In {\em Conference on Learning Theory}, pages 764--796, 2015.

\bibitem{nazerzadeh2016}
Adel Javanmard and Hamid Nazerzadeh.
\newblock Dynamic pricing in high-dimensions.
\newblock {\em Working paper, University of Southern California}, 2016.

\bibitem{kalai2005efficient}
Adam Kalai and Santosh Vempala.
\newblock Efficient algorithms for online decision problems.
\newblock {\em Journal of Computer and System Sciences}, 71(3):291--307, 2005.

\bibitem{KLS97}
Ravi Kannan, L{\'{a}}szl{\'{o}} Lov{\'{a}}sz, and Mikl{\'{o}}s Simonovits.
\newblock Random walks and an o\({}^{\mbox{*}}\)(n\({}^{\mbox{5}}\)) volume
  algorithm for convex bodies.
\newblock {\em Random Struct. Algorithms}, 11(1):1--50, 1997.

\bibitem{kleinberg2003value}
Robert Kleinberg and Tom Leighton.
\newblock The value of knowing a demand curve: Bounds on regret for online
  posted-price auctions.
\newblock In {\em Foundations of Computer Science, 2003. Proceedings. 44th
  Annual IEEE Symposium on}, pages 594--605. IEEE, 2003.

\bibitem{kleinberg2019bandits}
Robert Kleinberg, Aleksandrs Slivkins, and Eli Upfal.
\newblock Bandits and experts in metric spaces.
\newblock {\em Journal of the ACM (JACM)}, 66(4):1--77, 2019.

\bibitem{krishnamurthy2020corrupted}
Akshay Krishnamurthy, Thodoris Lykouris, and Chara Podimata.
\newblock Corrupted multidimensional binary search: Learning in the presence of
  irrational agents.
\newblock {\em arXiv preprint arXiv:2002.11650}, 2020.

\bibitem{Intrinsic18}
Renato~Paes Leme and Jon Schneider.
\newblock Contextual search via intrinsic volumes.
\newblock In {\em 59th {IEEE} Annual Symposium on Foundations of Computer
  Science, {FOCS} 2018, Paris, France, October 7-9, 2018}, pages 268--282,
  2018.

\bibitem{littlestone1988learning}
Nick Littlestone.
\newblock Learning quickly when irrelevant attributes abound: A new
  linear-threshold algorithm.
\newblock {\em Machine learning}, 2(4):285--318, 1988.

\bibitem{lobel2016multidimensional}
Ilan Lobel, Renato~Paes Leme, and Adrian Vladu.
\newblock Multidimensional binary search for contextual decision-making.
\newblock {\em Operations Research}, 2017.

\bibitem{lovasz1999hit}
L{\'a}szl{\'o} Lov{\'a}sz.
\newblock Hit-and-run mixes fast.
\newblock {\em Mathematical Programming}, 86(3):443--461, 1999.

\bibitem{mao2018contextual}
Jieming Mao, Renato Leme, and Jon Schneider.
\newblock Contextual pricing for lipschitz buyers.
\newblock In {\em Advances in Neural Information Processing Systems}, pages
  5643--5651, 2018.

\bibitem{plan2012robust}
Yaniv Plan and Roman Vershynin.
\newblock Robust 1-bit compressed sensing and sparse logistic regression: A
  convex programming approach.
\newblock {\em IEEE Transactions on Information Theory}, 59(1):482--494, 2012.

\bibitem{qiang2016dynamic}
Sheng Qiang and Mohsen Bayati.
\newblock Dynamic pricing with demand covariates.
\newblock {\em Available at SSRN 2765257}, 2016.

\bibitem{slivkins2014contextual}
Aleksandrs Slivkins.
\newblock Contextual bandits with similarity information.
\newblock {\em The Journal of Machine Learning Research}, 15(1):2533--2568,
  2014.

\bibitem{wei2018more}
Chen-Yu Wei and Haipeng Luo.
\newblock More adaptive algorithms for adversarial bandits.
\newblock {\em Proceedings of Machine Learning Research}, 75, 2018.

\end{thebibliography}
\normalsize

\end{document}